\providecommand{\keywords}[1]
{
  \small	
  \textbf{\textit{Keywords---}} #1
}
\newtheorem{definition}{Definition}
\newtheorem{theorem}{Theorem}
\newtheorem{lemma}[theorem]{Lemma}
\newtheorem{example}{Example}
\newtheorem{corollary}{Corollary}
\date{}
\begin{document}

\title{Statistically Optimal Uncertainty Quantification for Expensive Black-Box Models}

\author{Shengyi He$^{1}$, Henry Lam$^{1}$  \\
        \small $^{1}$ Department of Industrial Engineering and Operations Research, Columbia University \\
}

\maketitle
\begin{abstract}
{
 Uncertainty quantification, by means of confidence interval (CI) construction, has been a fundamental problem in statistics and also important in risk-aware decision-making. In this paper, we revisit the basic problem of CI construction, but in the setting of \emph{expensive black-box} models. This means we are confined to using a low number of model runs, and without the ability to obtain auxiliary model information such as gradients. In this case, there exist classical methods based on data splitting, and newer methods based on suitable resampling. However, while all these resulting CIs have similarly accurate coverage in large sample, their efficiencies in terms of interval length differ, and a systematic understanding of which method and configuration attains the shortest interval appears open. Motivated by this, we create a theoretical framework to study the statistical optimality on CI tightness under computation constraint. Our theory shows that standard batching, but also carefully constructed new formulas using uneven-size or overlapping batches, batched jackknife, and the so-called cheap bootstrap and its weighted generalizations, are statistically optimal. Our developments build on a new bridge of the classical notion of uniformly most accurate unbiasedness with batching and resampling, by viewing model runs as asymptotically Gaussian ``data", as well as a suitable notion of homogeneity for CIs.}
 \end{abstract}
\keywords{confidence interval, batching, cheap bootstrap, hypothesis testing, expensive black-box models}

\maketitle

\section{Introduction}
\label{sec:intro}

Uncertainty quantification, by means of confidence interval (CI) construction, has been a fundamental problem in statistics and also important in risk-aware decision-making. In this paper, we revisit the basic problem of CI construction, but in the setting of \emph{expensive black-box} models. This means that we are confined to using a possibly extremely low number of model runs, and we also lack the immediate capability to obtain auxiliary model information such as gradients. The option, then, is to obtain a CI using very few model runs on the data set or some modifications on it. In this paper, our main goal is to understand whether, or more specifically what is the best way, to achieve this.

The above question is motivated from the growing scale in modern data science, but also inference on large computational simulation models. For example, in high-dimensional regression problems, we might be interested in interval estimation of a certain coefficient or prediction value at a test point. Each fitting might involve running a large-scale optimization routine that can be expensive. Another example, which is important in simulation modeling, is the so-called input uncertainty problem (\citealt{henderson2003input,chick2006subjective,Barton2012tutorial,song2014advanced,Lam2016advanced,CORLU2020100162,Barton2022}), which addresses the interplay between statistical calibration error and Monte Carlo computation error. To explain, a stochastic simulation model typically involves generating random variates from input distributions that are fed into the system logic, and producing outputs for downstream decision-making. When the input parameters or distributions are not fully known but instead calibrated from external data, these calibration noises can propagate to affect the output accuracy, constituting what is known as input uncertainty. In this case, the CIs for target output quantities would need to account for both the input statistical noise and the Monte Carlo noise in the stochastic simulator. For example, suppose we are interested in steady-state estimation under input uncertainty. Then each ``model run" would entail simulating the system long enough to wash away the initialization bias (\citealt{glynn1992experiments,alexopoulos1998output}). Alternatively, suppose we are interested in a transient output measure, e.g., the expected waiting time for a certain time horizon in a large queueing system. Then each ``model run" would involve averaging many simulated trajectories to wash away the stochastic or aleatory noises. That is, if the model evaluation is conducted by Monte Carlo computation, and the statistical uncertainty comes from the input data, then a valid inference would require each model evaluation to bear sufficient computation in order to eliminate the model bias or aleatory variability . For sophisticated stochastic simulators, each such evaluation can be expensive and by-and-large black-box.

To understand the challenges, we first point out that in the above setting, common inference approaches could encounter difficulties one way or another. The delta method or infinitesimal jackknife (\citealt{efron1981nonparametric}), which utilizes a normality interval with a plug-in estimate of the standard error, relies on gradient information with respect to the parameters or input distributions which is not obtainable in the black-box setting. Resampling approaches, such as the bootstrap and jackknife (\citealt{efron1981nonparametric,efron1994introduction,Davison_Hinkley_1997,Shao1995}), typically requires implementations that comprise repeated model runs for a sufficient number of times (the number of resamples in the case of bootstrap, and the number of leave-one-out estimates or equivalently the data size in the case of jackknife). On the other hand, there do exist lower-computation alternatives that serve exactly our considered goal, namely construct valid-coverage CIs with only very few model runs. These include data splitting approaches (\citealt{Schmeiser1982batch,NAKAYAMA20071330,glynn1990simulation,Schruben1983confidence}), which group data into several (possibly overlapping) batches and construct intervals based on the resulting batch estimates. These batching methods are historically motivated from the handling of serial dependent data arising in stochastic simulation (\citealt{fishman1978grouping,law1979sequential}) or Markov chain Monte Carlo (\citealt{geyer1992practical,jones2006fixed,flegal2010batch}), but they are also computationally light when using a small number of batches. More recently, the so-called cheap bootstrap (\citealt{ll2023,lam2022cheap,lam2022cheap1}) advocates CI construction by using a small number of resamples that are randomly drawn instead of deterministic grouping of the data. On a high level, these methods bypass the accurate estimation of the variability which is central, and also costly, to the delta method and classical resampling. Instead, they hinge on the construction of pivotal $t$-statistics that cancel out the variability. Despite the attractive computational saving, the price to pay in exchange compared to the traditional methods is the longer intervals, which stems from the inflated uncertainty in the implicit variance estimator and manifests in the use of $t$ instead of normal critical values in the intervals.

With the above background, we ask in this paper the following question: \emph{Given a fixed computation budget in terms of the number of model runs, what is a statistically optimal CI?} While all the low-computation CI construction methods produce similarly accurate coverage in large sample, their interval lengths (i.e., the statistical price mentioned above) differ, and a systematic understanding of which method and procedural configuration is the most efficient in terms of shortest interval length  appears open to our best knowledge. Here, when speaking of \emph{statistical optimality}, we mean the minimization of the asymptotic expected interval length, where the asymptotic is as data size grows. This criterion is intuitive, arguably more so than other possibilities in the literature. For example, the variance or the mean-squared error of the variance estimator \citep{song1995optimal} can be translated into the variance of CI length and is advantageously estimable using data, but does not result in a theoretical understanding of the optimal form of CIs which we aim for. Next, by a fixed computation budget, we mean the number of times we are allowed to run the model. More specifically, we consider the general setting where the target quantity is a functional of an input distribution. Given the input distribution, evaluating the target quantity is resource-consuming, but is not significantly affected by the choice of the input distribution. For instance, if the input distribution is an empirical distribution, the data size does not significantly matter. This happens in, e.g., steady-state estimation under input uncertainty, where each model run comprises a steady-state simulation driven by input distributions created from the data. Here, the runtime of the steady-state simulation could be insensitive to these input distributions and depend primarily on the system dynamic.

Our main contribution is to create a theoretical framework to answer the above question on CI statistical optimality under a small computation budget. Our theory examines the optimality of existing CIs as well as leads to \emph{new} CIs that are guaranteed to be shorter than their counterparts in the literature. More precisely, low-computation methods generically consists of two stages. In Stage 1, we determine a collection of input distributions  at which the output functional is evaluated, where the number of input distributions in this collection cannot exceed the computation budget. These input distributions are created from the data with a mechanism, such as subsamples or batches deterministically selected in batching methods, or randomly selected or weighted in the cheap bootstrap and its generalization. Then, in Stage 2, an interval formula is used to combine these evaluations. Under this setup, our results entail both \emph{local} and \emph{global} optimality. At the local level, we characterize, for a given Stage 1 mechanism in selecting the input distributions, the optimal CI formula to combine their evaluations. In particular, our results imply that, when data are divided into equal-size non-overlapping batches, the standard batching CI available in the literature is optimal among any other formulas to combine these batches. When data are resampled uniformly with replacement, the cheap bootstrap CI is again optimal among other formulas to combine the resample estimates. However, in the general case where the batches are unequal-sized or overlapped, our results give rise to new CI formulas that can differ from existing proposals \citep{su2023overlapping} and are guaranteed to be shorter asymptotically. At the global level, we show that the shortest CIs, regardless of the Stage 1 mechanism, are given by asymptotically equivalent $t$-based intervals, which include standard batching, cheap bootstrap, batched jackknife, as well as our newly derived unequal-size or overlapping batching. Lastly, we note that our results apply not only to i.i.d. data, but also to dependent data under suitable conditions that guarantee a proper central limit theorem (CLT).

Our roadmap to obtain the above optimality characterizations center around a new bridge of low-computation CIs and their interval lengths with the  classical notion of \emph{uniformly most accurate (UMA) unbiasedness} \citep{lehmann2005testing}. UMA unbiasedness for a CI is equivalent to having the CI's dual hypothesis test being uniformly most powerful (UMP) unbiased. Despite its long establishment, its connection with the efficiency of batching and resampling under low-computation environment has not been exploited to our knowledge, and doing so requires several novel developments. First is the view of Stage 1 model evaluations as asymptotically Gaussian, possibly heterogeneous and dependent ``data" in the UMA framework. Depending on our original data and our mechanism in selecting the input distributions, the model evaluations from Stage 1 exhibit a corresponding joint CLT as data size grows. These evaluations are then regarded as approximately Gaussian ``data" to construct a CI and the dual hypothesis test. Importantly, this test involves only two parameters, the unknown Gaussian mean and the unknown scaling of the covariance matrix that is otherwise determined from our Stage 1 scheme. This subsequently allows us to leverage Neyman-Pearson-type statistical tools, including completeness and Basu's theorem, to obtain approximately UMA unbiased CIs and the tightness equivalence of $t$-based intervals. Second, using a suitable combination of integration argument and the property of unbiasedness, we show that asymptotic UMA implies the shortest asymptotic expected length among unbiased CIs. This provides a direct translation from UMA to CI interval length that, to our best knowledge, has not been explicitly derived and utilized in the literature. Third, to precisely connect the asymptotic Gaussian approximation derived from our Stage 1 CLT with the exact inference tools in UMA, we introduce the notion of \emph{homogeneous} CIs and argue our asymptotic optimality within this CI class.  Homogeneity here means the CI upper and lower limits translate and scale in the same amount as those of the model evaluations. Intuitively, this implies the CIs cannot use knowledge on the mean and variability of the evaluations a priori, which coincides with our black-box presumption. Technically, this implies that even if we only assume a CLT at the ground-truth distribution, with a translation and rescaling argument, we are able to retrieve coverage probabilities of CIs under alternative hypotheses whose limiting distributions have different means and variances. These alternative hypotheses in turn play an important role in arguing UMP unbiasedness and subsequently the attainment of shortest intervals.

The rest of this paper is as follows. Section \ref{sec: lit} reviews related literature. Section \ref{sec: setup} introduces our framework for low-computation uncertainty quantification and defines the notion of optimality. Section \ref{sec: AUMAU} derives both locally optimal CIs under different Stage 1 schemes and globally optimal CIs. Section \ref{sec: numerics} validates our theory via numeric experiments and compares the performances among different low-computation CIs. Section \ref{sec: future work} concludes the paper and discusses future directions. The Appendix contains the proofs of all statements and additional technical lemmas, and documents some existing useful results.

\section{Literature Review}\label{sec: lit}
We divide our literature review into two parts, existing techniques that are or can be converted into low-computation CI construction methods (Section \ref{sec:techniques}), and the theoretical tools in our analyses (Section \ref{sec:tools}). 
\subsection{Existing Low-Computation CI Construction Methods}\label{sec:techniques}
Existing methods to construct CIs with low computation budget include data splitting or batching methods, and the cheap bootstrap. We will overview the related literature on both, and then briefly discuss other related methods that save computation from other perspectives than this paper. 

Batching methods were historically proposed to handle serially dependent data in simulation output analysis \citep{Schmeiser1982batch,munoz1997batch,fishman1978grouping,law1979sequential} and Markov chain Monte Carlo \citep{geyer1992practical,jones2006fixed,flegal2010batch}, especially for steady-state estimation. The idea is to group the data into (possibly overlapping) batches, and suitably aggregate these batch estimates via $t$-pivotal statistics to cancel out the variance. In the steady-state estimation literature, a stream of works studies automated procedures to determine hyperparameters such as the number of batches so as to satisfy user-specified CI coverage and accuracy, and reduce finite sample errors caused by the initialization, the skewness or the correlation of the data \citep{alexopoulos1996implementing,asap3,Tafazzoli2011nskart}. Extensions such as CIs for steady-state quantiles have also been studied \citep{Alexopoulos2014sequest}. To take care of the half-width, these works propose to increase the sample size and the number of batches until the CI half-width is shorter than a user-specified precision requirement. That is, they do not study how to get a shorter CI without increasing the computation budget as we do. 

The works above focus on non-overlapping batches. As an alternative, overlapping batching is sometimes preferred due to its flexibility in choosing the overlapping schemes. It is shown in \cite{meketon1984overlapping} that for steady-state mean estimation, the variance estimator using overlapping batching has a smaller variance than non-overlapping batching. The idea of overlapping can also be further generalized to the more general umbrella method of standardized time series (STS) \citep{glynn1990simulation,Schruben1983confidence,Goldsman1990new,calvin2006permuted}. Like batching, STS cancels out the variance term by taking the ratio between a point estimator and a variance estimator, but in a more general way where the variance estimator is a finite-sample approximation of a general functional of a Brownian motion that is asymptotically independent of the point estimator. Nonetheless, the overlapping batching in the above works only considers the case where 
 the starting points of adjacent batches differ by one, so the number of batches is large which makes these methods computationally expensive. Indeed, most of these works focus on the mean estimation problem, for which a trick is to compute the next batch mean by updating from the previous batch mean (page 229 of \cite{meketon1984overlapping}). However, for general models, such a trick is not available, and the overlapping batching and STS proposed in these papers would face heavy computation. Therefore, these methods are not considered as low-computation methods in this paper. More recently, \cite{su2023overlapping} consider more general overlapping batching where the difference between the starting indexes of adjacent batches can differ by a proportion of the batch size, and investigate the corresponding asymptotic distributions in the estimation of general functionals. The approach in \cite{su2023overlapping} can be low-computation, to which we will compare and, in particular, show both theoretically and empirically that ours is able to produce shorter intervals in general.
 
Despite the historical focus in serially dependent problems, batching methods can be used for low-computation CI construction by using a very small number of batches \citep{he2021higher,glynn2018constructing}, which we utilize in this work. Another type of low-computation methods is the recently proposed cheap bootstrap \citep{ll2023,lam2022cheap}. This approach, in some sense, can be viewed as a replacement of deterministic data splitting in batching methods by random resampling. Unlike classical bootstraps, the cheap bootstrap utilizes the sample-resample independence implied from the bootstrap CLT, instead of a direct approximation of the resample distribution to the original sampling distribution, which in turn allows to construct $t$-pivotal statistics as in batching methods. 
\cite{lam2022cheap1} and \cite{lam2023resampling} further apply the cheap bootstrap to CI construction for simulation input uncertainty and stochastic gradient descent respectively.

Finally, we briefly note that subsampling methods can serve to save computation cost, but from another perspective that is not considered in this paper. The basic idea of these methods is to reduce the number of distinct data points in the resampled data, which can reduce computation cost per model run in problems such as M-estimation. These methods include the $m$-out-of-$n$ bootstrap \citep{politis1999subsampling,bickel1997resampling}, bag-of-little bootstrap \citep{kleiner2014scalable} and subsampled double bootstrap \citep{sengupta2016subsampled}.

\subsection{Theoretical Tools}\label{sec:tools}

We utilize the notion of UMA unbiasedness in our analyses. First, recall that in hypothesis testing, the power of a test refers to the probability of rejection when the data comes from an alternative distribution not covered by the null hypothesis. A test is unbiased if for every alternative distribution, the power is greater than the size (the upper limit for the rejection probability when null hypothesis holds). A test is UMP unbiased if it maximizes the power at every alternative distribution among unbiased tests. A similar line of definitions hold for CIs using their dual relation with hypothesis tests: A CI is UMA unbiased if it minimizes the coverage probability at every alternative points among unbiased CIs, where unbiasednss means that the coverage at every alternative points needs to be smaller than the nominal coverage level. The dual relation between UMP unbiased hypothesis tests and UMA unbiased CIs is discussed in, e.g., Section 5.5 of \cite{lehmann2005testing}, while Section 4.4 of \cite{lehmann2005testing} provides techniques to find UMP unbiased tests for exponential families. In general, finding a UMP unbiased test can be viewed as an infinite-dimensional optimization problem. With the structure of exponential family, the problem can be reduced by conditioning on nuisance parameters, which is much easier to solve. The uniformity among a class of alternative distributions also presents similarity with local minimax theorems (e.g., Section 3.11 of \cite{WeakConvergence_VdV}), where the optimality is stated for the worst-case performance when the alternative distribution varies in a local set.

Next, we contrast our analyses with the theoretical criteria and comparison results of batching methods and the cheap bootstrap in existing works. For the CI half width, \cite{Schmeiser1982batch} shows that if we assume the data size is large enough so that the non-normality of the batch estimators is negligible, then the expected half width would decrease as the number of batches increases, but the rate of decrease would become much slower when the number of batches is large. Similar observations are also made in \cite{glynn2018constructing,lam2022cheap}.  In contrast to these works, we analyze with a fixed rather than allowing a varying computation budget. \cite{glynn1990simulation} also studies the length of CI. However, they consider the shortest CI in a restricted family where the CIs have the same pivotal statistics and only differ by the level of asymmetry when choosing the two quantiles of the pivotal statistics as threshold values. Another performance criterion that has been studied is the coverage error. \cite{pope1995improving} analyzes the coverage error of sectioning (a variant of batching) using Edgeworth expansions when the number of batches goes to infinity. \cite{he2021higher} studies coverage errors for several batching variants with a fixed number of batches, by deriving Edgeworth expansions for $t$-statistics and the computation procedures for the involved coefficients. \cite{lam2022cheap} studies  coverage errors for the cheap bootstrap, and \cite{ll2023} studies its finite-sample behaviors and shows that the coverage error remains controlled even as the problem size grows closely with the data size. 
In addition, \cite{lewis1989simulation} uses the jackknife to reduce small-sample bias within sections, but at the cost of greater computation time and uncertainty regarding variance inflation. \cite{song1995optimal,flegal2010batch} derive batch sizes that minimize the mean squared errors of batching variance estimators for steady-state estimation problems. Compared to these works, we use a different notion of optimality in terms of the power of hypothesis tests and the expected length, which is arguably more interpretable and provides insights that cannot be explained by the other criteria reviewed above.

\section{Theoretical Framework on Low-Computation Uncertainty Quantification Methods}\label{sec: setup}
Suppose that we are interested in estimating a real-valued functional $\psi(P)$ given samples $X_1,\dots,X_{n}$ drawn from an unknown distribution $P$. The goal is to construct a CI for $\psi(P)$ at a prescribed level $1-\alpha$. More precisely, we aim for a CI, say $[L,U]$, that is \emph{asymptotically valid}, i.e.,
$$\liminf_{n\to\infty}P(\psi(P)\in[L,U])\geq 1-\alpha,$$
 where the outer probability $P$ is taken with respect to the data in constructing $[L,U]$. Relatedly, we say that $[L,U]$ is \emph{asymptotically exact} if $$\lim_{n\to\infty}P(\psi(P)\in[L,U])= 1-\alpha.$$ It can be checked that asymptotically exact CIs are always asymptotically valid. Conversely, for a CI to be shortest among asymptotically valid CIs, it typically needs to be asymptotically exact.

We are interested in the case where $\psi(\cdot)$ is a black-box functional that is computationally expensive to compute. More concretely, we are confined to a simulation budget $K$, which means that we can only evaluate $\psi(\cdot)$ for at most $K$ times, and are not capable of generating auxiliary information such as gradient information to assist with CI construction. For example, for the input uncertainty quantification problem discussed in the introduction, $\psi(P)$ could be the steady-state mean for the queue length in a queueing system where the interarrival times has distribution $P$. The data could be the past interarrival times observed from the unknown $P$. The model is computationally expensive in the sense that for any input distribution $Q$, we need to simulate a long process to wash out the initialization bias when estimating $\psi(Q)$. As another example, $\psi(P)$ could also be a transient measure such as the expected waiting time of a certain customer. In this case, the heavy computation comes from the need to simulate the queue for a sufficient number of times to eliminate the stochastic variability. In these situations, there is a high computational cost in evaluating $\psi(Q)$, but this cost is insensitive to the choice of $Q$. This constitutes precisely our setup, where we measure the computation cost as the number of evaluations for $\psi(\cdot)$, and regard the cost as insensitive to the distribution $Q$ that we feed into $\psi(\cdot)$. 

\subsection{A Two-Stage Dissection}
The above setup essentially means that the CI can only be constructed from at most $K$ evaluations of $\psi(\cdot)$ at different input distributions created from the data, \emph{but nothing more}. Such procedures in general can be decomposed into two stages: In Stage 1, we obtain estimates $\mathbf{Y}_{n}=(\psi_1,\dots,\psi_K)$ where each of $\psi_i,i=1,2,\dots,K$ is constructed (differently) from $X_1,\dots,X_{n}$. Then, in Stage 2, we combine the Stage 1 estimates to get a CI  $[L(\mathbf{Y}_{n}),U(\mathbf{Y}_{n})]$. Figure \ref{fig: two stages} illustrates this two-stage process.

\tikzstyle{startstop} = [rectangle, rounded corners, 
minimum width=3cm, 
minimum height=1cm,
text centered, 
draw=black]

\tikzstyle{arrow} = [thick,->,>=stealth]

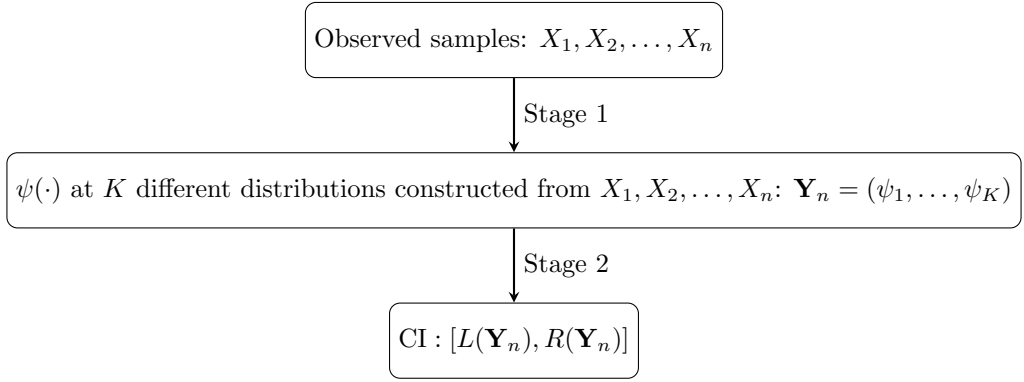
\begin{figure}
    \centering
    \begin{tikzpicture}[node distance=2cm]

\node (start) [startstop] {Observed samples: $ X_1, X_2,\dots,X_n$};
\node (in1) [startstop, below of=start] {$\psi(\cdot)$ at $K$ different distributions constructed from $ X_1, X_2,\dots,X_n$: $\mathbf{Y}_n=(\psi_1,\dots,\psi_K) $};
\node (pro1) [startstop, below of=in1] {$\text{CI}: [L(\mathbf{Y}_n),R(\mathbf{Y}_n)] $};

\draw [arrow] (start) -- node[anchor=west] {Stage 1} (in1);
\draw [arrow] (in1) -- node[anchor=west] {Stage 2}(pro1);

\end{tikzpicture}
    \caption{Dissection of general low-computation CI construction methods.}
    \label{fig: two stages}
\end{figure}

To understand the above further, here are several exemplifying approaches we can utilize from the literature:

\begin{itemize}
    \item \emph{Standard batching:} \sloppy Here, $\psi_i=\psi(\hat P_n^i)$ where $\hat P_n^i$ is the empirical distribution constructed from the $i$-th batch of data, $X_{(i-1)n/K+1},\dots,X_{in/K}$. Given $\mathbf{Y}_{n,B}=(\psi_1,\dots,\psi_K)$, the standard batching CI is
$$CI_{B}(\mathbf{Y}_{n,B}) := \bar\psi \pm t_{K-1,1-\alpha/2}\frac{S_{B}}{\sqrt{K}}$$
where $\bar\psi = \frac{\sum_{i=1}^K\psi_i}{K}$ and $S_B^2=\frac{1}{K-1}\sum_{i=1}^{K}\left(\psi_i-\bar\psi\right)^{2}$ are the sample mean and variance of the batch estimates, and $t_{K-1,1-\alpha/2}$ is the upper $\alpha/2$-quantile of $t_{K-1}$, the $t$-distribution with degree of freedom $K-1$. 

\item \emph{Overlapping batching:} Using the version in \cite{su2023overlapping}, we first assign one batch to be the entire data set, and each other batch has equal sizes denoted by $\gamma n$ and adjacent batches have equal overlaps or distances. More concretely, the first batch is the entire sample $X_1,\ldots,X_n$. For $2\leq j\leq K$, batch $j$ starts from the $(j-2)\frac{n-\gamma n}{K-2}+1$-th and ends at the $(j-2)\frac{n-\gamma n}{K-2}+\gamma n$-th data point. As can be checked, a batch could overlap with other batches when $\gamma>1/(K-1)$. Let $\mathbf{Y}_{n,OB}=(\psi_{j,OB},1\leq j\leq K)$ denote the batch estimates. Then, we use the overlapping batching CI:
\[
CI_{OB}(\mathbf{Y}_{n,OB}):=\psi_{1,OB}\pm c_{\gamma,K-1,1-\alpha/2}S_{OB}.
\]
Here, $S_{OB}^2={\frac{\gamma}{1-\gamma}\frac{1}{K-1}\sum_{j=2}^{K}\left(\psi_{j,OB}-\psi_{1,OB}\right)^{2}}$ and $c_{\gamma,K-1,1-\alpha/2}$ is the upper $\alpha/2$-quantile of the OB-I limit proposed in \cite{su2023overlapping}.

\item \emph{Cheap bootstrap:} We resample, i.e., sample with replacement, from $X_1,\ldots,X_n$ to obtain the resampled data set $X_1^{*b},\ldots,X_n^{*b}$, independently for $K$ times. Then $\mathbf{Y}_{n,CB}=(\psi(\hat{P}_n),\psi(P^{*1}_n),\dots,\psi(P^{*(K-1)}_n))$ where $\hat{P}_n$ is the empirical distribution of $\{X_1,\dots,X_n\}$ and  $P^{*b}_n,b=1,2,\dots,K-1$ is the empirical distribution of $\{X_1^{*b},\dots,X_n^{*b}\}$. The cheap bootstrap CI is constructed as 
\[
CI_{CB}\left(\mathbf{Y}_{n,CB}\right):=\psi(\hat{P}_n)\pm t_{K-1,1-\alpha/2}S_{CB}
\] 
where $S^2_{CB} = \frac{1}{K-1}\sum_{b=1}^{K-1} (\psi(P^{*b}_n)-\psi(\hat{P}_n))^2$.

\item \emph{Weighted cheap bootstrap:} Cheap bootstrap can also be generalized as follows. Let $W_n=(W_{1,n},W_{2,n},\dots, W_{n,n})$ be a nonnegative random weight vector whose elements are exchangeable and satisfy $W_{1,n}+\dots+W_{n,n}=1$. Denote the distribution of $W_n$ as $\mathcal{W}_n$. Then, we can construct $\mathbf{Y}_{n,WCB}=(\psi(\hat{P}_n),\psi(P^{W,1}_n),\dots,\psi(P^{W,K-1}_n))$ where $P^{W,b}_n = \sum_{i=1}^n W_{i,n}^{(b)}\delta_{X_i}$, $\delta_{X_i}$ stands for point mass at $X_i$, and each of $(W_{1,n}^{(b)},W_{2,n}^{(b)},\dots, W_{n,n}^{(b)}),b=1,2,\dots,K-1$ is drawn independently from $\mathcal{W}_n$. 
The weighted cheap bootstrap CI is constructed as 
\[
CI_{WCB}\left(\mathbf{Y}_{n,WCB}\right) := \psi(\hat{P}_n) \pm t_{K-1,1-\alpha/2}\sigma_W^{-1}S_{WCB}.
\]
Here, $\sigma_W^2 = \lim_{n\to\infty}Var(nW_{1,n})$ and $S_{WCB}^2 = \frac{1}{K-1}\sum_{b=1}^{K-1} (\psi(P^{W,b}_n)-\psi(\hat{P}_n))^2$. For example, for the original cheap bootstrap, $n\mathcal{W}_n$ is multinomial with $n$ trials and $n$ categories. Therefore, $Var(nW_{1,n})= (n-1)/n$ and $\sigma_W^2 = \lim_{n\to\infty} (n-1)/n = 1$. Comparing with the formula for $CI_{CB}$, we see that the cheap bootstrap is a special case of the weighted cheap bootstrap where $\mathcal{W}_n$  is constructed from multinomial distribution. 
\end{itemize}

We point out that, under regularity conditions regarding the joint asympototic normality of Stage 1 estimates, all the above methods produce  asymptotically exact CIs through a pivotal statistic construction. For example, in the case of standard batching, we consider the pivotal statistic $\frac{\sqrt{K}\left(\bar{\psi}-\psi(P)\right)}{S_{B}}$ and the event $\psi(P)\in CI_B$ is equivalent to \[-t_{K-1,1-\alpha/2}\leq\frac{\sqrt{K}\left(\bar{\psi}-\psi(P)\right)}{S_{B}}\leq t_{K-1,1-\alpha/2}\]
Under joint asymptotic normality of $\mathbf{Y}_{n,B}$, the pivotal statistic $\frac{\sqrt{K}\left(\bar{\psi}-\psi(P)\right)}{S_{B}}$ converges to a $t_{K-1}$-distribution. Therefore, the limiting coverage probability is $1-\alpha$ as claimed 
Similar arguments hold for the other methods. More concretely, for $CI_{OB}$, its asymptotic exactness is shown in \cite{su2023overlapping} under a set of conditions they proposed. For the other methods, we will provide explicit conditions for their asymptotic exactness as well as other properties later in this paper. Although all of these methods achieve asymptotically exact coverage, their asymptotic lengths can differ, and our goal is to find the CIs with the shortest asymptotic expected length.

\subsection{CI Homogeneity}
To build our framework in obtaining the shortest CIs, we need to properly define the class of CIs that we compare against. Note that one of the motivations of batching and cheap bootstrap methods is that the variability of the estimator is unknown, so it needs to be cancelled out via constructing pivotal statistics. Correspondingly, the broader CI class that we consider should not use knowledge on variability. To this end, we make the following definition. 
\begin{definition}\label{def: homogenous}
\textit{(Homogeneous CI)} 
    We say that $[L(\cdot),U(\cdot)]$ is \textit{homogeneous} if $L(\cdot)$ and $U(\cdot)$ satisfy
\begin{enumerate}
\item $L(c\textbf{x})=cL(\textbf{x}),U(c\textbf{x})=cU(\textbf{x})$ for any constant $c>0$.
\item $L(\textbf{x}+c1_{K})=L(\textbf{x})+c,U(\textbf{x}+c1_{K})=U(\textbf{x})+c$ for any constant $c\in\mathbb{R}$.
\item 
Both $L(\cdot)$ and $U(\cdot)$ are continuous.
\end{enumerate}

\end{definition}
 Denote the class of homogeneous $[L(\cdot),U(\cdot)]$ as $\mathcal{H}$. In Definition \ref{def: homogenous}, condition 3 is a mild smoothness condition. Conditions 1-2 are also natural for a reasonable CI: When each Stage 1 
estimate is scaled or shifted by the same amount, the CI is also
scaled or shifted by that amount. Importantly, conditions 1-2 exclude CIs that use information on the true mean or variance.  For example, for standard batching, $[L(\textbf{x}),U(\textbf{x})] = \bar{{x}}\pm t_{K-1;1-\alpha/2}\sqrt{\sum_{i=1}^K\left(x_i-\bar{x}\right)^2/(K-1)}/\sqrt{K}$ satisfies these conditions. But on the other hand, the normality CI $[L(\textbf{x}),U(\textbf{x})] = \bar{{x}}\pm z_{1-\alpha/2}\sigma_0/\sqrt{n}$, where we assume we know the asymptotic variance $\sigma_0$ such that $\sqrt{n}(\psi(\hat{P}_n)-\psi(P))\Rightarrow N(0,\sigma_0^2)$, does not satisfy condition 1. If we only have condition 1 but not 2, then $L(x)\equiv U(x)\equiv 0$ would belong to this family. When $\psi(P)$ happens to be 0, this singleton CI has 100\% coverage and 0 length, so it is optimal. However, this is not the CI we are looking for, and condition 2 serves to exclude such CIs that use the true mean information.

Technically, with these conditions, once
we have the behavior of the CI under one distribution, we can get
its behavior when the distribution is perturbed through a mean shift or rescaling, which will be useful in our leveraging of hypothesis testing described in the next subsection. We also note that our homogeneity condition is similar to condition (2.3)(i)(ii) of \cite{glynn1990simulation}. The difference is that the condition in \cite{glynn1990simulation} is on the estimator of variability, while our condition is on the end points. Correspondingly, \cite{glynn1990simulation} requires that the variability estimator is translation invariant, i.e.,  $g(\mathbf{x}+c1_k)=g(\mathbf{x})$, while our condition 2 requires that the end points translates in a way the same as the Stage 1 estimates. Moreover, in terms of usage, \cite{glynn1990simulation} proposes their condition to construct STS schemes, which is different from our motivation to handle alternative hypotheses needed in analyzing length optimality among general CIs under computation budgets.

\subsection{Asymptotic UMA Unbiasedness and Its Connection to Shortest Interval Length}

Constructing a CI for $\psi(P)$ can be viewed as the dual of the hypothesis test where the null hypothesis is $\psi(\tilde P)=\psi(P)$ (where $\tilde P$ here denotes the unknown distribution in the test). In our problem, the CI will shrink to a singleton at $\psi(P)$ at speed $n^{1/2}$ as $n$ increases. Correspondingly, we consider the hypothesis test with null hypothesis $\delta = 0$ where $\psi(\tilde P) = \psi(P) + n^{-1/2}\delta$. To proceed, we consider the well-established criterion of UMP unbiasedness in hypothesis testing. A test is called unbiased if for every alternative distribution, its power is greater than the size (the upper limit for the rejection probability when null hypothesis holds), and a test is UMP unbiased if it maximizes the power at every alternative distribution among all unbiased tests. 
From these, we can provide an asymptotic generalization and state the optimality of CI as being the dual of a UMP unbiased test, as detailed in Definition \ref{def: AUMAU}.

\begin{definition}\label{def: AUMAU}
\textit{(Asymptotically UMA unbiased CIs)} Given data sequence
$\{\mathbf{Y}_{n},n=1,2,\dots\}$, a CI represented as a pair of functions
$C(y)=[L(y),U(y)]$ is \emph{asymptotically unbiased} at level $1-\alpha$
if 
\[
{\lim}_{n\to\infty}P(\psi(P)\in C(\mathbf{Y}_{n}))\geq1-\alpha
\]
 and for any $\delta\neq0$, 
\[
\limsup_{n\to\infty}P(\psi(P)+n^{-1/2}\delta\in C(\mathbf{Y}_{n}))\leq1-\alpha.
\]
 $C$ is \emph{asymptotically uniformly most accurate (UMA) unbiased} at level
$1-\alpha$ among class $\mathcal{C}$ if it minimizes the probabilities
\[
\limsup_{n\to\infty}P(\psi(P)+n^{-1/2}\delta\in C(\mathbf{Y}_{n}))
\]
 among all asymptotically unbiased level $1-\alpha$ tests $C$ in
$\mathcal{C}$ for any $\delta\neq0$.
\end{definition}

The restriction to unbiasedness is necessary when defining the optimality for two-sided CIs/hypothesis tests, as discussed in \cite{lehmann2005testing}. Otherwise, when compared with some one-sided CIs which have smaller coverage probabilities on one side, two-sided CIs cannot have uniformly smaller coverage. Moreover, we can check that a general class of two-sided CIs where the point estimator satisfies a CLT whose limiting distribution is independent of the half width is unbiased, as shown in the following lemma. In particular, this means that all the methods we described earlier, including batching, overlapping batching, and cheap bootstrap are unbiased. Hereafter, we shorthand $a\pm b$ as an interval $[a-b,a+b]$ when no confusion arises.

\begin{lemma}\label{lem: unbiased example}
    Suppose that a CI is given by $C = \psi_{n}\pm n^{-1/2}qA_n$
where $(\sqrt{n}(\psi_{n}-\psi(P)),A_n)$ converges in distribution to $(Z,A)$ such that $Z$ is normal and independent of $A$ under $P$. Moreover, suppose that $C$ is asymptotically exact at level $1-\alpha$, i.e., 
\[
{\lim}_{n\to\infty}P(\psi(P)\in C)=1-\alpha.
\]
Then, $C$ is asymptotically unbiased at level $1-\alpha$.
\end{lemma}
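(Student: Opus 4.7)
The plan is to translate the events defining the CI into statements about the joint weak limit $(Z,A)$ via a Portmanteau/continuous mapping argument, and then exploit the symmetry and unimodality of the normal density to establish the unbiasedness inequality. The first condition of asymptotic unbiasedness, $\lim_n P(\psi(P) \in C) \geq 1-\alpha$, is immediate from the asymptotic exactness hypothesis.

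For the second condition, I would rewrite the event $\psi(P)+n^{-1/2}\delta \in C$ as $|\sqrt{n}(\psi_n - \psi(P)) - \delta| \leq qA_n$. Set $g_\delta(z,a) := \mathbf{1}\{|z - \delta| \leq qa\}$; its discontinuity set is $\{(z,a) : |z - \delta| = qa\}$. Since $Z$ is normal (assumed non-degenerate, since otherwise $P(\psi(P)\in C)\to 1-\alpha<1$ requires a direct treatment handled below) and independent of $A$, the conditional distribution of $Z$ given $A$ is absolutely continuous, so this boundary has zero mass under the joint law of $(Z,A)$. The Portmanteau theorem applied to the joint convergence of $(\sqrt n(\psi_n-\psi(P)),A_n)$ then yields
\[
P(\psi(P)+n^{-1/2}\delta \in C) \;\longrightarrow\; P(|Z-\delta| \leq qA).
\]
The same reasoning with $\delta=0$, combined with the exactness hypothesis, gives $P(|Z|\leq qA) = 1-\alpha$.

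The key inequality is then $P(|Z-\delta| \leq qA) \leq P(|Z| \leq qA)$ for any $\delta \neq 0$. Conditioning on $A=a$ and using independence, with $\sigma$ denoting the standard deviation of $Z$,
\[
P(|Z-\delta|\leq qa \mid A=a) = \Phi\!\left(\frac{\delta+qa}{\sigma}\right) - \Phi\!\left(\frac{\delta-qa}{\sigma}\right),
\]
which, by the symmetry and unimodality of the normal density around zero, is maximized over $\delta$ at $\delta=0$ for every $a\geq 0$. Integrating over the distribution of $A$ preserves this inequality, and combining with the weak convergence above gives
\[
\limsup_{n\to\infty} P(\psi(P)+n^{-1/2}\delta \in C) = P(|Z-\delta|\leq qA) \leq P(|Z|\leq qA) = 1-\alpha,
\]
which is precisely the asymptotic unbiasedness requirement.

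The main technical obstacle is the Portmanteau step, which requires $P(|Z-\delta| = qA) = 0$ so that $g_\delta$ is continuous almost surely under the limit law; this is where non-degeneracy of $Z$ together with its independence from $A$ are essential. The degenerate case $Z\equiv 0$ (if it arises) can be dispatched by a direct computation: then $P(|Z-\delta|\leq qA) = P(qA\geq|\delta|) \leq P(qA\geq 0) = P(|Z|\leq qA) = 1-\alpha$, still yielding the conclusion. Aside from this technicality, the argument is clean and reflects the general principle that pivotal $t$-type intervals centered at a normal-limit estimator are automatically unbiased.
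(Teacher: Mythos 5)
Your proof is correct and follows essentially the same route as the paper's: rewrite the coverage event at $\psi(P)+n^{-1/2}\delta$ in terms of the pivot, pass to the joint weak limit $(Z,A)$, and then condition on $A$ and use the symmetry/unimodality of the (mean-zero) normal to see the limiting coverage is maximized at $\delta=0$. Your additional care about the zero-mass boundary in the Portmanteau step (and the degenerate case) is a refinement the paper's proof leaves implicit, but it does not change the argument.
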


To prove Lemma \ref{lem: unbiased example}, we argue that the asymptotic distribution of the pivotal statistic has higher density around its center, so the maximum coverage is achieved when the test point is asymptotically at the center of the CI. A concrete proof is provided in Appendix \ref{subsec: proof}. 

Compared with the classical definition of UMA/UMP unbiasedness, one difference in our definition is that we are looking at probabilities only under $P$. In contrast, the classical statement would introduce alternative distributions and study the coverage/rejection probabilities under different distributions. The reason that we can get rid of this is our homogeneity conditions. For example, the probabilities $P(\psi(P)+n^{-1/2}\delta\in C(\mathbf{Y}_{n}))$ can also be viewed as the probability of $\psi(P)+n^{-1/2}\delta\in C(\mathbf{Y}_{n})$ under a local alternative distribution $\tilde P$ where $\psi(\tilde P)=\psi(P)+n^{-1/2}\delta$. While introducing these local alternatives is potentially feasible, it involves tremendous technicality that can be overcome by using our homogeneity conditions on the structure of the CI itself.

Lastly, we show that asymptotic UMA unbiasedness for a CI implies the shortest asymptotic expected length among unbiased CIs in the same class, under slightly stronger conditions on the moments of the length. More precisely, we have the following:
\begin{theorem}\label{thm: UMA to length}
\textit{(From asymptotic UMA unbiasedness to shortest asymptotic expected length)}
    Given data sequence
$\{\mathbf{Y}_{n},n=1,2,\dots\}$, suppose that $C(y)=[L(y),U(y)]$ is asymptotically UMA unbiased at level $1-\alpha$ among class $\mathcal{C}$. Suppose further that there exists $\epsilon>0$ and constant $M$ such that both $\mathbb{E}\left\vert\sqrt{n}(U(\mathbf{Y}_n)-\psi(P))_{+}\right\vert^{1+\epsilon}\leq M$ and $\mathbb{E}\left\vert\sqrt{n}(\psi(P)-L(\mathbf{Y}_n))_{+}\right\vert^{1+\epsilon}\leq M$ for any $n\geq 1$. Moreover, suppose that $\lim_{n\to\infty}P(\psi(P)+n^{-1/2}\delta\in C^{\prime}(\mathbf{Y}_{n}))$ exists for any $C^{\prime}\in\mathcal{C}$. Then, for any asymptotically unbiased CI $C_1=[L_1(y),U_1(y)]$ in $\mathcal{C}$  with level $1-\alpha$, we have that 
\[
\liminf_{n\to\infty}\frac{ \mathbb{E}[U_1(\mathbf{Y}_{n})-L_1(\mathbf{Y}_{n})]}{\mathbb{E}[U(\mathbf{Y}_{n})-L(\mathbf{Y}_{n})]}\geq 1
\]
\end{theorem}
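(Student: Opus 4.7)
The plan is to represent the expected length as a Lebesgue integral of coverage probabilities over alternative parameter values, and then exploit UMA unbiasedness pointwise inside the integral. Starting from the Fubini identity $\mathbb{E}[U-L]=\int_{\mathbb{R}} P(\theta\in[L,U])\,d\theta$ (valid for any nonnegative-length interval applied to the indicator $\mathbf{1}\{L\le\theta\le U\}$), I change variables $\theta=\psi(P)+n^{-1/2}\delta$ to rescale to local alternatives, obtaining
\[
\sqrt{n}\,\mathbb{E}[U(\mathbf{Y}_n)-L(\mathbf{Y}_n)]=\int_{\mathbb{R}} g_n(\delta)\,d\delta,\qquad g_n(\delta):=P\bigl(\psi(P)+n^{-1/2}\delta\in C(\mathbf{Y}_n)\bigr),
\]
and the analogous identity $\sqrt{n}\,\mathbb{E}[U_1-L_1]=\int f_n(\delta)\,d\delta$ for the competitor. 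This Pratt-type identity converts a length comparison into an integrated coverage comparison, which is the natural habitat of UMA unbiasedness.

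The next step is to pass limits inside the two integrals from opposite directions. For the competitor, since $f_n\ge 0$, Fatou's lemma together with the hypothesis that $\lim_n f_n(\delta)$ exists yields $\liminf_n \int f_n\,d\delta \ge \int \lim_n f_n(\delta)\,d\delta$. For the UMA CI $C$, I need the sharper conclusion $\lim_n \int g_n = \int \lim_n g_n$, which I obtain by dominated convergence. For $\delta>0$, the bound $g_n(\delta)\le P\bigl(\sqrt{n}(U-\psi(P))_{+}\ge\delta\bigr)$ combined with Markov's inequality applied to the assumed $(1+\epsilon)$-th moment gives $g_n(\delta)\le M\delta^{-(1+\epsilon)}$; the symmetric argument using the endpoint $L$ handles $\delta<0$. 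Combined with the trivial bound $g_n\le 1$, these produce an $n$-independent integrable envelope $h(\delta):=\min\{1,\,M|\delta|^{-(1+\epsilon)}\}$ on $\mathbb{R}$, justifying dominated convergence.

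Finally, asymptotic UMA unbiasedness of $C$ within $\mathcal{C}$ says precisely that $\lim_n g_n(\delta)\le \lim_n f_n(\delta)$ for every $\delta\neq 0$, i.e., Lebesgue almost everywhere. Chaining the three facts gives
\[
\liminf_{n\to\infty}\sqrt{n}\,\mathbb{E}[U_1-L_1]\ \ge\ \int_{\mathbb{R}}\lim_n f_n(\delta)\,d\delta\ \ge\ \int_{\mathbb{R}} \lim_n g_n(\delta)\,d\delta\ =\ \lim_{n\to\infty}\sqrt{n}\,\mathbb{E}[U-L],
\]
and dividing through by $\lim_n\sqrt{n}\,\mathbb{E}[U-L]$ delivers the stated ratio bound. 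The denominator is positive in the limit because $g_n(0)\to \mathrm{some}\ \ge 1-\alpha$ together with the continuity of $L,U$ in the homogeneity class forces $\lim_n g_n$ to remain bounded away from zero on a neighborhood of $\delta=0$, so its integral is strictly positive. The main technical obstacle, and the reason the $(1+\epsilon)$-moment hypothesis is imposed, is producing the integrable dominating function for $g_n$ on the non-compact alternative axis; the asymmetry between the two sides is intentional, since dominated convergence is needed only for the UMA CI $C$ (where the moment hypothesis lives) while Fatou's lemma suffices for the competitor $C_1$.
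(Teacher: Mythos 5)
Your proof is correct and follows essentially the same route as the paper's: the Fubini/Pratt identity converting $\sqrt{n}$ times expected length into an integral of local coverage probabilities, Fatou's lemma for the competitor $C_1$, Markov's inequality on the $(1+\epsilon)$-moment bound to build an integrable envelope and apply dominated convergence for the UMA interval $C$, and the pointwise UMA comparison at every $\delta\neq 0$. The only addition is your closing remark on positivity of the limiting denominator, which the paper leaves implicit (and note your appeal to continuity/homogeneity there is not among this theorem's hypotheses); this does not affect the core argument.
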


 The proof of Theorem \ref{thm: UMA to length} is based on an integration argument which goes from the optimality in terms of coverage to the optimality in terms of length. In Theorem \ref{thm: UMA to length}, we assume the uniform boundedness of the $1+\epsilon$ moments of $\{\sqrt{n}(U(\mathbf{Y}_n)-\psi(P))_{+}\}_{n=1,2,\dots}$ and $\{\sqrt{n}(U(\mathbf{Y}_n)-\psi(P))_{+}\}_{n=1,2,\dots}$. This assumption can be weakened to uniform integrability when we have more information on the form of the CI and the CLT for the Stage 1 estimates, so that the conditions needed become more transparent and verifiable, as will be seen in the next section. Nonetheless, Theorem \ref{thm: UMA to length} highlights the important general implication in translating asymptotic UMA unbiasedness to shortest asymptotic interval length. Despite the long-standing literature in UMA, this implication is new as far as we are aware of.

\section{Statistically Optimal CIs}\label{sec: AUMAU}

Recall that low-computation CI construction methods generally follow the two stages shown in Figure \ref{fig: two stages}. With this dissection, we will first find the optimal way to construct $L(\cdot),U(\cdot)$ in Stage 2, given the mechanism in producing Stage 1 estimates. This corresponds to \emph{local optimality} which we detail in Section \ref{subsec: optimal combination}. Then, we compare different mechanisms in Stage 1 to conclude the overall optimality among all homogeneous CIs. This corresponds to \emph{global optimality} which we detail in Section \ref{sec:global}.

\subsection{Local Optimality: Optimal Formula to Combine Stage 1 Estimates}\label{subsec: optimal combination}

To begin, we show that when data are divided into equal-size non-overlapping batches in Stage 1, the standard batching formula is optimal. 
\begin{theorem}
\label{thm: std batching}(Optimality of standard batching) Assume that $\sqrt{n}\left(\psi(\hat{P}_n)-\psi(P)\right)\Rightarrow N(0,\sigma^2)$ for some $\sigma^2>0$. Let $\mathcal{H}_B$ denote the class of homogeneous CIs using equal-size non-overlapping batch estimates $\mathbf{Y}_{n,B}$. 
Then, 
\begin{itemize}
    \item Standard batching given by $CI_{B}:=[L_{B}(y),U_{B}(y)]:=\bar{y}\pm t_{K-1;1-\alpha/2}S/\sqrt{K}$,
where $\bar{y}=\sum_{j=1}^{K}y_{j}/K$, $t_{K-1;1-\alpha/2}$ is the
$1-\alpha/2$ quantile of the $t$ distribution with $K-1$ degrees
of freedom, and $S^{2}=\sum_{j=1}^{K}\left(y_{j}-\bar{y}\right)^{2}/(K-1)$
is asymptotically UMA unbiased at level $1-\alpha$
among class $\mathcal{H}_B$. 

\item Assume further that $\left\{ \sqrt{n}\left(\psi\left(\hat{P}_n\right)-\psi\left(P\right)\right)\right\} _{n=0,1,\dots}$
is uniformly integrable. Then, for any asymptotically unbiased $CI$
in $\mathcal{H}_B$ with level $1-\alpha$, we have that $\liminf_{n\to\infty}\frac{ \mathbb{E}\left[\text{length of }CI\right]}{ \mathbb{E}[\text{length of }CI_{B}]}\geq1$.
\end{itemize}

\end{theorem}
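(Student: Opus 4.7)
The plan is to reduce the problem to an exact Gaussian location-scale UMA unbiased CI problem via a joint CLT and the homogeneity of the CI, and then invoke the integration argument underlying Theorem \ref{thm: UMA to length} to translate UMA unbiasedness into the shortest expected length.

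For the first bullet, I first establish a joint CLT for the $K$ batch estimates. Because the batches are disjoint with $n/K$ samples each, the assumed marginal CLT applied to each batch (the same functional evaluated on $n/K$ i.i.d.\ samples), together with independence across batches, gives
\[
\tilde{\mathbf{Y}}_{n}:=\sqrt{n}\bigl(\mathbf{Y}_{n,B}-\psi(P)\mathbf{1}_{K}\bigr)\;\Rightarrow\;\mathbf{Z}=(Z_{1},\ldots,Z_{K}),\qquad Z_{i}\stackrel{\text{i.i.d.}}{\sim} N(0,K\sigma^{2}).
\]
For any $[L,U]\in\mathcal{H}_{B}$, homogeneity conditions 1--2 give the identity
\[
\bigl\{\psi(P)+n^{-1/2}\delta\in[L(\mathbf{Y}_{n,B}),U(\mathbf{Y}_{n,B})]\bigr\}=\bigl\{L(\tilde{\mathbf{Y}}_{n})\le\delta\le U(\tilde{\mathbf{Y}}_{n})\bigr\},
\]
and condition 3 together with the continuous mapping theorem yields $\lim_{n}P(\psi(P)+n^{-1/2}\delta\in[L,U])=P(L(\mathbf{Z})\le\delta\le U(\mathbf{Z}))$. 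Comparing across $\mathcal{H}_{B}$ thus reduces the asymptotic problem to an exact finite-sample one: among acceptance regions of the form $\{\mathbf{w}:0\in[L(\mathbf{w}),U(\mathbf{w})]\}$ for a normal location-scale sample $\mathbf{W}\sim N(\mu\mathbf{1}_{K},\tau^{2}I_{K})$ (with both $\mu,\tau$ unknown), find the CI for $\mu$ that is unbiased at level $1-\alpha$ with minimum coverage at every $\mu\ne 0$. The classical solution (Section~5.5 of \citealt{lehmann2005testing}) is the $t$-interval $\bar{W}\pm t_{K-1;1-\alpha/2}S_{W}/\sqrt{K}$; translating back to $\mathbf{Y}_{n,B}$ gives exactly $CI_{B}$, and since this $t$-interval lies in $\mathcal{H}_{B}$ it is a fortiori UMA unbiased within this (possibly restricted) class.

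For the second bullet, I use the integration identity $\sqrt{n}(U-L)=\int_{\mathbb{R}}\mathbf{1}\{\psi(P)+n^{-1/2}\delta\in[L,U]\}\,d\delta$ and Fubini to write $\sqrt{n}\mathbb{E}[U-L]=\int_{\mathbb{R}}P(\psi(P)+n^{-1/2}\delta\in[L,U])\,d\delta$. For $CI_{B}$, the closed form $\sqrt{n}(U_{B}-L_{B})=2t_{K-1;1-\alpha/2}\sqrt{n}S_{B}/\sqrt{K}$ together with the crude bound $\sqrt{n}S_{B}\le(K-1)^{-1/2}\sum_{i=1}^{K}|\sqrt{n}(\psi_{i}-\bar\psi)|$ reduces matters to uniform integrability of $\sqrt{n}(\psi_{i}-\psi(P))$, which follows from the assumption because each $\psi_{i}$ is the same functional applied to a subsample of size $n/K$. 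Hence $\sqrt{n}\mathbb{E}[U_{B}-L_{B}]\to 2t_{K-1;1-\alpha/2}\,\sigma\,\mathbb{E}\bigl[\sqrt{\chi^{2}_{K-1}/(K-1)}\bigr]$, a finite positive constant. For any competing asymptotically unbiased $CI_{1}\in\mathcal{H}_{B}$, Fatou's lemma gives $\liminf_{n}\sqrt{n}\mathbb{E}[U_{1}-L_{1}]\ge\int_{\mathbb{R}}\liminf_{n}P(\psi(P)+n^{-1/2}\delta\in CI_{1})\,d\delta$. Since the homogeneity/CLT reduction shows these $n$-limits exist at every $\delta$, the UMA unbiasedness just proved (and asymptotic unbiasedness at $\delta=0$) gives $\lim P(\cdot\in CI_{1})\ge\lim P(\cdot\in CI_{B})$ pointwise in $\delta$; integrating yields $\liminf\sqrt{n}\mathbb{E}[U_{1}-L_{1}]\ge\lim\sqrt{n}\mathbb{E}[U_{B}-L_{B}]$ and the ratio conclusion.

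The main obstacle is carrying out this integration argument under only uniform integrability (weaker than the $1+\epsilon$ moment required by Theorem \ref{thm: UMA to length}). I would handle this by exploiting the explicit linear form of $CI_{B}$, so that $\sqrt{n}S_{B}$ is dominated (up to a constant) by a sum of $L^{1}$-uniformly integrable terms via subadditivity of the square root; combined with the continuity and rapid tail decay of the limit coverage $\delta\mapsto P(L(\mathbf{Z})\le\delta\le U(\mathbf{Z}))$ inherited from the Gaussian limit, this lets a Vitali-type convergence stand in for the dominated convergence used in Theorem \ref{thm: UMA to length}. A smaller subtlety is confirming that the $n$-limit of the coverage exists for \emph{every} $C^{\prime}\in\mathcal{H}_{B}$ (so that $\liminf=\limsup$ in the UMA comparison); this follows from the continuous mapping theorem because the Gaussian limit assigns no mass to the boundary of the coverage event, which is continuous by homogeneity condition 3.
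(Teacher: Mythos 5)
Your proposal follows essentially the same route as the paper: use homogeneity (conditions 1--2) to rewrite the coverage of $\psi(P)+n^{-1/2}\delta$ as the coverage of $\delta$ by $[L,U]$ applied to the centered, scaled batch vector, pass to the Gaussian limit, reduce via the scale/translation invariance to exact unbiased testing of $\mu=0$ in the family $\{N(\mu 1_K,\tau^2 I)\}$, invoke the classical UMP unbiased $t$-test to get the coverage inequality, and then combine the Fubini identity, Fatou for the competitor, and uniform integrability of $\sqrt{n}S_B$ (bounded by a linear combination of the centered batch estimates) to convert this into the expected-length comparison. This matches the paper's proof in both structure and detail.

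The one step whose justification would not survive scrutiny is your final remark that the limit coverage probabilities converge ``because the Gaussian limit assigns no mass to the boundary of the coverage event, which is continuous by homogeneity condition 3.'' Continuity of $L$ and $U$ alone does not imply $P_{\mathbf{Z}\sim N(0,\cdot)}(L(\mathbf{Z})=\delta)=0$: a continuous function can be constant on a set of positive Gaussian measure, so the boundary of $\{z: L(z)\le\delta\le U(z)\}$ need not be null, and the continuous mapping / portmanteau step would be unjustified. The paper closes this gap using condition 2 (translation equivariance) rather than condition 3: if $L(\mathbf{Z})$ had an atom of mass $c>0$ at $\delta$ under $N(0,\sigma^2 I)$, then shifting the mean by $\epsilon 1_K$ and bounding the likelihood ratio between $N(\epsilon 1_K,\sigma^2 I)$ and $N(0,\sigma^2 I)$ for small $\epsilon$ would force uncountably many atoms of mass at least $c/M$, a contradiction. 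With that substitution your argument is complete; the rest (joint CLT across disjoint batches, the reduction to the two-parameter Gaussian family with both $\mu$ and $\tau$ unknown, and the Fatou/uniform-integrability length comparison) is exactly the paper's argument.
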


The proof of Theorem \ref{thm: std batching} is provided in Appendix \ref{subsec: proof}. In the proof, we first use the CLT and homogeneity conditions to express the asymptotic coverage probabilities at $\psi(P)+n^{-1/2}\delta$ in terms of the probability of $L(\mathbf{Z})\le0\leq U(\mathbf{Z})$ under different normal distributions in the family $\{N(\mu1_{K},\sigma^{2}I):\mu\in R,\sigma^{2}>0\}$. Then, the problem is reduced to hypothesis testing for this distribution family. For this family, a UMP unbiased test is known, so we can directly convert it to its dual CI. Finally, to get the statement on the expected length, we use the integration argument as in Theorem \ref{thm: UMA to length}, but also the particular form of the batching CI so that we can relax the uniform boundedness of $1+\epsilon$ moments in Theorem \ref{thm: UMA to length} to uniform integrability.  

Next, we consider general Stage 1 schemes where we have estimates
$\mathbf{Y}_{n}=(\psi_{1},\dots,\psi_{K})$ that can be dependent.
We consider estimators that satisfy
$\sqrt{n}(\mathbf{Y}_{n}-\psi(P))\Rightarrow N(\mu1_{K},\sigma^{2}\Sigma)$ where $\sigma>0$ is unknown but $\Sigma$ is known. In other words, the asymptotic covariance is known up to a scaling. Typically, $\Sigma$ can be determined from the batch or resample sizes and amount of batch overlaps in Stage 1. For example, the following lemma tells us how to find $\Sigma$ for a class of overlapping batching schemes:

\begin{lemma}\label{lem: corr formula}
   Suppose that $X_1,\dots,X_n$ are i.i.d. generated from $P$. Suppose that the $j$-th batch has batch size $\gamma_jn$ and for $1\leq i\neq j\leq K$, batch $i$ and batch $j$ share $\beta_{ij}n$ samples. Moreover, suppose that each batch can be represented as $\cup_{i=1}^m \{X_{a_i n+1},X_{a_i n+2},\dots,X_{b_i n}\}$ for a sequence $0\leq a_1\leq b_1\leq a_2\leq \dots \leq a_m\leq b_m\leq 1$ that does not depend on $n$. Suppose further that there exists an influence function $G$ such that $\sqrt{n}\left(\psi(\hat{P}_n) - \psi(P) - \mathbb{E}_{\hat{P}_n}G(X)\right) \to 0$ in probability as $n\to \infty$ and $\mathbb{E}_{P}G(X)=0, 0<Var_{P}G(X)<\infty$. Then, there exists $\sigma>0$ such that $\sqrt{n}(\mathbf{Y}_{n,OB}-\psi(P)1_K)\Rightarrow N(0,\sigma^2 V)$ where $V_{ii}=1/\gamma_i$ and $V_{ij}=\frac{\beta_{ij}}{\gamma_i\gamma_j}$ for $i\neq j, 1\leq i,j\leq K$.
\end{lemma}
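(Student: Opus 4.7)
The plan is to reduce the statement to a joint central limit theorem for sums of influence-function values $G(X_k)$ over a common refinement of the batches, and then read off the covariance from the overlap structure.

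First, I apply the given linearization condition to each batch individually. Since each batch $i$ is itself a size-$\gamma_i n$ i.i.d.\ sample from $P$, invoking the hypothesis with sample size $\gamma_i n$ in place of $n$ yields
\[
\sqrt{\gamma_i n}\left(\psi_i - \psi(P) - \frac{1}{\gamma_i n}\sum_{k \in I_i} G(X_k)\right) \to 0
\]
in probability, where $I_i$ is the index set of batch $i$. Multiplying by $1/\sqrt{\gamma_i}$ gives the clean linearization
\[
\sqrt{n}(\psi_i - \psi(P)) = \frac{1}{\gamma_i \sqrt{n}}\sum_{k \in I_i} G(X_k) + o_p(1).
\]

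Second, I build a joint CLT via a common refinement. The assumption that each batch is a union of contiguous sub-intervals whose endpoints are fixed fractions of $n$ means that, taking the union of all such fractional endpoints across the $K$ batches, we obtain a single finite partition $0 = c_0 < c_1 < \cdots < c_M = 1$ (with $M$ independent of $n$) such that every $I_i$ is exactly a union of the fundamental blocks $J_\ell := \{\lfloor c_{\ell-1} n\rfloor + 1, \ldots, \lfloor c_\ell n\rfloor\}$. The block sums $T_{\ell,n} := \sum_{k \in J_\ell} G(X_k)$ are independent across $\ell$, each being a sum of $(c_\ell - c_{\ell-1})n + O(1)$ i.i.d.\ centered variables of finite variance $\sigma^2 := \mathrm{Var}_P G(X) > 0$, so the classical multivariate CLT (the joint part being trivial by independence) gives
\[
\frac{1}{\sqrt{n}}(T_{1,n},\ldots,T_{M,n}) \Rightarrow N\left(0, \sigma^2 \mathrm{diag}(c_1 - c_0, \ldots, c_M - c_{M-1})\right).
\]

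Third, since $\sqrt{n}(\psi_i - \psi(P)) = \gamma_i^{-1}\sum_{\ell: J_\ell \subseteq I_i} T_{\ell,n}/\sqrt{n} + o_p(1)$, the whole vector $\sqrt{n}(\mathbf{Y}_{n,OB} - \psi(P)1_K)$ is, up to $o_p(1)$, a deterministic linear image of the Gaussian limit above. Slutsky and the continuous mapping theorem then yield joint normality, and the limiting covariance works out to
\[
\frac{\sigma^2}{\gamma_i \gamma_j}\sum_{\ell: J_\ell \subseteq I_i \cap I_j}(c_\ell - c_{\ell-1}) = \frac{\sigma^2 \beta_{ij}}{\gamma_i \gamma_j}
\]
for $i \neq j$, with $\sigma^2/\gamma_i$ on the diagonal, since $\beta_{ij}$ is by definition the total fraction of indices in $I_i \cap I_j$. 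This matches the claimed $V$, with $\sigma = \sqrt{\mathrm{Var}_P G(X)} > 0$.

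The main obstacle I anticipate is purely bookkeeping: one must verify that the per-batch linearization error $o_p(1/\sqrt{\gamma_i n})$ becomes $o_p(1/\sqrt{n})$ after rescaling (immediate, since $\gamma_i > 0$ is fixed and there are only $K$ batches), and that the integer-part discrepancies between $\lfloor c_\ell n\rfloor$ and $c_\ell n$ (and between $\lfloor \gamma_i n\rfloor, \lfloor \beta_{ij} n\rfloor$ and $\gamma_i n, \beta_{ij} n$) perturb variances by only $O(1/n)$ and are absorbed into the $o_p(1)$ terms. All of this is routine because $\gamma_i, \beta_{ij}, c_\ell$ are fixed positive constants independent of $n$.
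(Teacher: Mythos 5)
Your proposal is correct and follows essentially the same route as the paper's proof: linearize each batch estimate via the influence function, pass to a common refinement of the batches into disjoint blocks with fixed fractional endpoints, apply the CLT to the independent block sums, and read off the covariance $\sigma^2/\gamma_i$ and $\sigma^2\beta_{ij}/(\gamma_i\gamma_j)$ from the overlap structure. Your explicit handling of the per-batch rescaling and integer-part bookkeeping is a slightly more careful write-up of the same argument.
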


In Lemma \ref{lem: corr formula}, we introduced influence function $G$ which can be viewed as an analog of derivative in the space of probability distributions (\citealt{serfling1980}). With the help of $G$, for $Q$ that is close to $P$, we are able to approximate $\psi(Q)-\psi(P)$ with  $\mathbb{E}_{Q}G(X)$. This helps us analyze the behavior of $\psi(\cdot)$ around $P$.  

Lemma \ref{lem: corr formula} can also be generalized to the case where the data is a stationary sequence. Following the literature on CLT in the dependent case, we introduce the mixing condition.

\begin{definition}
\textit{(Mixing conditions)}
    For a sequence $X_1,X_2,\dots$ of random variables, let $\alpha_n$ be a number such that 
    \[
    \left\vert P(A\cap B) - P(A)P(B)\right\vert \leq \alpha_n
    \]
    for $A\in\sigma(X_1,\dots,X_k)$,  $B\in \sigma(X_{k+n},X_{k+n+1},\dots)$, and $k\geq 1, n\geq 1$. We say that $\{X_n\}$ is $\alpha$-mixing if $\alpha_n\to 0$ as $n\to\infty$. We say that $\{X_n\}$ is stationary if the distribution of $(X_{n},X_{n+1},\dots,X_{n+j})$ does not depend on $n$.
\end{definition}

For example, if $Y_n = f(X_n)$ where $\{X_n\}$ is a stationary discrete state-space Markov chain, then it is shown in Example 27.6 of \cite{billingsley1995probability} that $\{Y_n\}$ is $\alpha$-mixing with rate $\alpha_n = s\rho^n$ where $s$ is the number of states and $\rho<1$.

We can show the following extension of Lemma \ref{lem: corr formula} for dependent data. This is derived from a similar proof, but using the CLT for mixing sequences as in Theorem 27.4 of \cite{billingsley1995probability} to show the limit of $n^{-1/2}\sum_{i=1}^nG(X_i)$. 

\begin{lemma}\label{lem: CLT dependent}
    Under the same conditions as Lemma \ref{lem: corr formula} but without assuming that $\{X_1,\dots,X_n\}$ are i.i.d., suppose further that $\{G(X_n)\}_{n\geq 1}$ ($G$ is the influence function introduced as in Lemma \ref{lem: corr formula}) is stationary and $\alpha$-mixing with mixing coefficient $\alpha_n=O(n^{-5})$. Then, the same result as Lemma \ref{lem: corr formula} holds.
\end{lemma}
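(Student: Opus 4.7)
The plan is to mirror the proof of Lemma \ref{lem: corr formula} but replace the i.i.d.\ CLT for $n^{-1/2}\sum_{i=1}^{n} G(X_i)$ by its $\alpha$-mixing counterpart, and in particular by a functional version so as to secure joint rather than merely marginal convergence. Using the influence-function assumption, for each batch $j$ written as $\bigcup_{l=1}^{m_j}\{X_{a_{jl}n+1},\ldots,X_{b_{jl}n}\}$ with $\sum_{l}(b_{jl}-a_{jl})=\gamma_j$, I would first linearize the $j$-th coordinate of $\mathbf{Y}_{n,OB}$ as
\[
\sqrt{n}\bigl(\psi_{j,n}-\psi(P)\bigr) \;=\; \frac{1}{\gamma_j\sqrt{n}}\sum_{i\in\text{batch }j}G(X_i)+o_P(1),
\]
so that the joint limit of $\sqrt{n}(\mathbf{Y}_{n,OB}-\psi(P)1_K)$ is determined entirely by the joint behaviour of the partial sums of $\{G(X_i)\}$ over the fixed contiguous blocks $\{[a_{jl}n,b_{jl}n]\}_{j,l}$. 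The remaining task is therefore a joint CLT for these block sums.

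Next I would apply the CLT for stationary $\alpha$-mixing sequences. Theorem 27.4 of \cite{billingsley1995probability} gives $n^{-1/2}\sum_{i=1}^{n}G(X_i)\Rightarrow N(0,\sigma^2)$ with long-run variance
\[
\sigma^2 \;=\; \mathrm{Var}_P(G(X_1))+2\sum_{k\geq 1}\mathrm{Cov}_P(G(X_1),G(X_{1+k})),
\]
which is finite and positive because $\mathrm{Var}_P(G(X_1))\in(0,\infty)$ and the mixing rate $\alpha_n=O(n^{-5})$ makes the covariances summable (via the classical bound $|\mathrm{Cov}(G(X_1),G(X_{1+k}))|\le C\alpha_k^{1-2/\beta}\|G(X_1)\|_\beta^2$ for any $\beta>2$, with ample slack for a square-integrable $G(X_1)$). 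To upgrade this univariate CLT to joint convergence over the collection of block sums, I would invoke the invariance principle for stationary $\alpha$-mixing sequences, under which the polynomial rate $\alpha_n=O(n^{-5})$ comfortably implies that the partial-sum process $W_n(t):=n^{-1/2}\sum_{i=1}^{\lfloor nt\rfloor}G(X_i)$ converges in $D[0,1]$ to $\sigma B(t)$ for a standard Brownian motion $B$. Applying the continuous mapping theorem to the bounded linear functional $f\mapsto\bigl(\gamma_j^{-1}\sum_{l=1}^{m_j}(f(b_{jl})-f(a_{jl}))\bigr)_{j=1,\ldots,K}$ on $D[0,1]$ then yields joint convergence of the block-sum vector to a mean-zero Gaussian vector $Z$.

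The limiting covariance is then read off from the Brownian increments: $\mathrm{Cov}(Z_i,Z_j)=\sigma^2\gamma_i^{-1}\gamma_j^{-1}\sum_{l,l'}\bigl|[a_{il},b_{il}]\cap[a_{jl'},b_{jl'}]\bigr|$. By construction of $\beta_{ij}$ and $\gamma_j$, the inner sum equals $\beta_{ij}$ when $i\ne j$ and $\gamma_j$ when $i=j$, producing exactly the target covariance $\sigma^2 V$. The main obstacle, compared with the i.i.d.\ proof of Lemma \ref{lem: corr formula}, is the joint CLT step: disjoint blocks of a mixing sequence are no longer independent, so instead of appealing to independence across disjoint index sets one must pass through an invariance principle. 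Once the FCLT is available, however, the rest of the argument reduces to the same bookkeeping as in the i.i.d.\ case.
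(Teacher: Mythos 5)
Your proposal is correct and follows the same skeleton as the paper's (very terse) argument: linearize each batch estimate through the influence function and replace the i.i.d.\ CLT for $n^{-1/2}\sum_{i=1}^n G(X_i)$ by the stationary $\alpha$-mixing CLT of Theorem 27.4 of \cite{billingsley1995probability}, with $\sigma^2$ now the long-run variance. Where you genuinely add something is the joint-convergence step: the i.i.d.\ proof of Lemma \ref{lem: corr formula} gets multivariate normality from independence of disjoint block sums, which is lost under mixing, and the paper's one-line remark does not say how to recover it; your route through an invariance principle for the partial-sum process in $D[0,1]$ plus the continuous mapping theorem applied to the block-increment functional is a clean way to do this, and it automatically kills the $O(1)$ edge covariances between adjacent disjoint blocks, reproducing $\sigma^2 V$. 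Two small caveats: (i) your parenthetical justification of covariance summability is off as stated --- the Davydov bound with exponent $1-2/\beta$ requires $\|G(X_1)\|_\beta<\infty$ for some $\beta>2$ (indeed $\beta>2.5$ given $\alpha_k=O(k^{-5})$), so mere square-integrability does not suffice; in practice this is moot because Theorem 27.4 of \cite{billingsley1995probability} (and any FCLT you would cite) already demands higher moments of $G(X_1)$, a condition the lemma itself leaves implicit. (ii) Positivity of the long-run variance does not follow from $Var_P G(X)>0$ together with summability of covariances; the conclusion $\sigma>0$ really requires non-degeneracy of the long-run variance as an extra (implicit) assumption, an issue your write-up shares with the paper rather than introduces.
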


For any given $\Sigma$ induced from stage 1, we can find the optimal CI as detailed in Theorem \ref{thm: general batching} below.

\begin{theorem}
\label{thm: general batching}(Optimality using general Stage 1 estimates) Let $\mathcal{H}_{GS}^{(\Sigma)}$ denote the class of homogeneous CI and Stage 1 estimates $\mathbf{Y}_{n,GS}=(\psi_{1},\dots,\psi_{K})$ that satisfy  $\sqrt{n}(\mathbf{Y}_{n,GS}-\psi(P))\Rightarrow N(0,\sigma^{2}\Sigma)$ for some $\sigma>0$.
Then,
\begin{itemize}
    \item The following CI is asymptotically UMA unbiased
at level $1-\alpha$ among class $\mathcal{H}_{GS}^{(\Sigma)}$:
\[
CI_{GS}^{(\Sigma)}\left(\mathbf{Y}_{n,GS}\right):=\frac{1_{K}^{T}\Sigma^{-1}\mathbf{Y}_{n,GS}}{\lambda}\pm\frac{t_{K-1;1-\alpha/2}}{\sqrt{\lambda(K-1)}}\sqrt{\left(\mathbf{Y}_{n,GS}-\frac{1_{K}^{\top}\Sigma^{-1}\mathbf{Y}_{n,GS}}{\lambda}1_{K}\right)^{\top}\Sigma^{-1}\left(\mathbf{Y}_{n,GS}-\frac{1_{K}^{\top}\Sigma^{-1}\mathbf{Y}_{n,GS}}{\lambda}1_{K}\right)}
\]
Here, $\lambda=1_{K}^{\top}\Sigma^{-1}1_{K}$.

\item  Assume further that
$\left\{ \sqrt{n}\left(\psi\left(\hat{P}_n\right)-\psi\left(P\right)\right)\right\} _{n=0,1,\dots}$
is uniformly integrable. Then, for any asymptotically unbiased $CI$
in $\mathcal{H}_{GS}^{(\Sigma)}$ with level $1-\alpha$, we have that $\liminf_{n\to\infty} \frac{\mathbb{E}\left[\text{length of }CI\right]}{\mathbb{E}[\text{length of }CI_{GS}^{(\Sigma)}]}\geq1$.
\end{itemize}

\end{theorem}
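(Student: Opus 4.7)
The plan is to mirror the proof architecture of Theorem \ref{thm: std batching}, extending it to the non-identity, potentially non-diagonal covariance $\Sigma$. I would (i) use the CLT for $\mathbf{Y}_{n,GS}$ together with the homogeneity conditions to reduce the asymptotic UMA unbiasedness requirement to a finite-sample UMP unbiased testing problem in the Gaussian family $\{N(\mu 1_K, \sigma^2\Sigma) : \mu \in \mathbb{R}, \sigma > 0\}$; (ii) solve that Gaussian testing problem using standard exponential-family machinery to identify the $t$-test and its dual CI; and (iii) transfer the UMA unbiasedness conclusion to expected-length optimality via an integration argument adapted to the specific form of $CI_{GS}^{(\Sigma)}$.

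For step (i), homogeneity condition 2 (translation equivariance) gives $P(\psi(P)+n^{-1/2}\delta \in C(\mathbf{Y}_{n,GS})) = P(\delta \in C(\mathbf{W}_n))$, where $\mathbf{W}_n := \sqrt{n}(\mathbf{Y}_{n,GS}-\psi(P)1_K)$. The CLT $\mathbf{W}_n \Rightarrow \mathbf{W} \sim N(0,\sigma^2 \Sigma)$, together with condition 3 and the continuous mapping theorem, delivers $P(\delta \in C(\mathbf{W}_n)) \to P(\delta \in C(\mathbf{W}))$ for a.e.\ $\delta$. A further translation recasts this limit as $P_{\mu,\sigma}(0 \in C(\mathbf{Z}))$ with $\mu=-\delta$ and $P_{\mu,\sigma}$ the law of $\mathbf{Z} \sim N(\mu 1_K, \sigma^2\Sigma)$. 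Hence asymptotic UMA unbiasedness of $C$ is equivalent to the dual test $\phi(\mathbf{z}) := \mathbf{1}\{0 \notin C(\mathbf{z})\}$ being UMP unbiased for $H_0:\mu=0$ vs $H_1:\mu\neq 0$ in the Gaussian family above. For step (ii), this family is a two-parameter exponential family with sufficient statistics $T_1 = 1_K^{\top}\Sigma^{-1}\mathbf{Z}$ and $T_2 = \mathbf{Z}^{\top}\Sigma^{-1}\mathbf{Z}$; combining the Lehmann-Romano framework (Sections 4.4 and 5.1 of \cite{lehmann2005testing}) with the $\Sigma^{-1}$-orthogonal decomposition $\mathbf{Z} = \hat{\mu} 1_K + (\mathbf{Z} - \hat{\mu}1_K)$ and Basu's theorem yields that the UMP unbiased test rejects iff $|T| > t_{K-1, 1-\alpha/2}$, where $\hat{\mu} = T_1/\lambda$, $S^2 = (\mathbf{Z}-\hat{\mu}1_K)^{\top}\Sigma^{-1}(\mathbf{Z}-\hat{\mu}1_K)/(K-1)$, and $T = \sqrt{\lambda}\hat{\mu}/S$. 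Inverting this test gives the CI $\hat{\mu} \pm t_{K-1, 1-\alpha/2} S/\sqrt{\lambda}$, which coincides with $CI_{GS}^{(\Sigma)}$.

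For step (iii), I would follow the integration argument of Theorem \ref{thm: UMA to length}, writing expected length as an integral of coverage probabilities over local alternatives and using the pointwise asymptotic UMA from steps (i)-(ii) to dominate each integrand. The specific form of $CI_{GS}^{(\Sigma)}$ lets us bound its rescaled half-width by a constant times $\|\mathbf{W}_n\|$, and each coordinate of $\mathbf{W}_n$ has the same marginal limiting behavior as $\sqrt{n}(\psi(\hat{P}_n)-\psi(P))$, so uniform integrability of the latter transfers to uniform integrability of the rescaled length, relaxing the $1+\epsilon$ moment requirement of Theorem \ref{thm: UMA to length} to uniform integrability. The main obstacle will be step (ii): while the $t$-test UMP unbiased result is textbook for $\Sigma = I$, verifying completeness of the two-dimensional sufficient statistic and the $\hat{\mu} \perp S^2$ independence under the generalized $\Sigma^{-1}$ inner product — especially when $1_K$ is not an eigenvector of $\Sigma$ — requires careful linear-algebra bookkeeping. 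A secondary technical subtlety is ensuring enough a.e.\ continuity of $\delta \mapsto P(\delta \in C(\mathbf{W}))$ that the pointwise convergence in step (i) holds at all $\delta$ relevant to the integration argument in step (iii).
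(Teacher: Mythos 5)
Your proposal is correct and follows essentially the same route as the paper: the homogeneity-plus-CLT reduction to the UMP unbiased testing problem for $\{N(\mu 1_{K},\sigma^{2}\Sigma):\mu\in\mathbb{R},\sigma^{2}>0\}$ mirrors the proof of Theorem \ref{thm: std batching}, your step (ii) is exactly what the paper establishes in Theorem \ref{thm: UMP normal} (sufficient statistics $1_{K}^{\top}\Sigma^{-1}\mathbf{Z}$ and $\mathbf{Z}^{\top}\Sigma^{-1}\mathbf{Z}$, Basu's theorem, a Cochran-type sum-of-squares decomposition to get the $t_{K-1}$ law), and your step (iii) matches the paper's adaptation of the Theorem \ref{thm: UMA to length} integration argument with uniform integrability obtained by bounding the rescaled half-width by a linear combination of the coordinates of $\sqrt{n}(\mathbf{Y}_{n,GS}-\psi(P)1_{K})$. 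The technical obstacles you flag (completeness/independence under the $\Sigma^{-1}$ inner product, and a.e.\ continuity of the limiting coverage in $\delta$) are precisely the points the paper handles via Basu's corollary, Lemma \ref{lem: independence}, and the likelihood-ratio perturbation argument in the proof of Theorem \ref{thm: std batching}.
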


Note that Theorem \ref{thm: general batching} does not require that the original input data $X_1,X_2,\dots$ are i.i.d., nor does it confine to deterministic or random selection of data to create Stage 1 estimates - it holds as long as the CLT $\sqrt{n}(\mathbf{Y}_{n,GS}-\psi(P))\Rightarrow N(0,\sigma^{2}\Sigma)$ is valid. Theorem \ref{thm: general batching} thus applies very generally, including the more concrete examples below.

The argument for Theorem \ref{thm: general batching} uses a similar argument as that for Theorem \ref{thm: std batching}, namely that we deduce a hypothesis test for family $\{N(\mu1_{K},\sigma^{2}\Sigma):\mu\in R,\sigma^{2}>0\}$. Unlike Theorem \ref{thm: std batching}, however, the UMP unbiased test for this family does not appear known in the literature (note that it does not follow from a simple linear mapping such as $X \to \Sigma^{-1/2}X$ based on the test for $\{N(\mu1_{K},\sigma^{2}I):\mu\in R,\sigma^{2}>0\}$, since this mapping will not preserve the structure that the expectations of all the coordinate are the same). Therefore, we derive a UMP unbiased test from more fundamental theorems about exponential families, including the completeness or Basu's lemma, which can help obtain independence relations that are important for simplying the problem and identifying the limiting distribution of test statistics (see Theorem \ref{thm: UMP normal} and its proof in Appendix \ref{subsec: UMP normal} for more details). The detailed proof to Theorem \ref{thm: general batching} is provided in Appendix \ref{subsec: proof}.

\begin{example}[non-overlapping batching with general batch sizes]
    Suppose that the $j$-th batch has $\gamma_jn$ samples (assume for simplicity that $\gamma_jn,j=1,2,\dots$ are integers) where $\gamma_1+\gamma_2+\dots+\gamma_K=1$. Correspondingly, the batch estimates are given by $\psi_j^{(\gamma)}=\psi(\hat{P}_{\gamma_jn})$ where $\hat{P}_{\gamma_jn}$ is the empirical distribution of $X_{\left(\sum_{i=1}^{j-1}\gamma_in\right)+1},\dots, X_{\sum_{i=1}^j\gamma_in}$
From Theorem \ref{thm: general batching}, when the batch estimates are given by $\mathbf{Y}_{n,B}^{(\gamma)}=(\psi_1^{(\gamma)},\dots,\psi_K^{(\gamma)})^\top$, an asymptotically UMA unbiased CI at level $1-\alpha$ is given by 
\[
CI_B^{(\gamma)}:=\bar{\psi}^{(\gamma)} \pm t_{K-1;1-\alpha/2}\sqrt{\sum_{j=1}^{K}\gamma_j(\psi_j^{(\gamma)}-\bar{\psi}^{(\gamma)})^2}/\sqrt{K-1}
\]
where $\bar{\psi}^{(\gamma)}=\sum_{j=1}^K\gamma_j\psi_j^{(\gamma)}$. Moreover, under uniform integrability conditions, $CI_B^{(\gamma)}$ is also asymptotically shortest among unbiased CIs using  $\mathbf{Y}_{n,B}^{(\gamma)}$ in Stage 1.

\end{example}

\begin{example}[batched jackknife]
We use $\mathbf{Y}_{n,SJ}=(\psi_{1,SJ},\dots,\psi_{K,SJ})$ where each $\psi_{j,SJ}$ is evaluated on leave-one-batch-out data $\{X_1,\dots,X_n\}\setminus\{X_{(j-1)n/K+1},\dots,X_{jn/K}\}$. The construction in Section III.5b of \cite{asmussen2007} gives \[
CI_{BJ}=\bar J \pm \frac{t_{K-1;1-\alpha/2}}{\sqrt{K}}S
\]
where $J_i:=\sum_{j=1}^K \psi_{j,SJ} - (K-1) \psi_{i,SJ}$ and $S^2=\sum_{i=1}^K(J_i-\bar J)^2/(K-1)$. We can check that this equals the construction in Theorem \ref{thm: general batching}. Indeed, we have that $\gamma_j=(K-1)/K$ and $\beta_{ij}=(K-2)/K$. Therefore, using Lemma \ref{lem: corr formula}, we have that the asymptotic covariance of $\mathbf{Y}_{n,SJ}$ is proportional to $V_{SJ} := \frac{I+(K-2)1_K1_K^{\top}}{K}$, and it can be computed that 
 $V^{-1}_{SJ}=K\left(I-\frac{K-2}{(K-1)^2}1_K1_K^{\top}\right)$ and $\lambda = \frac{K^2}{(K-1)^2}$. Based on these, it can be checked that $CI_{GS}^{(V_{SJ})}$ derived in Theorem \ref{thm: general batching} has point estimator $\frac{1_{K}^{T}V^{-1}\mathbf{Y}_{n,OB}}{\lambda} = \frac{1_K^T\mathbf{Y}_{n,OB}}{K}$ and variance estimator
\begin{align}
 & \frac{1}{\lambda}\left(\mathbf{Y}_{n,OB}-\frac{1_{K}^{\top}V^{-1}\mathbf{Y}_{n,OB}}{\lambda}1_{K}\right)^{\top}V^{-1}\left(\mathbf{Y}_{n,OB}-\frac{1_{K}^{\top}V^{-1}\mathbf{Y}_{n,OB}}{\lambda}1_{K}\right)\nonumber\\
= & \frac{1}{K}\left(\left(1_{K}1_{K}^{T}-(K-1)I-\frac{1}{K}1_{K}1_{K}^{\top}\right)\mathbf{Y}_{n,OB}\right)^{\top}\left(1_{K}1_{K}^{T}-(K-1)I-\frac{1}{K}1_{K}1_{K}^{\top}\right)\mathbf{Y}_{n,OB}\label{eq: var sj}
\end{align}
A proof of \eqref{eq: var sj} is provided in Appendix \ref{app: dev var}. Note that $\frac{1}{K}1_K1_K^{\top}\mathbf{Y}_{n,OB}$ is the average of the batch estimates and the $i$-th coordinate of $\left(1_K1_K^\top-(K-1)I\right)\mathbf{Y}_{OB}$ is given by  $\sum_{j=1}^K\psi_{j,SJ}-(K-1)\psi_{i,SJ}$. Therefore, comparing with $CI_{BJ}$, we get the claim.
\end{example}

\begin{example}[overlapping batching]
Recall the construction of overlapping batches in \cite{su2023overlapping} which gives Stage 1 estimators
$\mathbf{Y}_{n,OB}=\left(\psi_{j,OB},1\leq j\leq K\right)$. Here, the first batch is the entire sample $X_1,\ldots,X_n$. For $2\leq j\leq K$, batch $j$ starts from the $(j-2)\frac{n-\gamma n}{K-2}+1$-th and ends at the $(j-2)\frac{n-\gamma n}{K-2}+\gamma n$-th data point. For this Stage 1, following  Theorem \ref{thm: general batching}, we propose the UMA unbiased CI $CI_{OB\text{-new}}:=CI_{GS}^{(V_{OB})}$, i.e., 
\[
\frac{1_{K}^{T}V_{OB}^{-1}\mathbf{Y}_{n,OB}}{\lambda}\pm\frac{t_{K-1;1-\alpha/2}}{\sqrt{\lambda(K-1)}}\sqrt{\left(\mathbf{Y}_{n,OB}-\frac{1_{K}^{\top}V_{OB}^{-1}\mathbf{Y}_{n,OB}}{\lambda}1_{K}\right)^{\top}V_{OB}^{-1}\left(\mathbf{Y}_{n,OB}-\frac{1_{K}^{\top}V_{OB}^{-1}\mathbf{Y}_{n,OB}}{\lambda}1_{K}\right)}
\]
Here, $\lambda=1_{K}^{\top}V_{OB}^{-1}1_{K}$ and $V_{OB}$ is the matrix from Lemma \ref{lem: corr formula} where $\gamma_i=\gamma$ and $\beta_{ij} = \left(\gamma - |i-j|\frac{1-\gamma }{K-2}\right)_+$ for $2\leq i,j\leq K$, $\gamma_{1}=1$, and $\beta_{i1}=\gamma$ for $i=1\leq i\leq K$. We note that our $CI_{OB\text{-new}}$ and $CI_{OB}$ in \cite{su2023overlapping} use the same Stage 1 estimators, but put them together in different ways in Stage 2. Indeed, $CI_{OB\text{-new}}$ is calibrated based on a pivotal statistic with an asymptotic $t_{K-1}$ distribution, while $CI_{OB}$ is calibrated based on a more general distribution (which they denote by OB-I). Therefore, from the optimality of $CI_{OB\text{-new}}$ shown in Theorem \ref{thm: general batching}, we expect it to be shorter than $CI_{OB}$.
\end{example}

\begin{example}[cheap bootstrap]
Under regularity conditions, it is shown in Proposition 1 of \cite{lam2022cheap} that the following joint CLT holds:
\[
\left(\sqrt{n}(\psi(\hat{P}_n)-\psi(P)),\sqrt{n}(\psi(P^{*1}_n)-\psi(\hat{P}_n)),\dots,\sqrt{n}(\psi(P^{*(K-1)}_n)-\psi(\hat{P}_n))\right)^\top\Rightarrow N(0,\sigma_0^2 I).
\]
Therefore, for $\mathbf{Y}_{n,CB}=(\psi(\hat{P}_n),\psi(P^{*1}_n),\dots,\psi(P^{*(K-1)}_n))$, we have that
\[
\sqrt{n}(\mathbf{Y}_{n,CB}-\psi(P))\Rightarrow N(0,\sigma_0^2 V).
\]
where $V_{11}=1,V_{1i}=1,V_{ii}=2,V_{ij}=1$ for $2\leq i\neq j\leq K$ and it can be calculated that 
\[
V^{-1}=\left[\begin{array}{ccccc}
K & -1 & -1 & \dots & -1\\
-1 & 1 & 0 & \dots & 0\\
-1 & 0 & 1 & \dots & 0\\
\dots & \dots & \dots & \dots & \dots\\
-1 & 0 & 0 & \dots & 1
\end{array}\right]
\]
From this, it can be calculated that the CI derived in Theorem \ref{thm: general batching} corresponding to $V$ has point estimator $\frac{1_{K}^{T}V^{-1}\mathbf{Y}_{n,CB}}{\lambda} = \psi(\hat{P}_n)$ and variance estimator
\begin{align*}
 & \frac{1}{\lambda}\left(\mathbf{Y}_{n,CB}-\frac{1_{K}^{\top}V^{-1}\mathbf{Y}_{n,CB}}{\lambda}1_{K}\right)^{\top}V^{-1}\left(\mathbf{Y}_{n,CB}-\frac{1_{K}^{\top}V^{-1}\mathbf{Y}_{n,CB}}{\lambda}1_{K}\right)\nonumber\\
= & \sum_{j=1}^{K-1}(\psi(P^{*j}_n)-\psi(\hat{P}_n))^2
\end{align*}
which is the same as $CI_{CB}$. Therefore, from Theorem \ref{thm: general batching}, we have that $CI_{CB}$ is asymptotically UMA unbiased given data $\mathbf{Y}_{n,CB}$. Moreover, under uniform integrability conditions it has the shortest expected length among unbiased CIs using data $\mathbf{Y}_{n,CB}$.
\end{example}

\begin{example}[weighted cheap bootstrap]
From Appendix \ref{subsec: CLT weighted}, we have the following joint CLT
\begin{align*}
 & \left(\sqrt{n}(\psi(\hat{P}_{n})-\psi(P)),\sqrt{n}(\psi(P_{n}^{W,1})-\psi(\hat{P}_{n})),\dots,\sqrt{n}(\psi(P_{n}^{W,K-1})-\psi(\hat{P}_{n}))\right)^{\top}\\
\Rightarrow & N(0,\sigma_0^{2}diag(1,\sigma_{W}^{2},\sigma_{W}^{2},\dots,\sigma_{W}^{2})).
\end{align*}
Here, $diag(x_1,\dots,x_K)$ means a matrix whose diagonal terms are given by $x_1,\dots,x_K$ and other terms are zero. Therefore, for $\mathbf{Y}_{n,WCB}=(\psi(\hat{P}_n),\psi(P^{W,1}_n),\dots,\psi(P^{W,K-1}_n))$, we have that
\[
\sqrt{n}(\mathbf{Y}_{n,WCB}-\psi(P)) \Rightarrow N(0,\sigma_0^2 V).
\]
where $V_{11}=1,V_{1i}=1,V_{ii}=1+\sigma_W^2,V_{ij}=1$ for $2\leq i\neq j\leq K$ and it can be calculated that 
\[
V^{-1}=\left[\begin{array}{ccccc}
1+(K-1)\sigma_W^{-2} & -\sigma_W^{-2} & -\sigma_W^{-2} & \dots & -\sigma_W^{-2}\\
-\sigma_W^{-2} & \sigma_W^{-2} & 0 & \dots & 0\\
-\sigma_W^{-2} & 0 & \sigma_W^{-2} & \dots & 0\\
\dots & \dots & \dots & \dots & \dots\\
-\sigma_W^{-2} & 0 & 0 & \dots & \sigma_W^{-2}
\end{array}\right]
\]
Similar to the cheap bootstrap, we can check that $\frac{1_{K}^{T}V^{-1}\mathbf{Y}_{n,WCB}}{\lambda} = \psi(\hat{P}_n)$ and 
\begin{align*}
 & \frac{1}{\lambda}\left(\mathbf{Y}_{n,WCB}-\frac{1_{K}^{\top}V^{-1}\mathbf{Y}_{n,WCB}}{\lambda}1_{K}\right)^{\top}V^{-1}\left(\mathbf{Y}_{n,WCB}-\frac{1_{K}^{\top}V^{-1}\mathbf{Y}_{n,WCB}}{\lambda}1_{K}\right)\nonumber\\
= & \sigma_W^{-2}\sum_{j=1}^{K-1}(\psi(P^{W,j}_n)-\psi(\hat{P}_n))^2
\end{align*}
which is the same as $CI_{WCB}$. Therefore, like in the original cheap bootstrap case, from Theorem \ref{thm: general batching} we have that $CI_{WCB}$ is asymptotically UMA unbiased given data $\mathbf{Y}_{n,WCB}$ and, moreover, under uniform integrability conditions it has the shortest expected length among unbiased CIs using data $\mathbf{Y}_{n,WCB}$.
\end{example}

\subsection{Global Optimality: Asymptotically Globally Shortest Unbiased CIs}\label{sec:global}

Now we study the optimality of CIs among different Stage 1 mechanisms. We focus on those using Stage 1 estimates that satisfy a joint CLT: Let $\mathcal{H}_{GS}$ denote the class of CI using $[L(\cdot),U(\cdot)]\in\mathcal{H}$ and Stage 1 estimates $\mathbf{Y}_n$ that satisfies $\sqrt{n}(\mathbf{Y}_n-\psi(P))\Rightarrow N(0,\Sigma)$ for some $\Sigma$. 

\begin{definition}
\textit{(Asymptotically globally shortest unbiased CIs)}
    We say that $C=[L(\mathbf{Y}_n),U(\mathbf{Y}_n)]$  is \textit{asymptotically globally shortest unbiased} in class $\mathcal{H}_{GS}$ if it is asymptotically unbiased and for any asymptotically unbiased CI $C^\prime=[L^\prime(\mathbf{Y}^\prime_n),U^\prime(\mathbf{Y}^\prime_n)]$ in $\mathcal{H}_{GS}$, we have that 
\[
\liminf_{n\to\infty} \mathbb{E}[\text{length of }C^\prime]/ \mathbb{E}[\text{length of }C]\geq 1
\]
\end{definition}

The following result provides a characterization of CIs with the global shortest asymptotic expected length in terms of the limits of the affine combinations of Stage 1 estimates.

\begin{theorem}\label{thm: global}
\textit{(Characterization of asymptotically globally shortest unbiased CIs)}
    Assume that $\left\{ \sqrt{n}\left(\psi\left(\hat{P}_n\right)-\psi\left(P\right)\right)\right\} _{n=0,1,\dots}$
is uniformly integrable, then $CI_{GS}^{(\Sigma)}(\mathbf{Y}_n)$ constructed in Theorem \ref{thm: general batching} is asymptotically globally shortest unbiased in $\mathcal{H}_{GS}$ if and only if there exists $w\in\mathbb{R}^K$ such that $1^\top w=1$ and $\sqrt{n}(w^\top\mathbf{Y}_n-\psi(P))\Rightarrow N(0,\sigma_0)$. Here, recall that $\sigma_0^2$ is the asymptotic varaince of $\psi(\hat{P}_n)$, i.e., $\sqrt{n}\left(\psi(\hat{P}_n)-\psi(P)\right)\Rightarrow N(0,\sigma_0^2)$. 
\end{theorem}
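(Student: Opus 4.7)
The plan is to reduce the global comparison to a scalar optimization over $\lambda(\Sigma) = 1_K^\top\Sigma^{-1}1_K$, where $\Sigma$ denotes the asymptotic covariance of $\sqrt{n}(\mathbf{Y}_n - \psi(P)1_K)$. By Theorem \ref{thm: general batching}, within any fixed Stage 1 with covariance $\Sigma'$, the shortest homogeneous asymptotically unbiased CI is $CI_{GS}^{(\Sigma')}$; so the global optimality of $CI_{GS}^{(\Sigma)}$ reduces to comparing its asymptotic expected length against that of $CI_{GS}^{(\Sigma')}$ across all feasible $\Sigma'$ arising in $\mathcal{H}_{GS}$. To quantify the length I would set $U_n = \Sigma^{-1/2}\sqrt{n}(\mathbf{Y}_n - \psi(P)1_K)\Rightarrow U\sim N(0,I)$ and orthogonally decompose $U = (u_0^\top U)u_0 + U^\perp$ along the unit vector $u_0 = \Sigma^{-1/2}1_K/\sqrt{\lambda(\Sigma)}$, which is the direction carrying the point estimator. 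This yields $n(\mathbf{Y}_n - \bar Y 1_K)^\top\Sigma^{-1}(\mathbf{Y}_n - \bar Y 1_K)\Rightarrow\|U^\perp\|^2\sim\chi^2_{K-1}$, and combined with the assumed uniform integrability of $\sqrt{n}(\psi(\hat{P}_n)-\psi(P))$ promoted through the linear aggregation of Stage 1, gives $\sqrt{n}\,\mathbb{E}[\text{length of }CI_{GS}^{(\Sigma')}]\to g(K)/\sqrt{\lambda(\Sigma')}$ for a constant $g(K) = 2t_{K-1;1-\alpha/2}\,\mathbb{E}[\sqrt{\chi^2_{K-1}}]/\sqrt{K-1}$ depending only on $K$ and $\alpha$.

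Next I would recast the theorem's condition. Because $1/\lambda(\Sigma) = \min_{w:\,1_K^\top w = 1} w^\top\Sigma w$ (a standard quadratic program with minimizer $w^* = \Sigma^{-1}1_K/\lambda(\Sigma)$) and the continuous map $w\mapsto w^\top\Sigma w$ on the affine constraint $\{w:1_K^\top w = 1\}$ takes every value in $[1/\lambda(\Sigma),\infty)$, the existence of $w$ satisfying the theorem's CLT condition (equivalently $w^\top\Sigma w = \sigma_0^2$ and $1_K^\top w = 1$) is equivalent to $1/\lambda(\Sigma) \leq \sigma_0^2$. The theorem then reduces to showing: $CI_{GS}^{(\Sigma)}$ is globally shortest iff $1/\lambda(\Sigma)\leq\sigma_0^2$. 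For the ``only if'' direction I would take the cheap bootstrap Stage 1 $\mathbf{Y}_{n,CB}$ as a specific competitor in $\mathcal{H}_{GS}$; a direct row-sum computation on the $V^{-1}$ displayed in the preceding cheap bootstrap example shows $1_K^\top V^{-1}1_K = 1$, so $\lambda(\Sigma_{CB}) = 1/\sigma_0^2$. Global shortness of $CI_{GS}^{(\Sigma)}$ therefore forces $g(K)/\sqrt{\lambda(\Sigma)} \leq g(K)/\sqrt{\lambda(\Sigma_{CB})}$, i.e.\ $1/\lambda(\Sigma)\leq\sigma_0^2$. For the ``if'' direction, given $1/\lambda(\Sigma)\leq\sigma_0^2$, I would invoke a universal lower bound $1/\lambda(\Sigma')\geq\sigma_0^2$ valid for every feasible $\Sigma'$; this gives $\lambda(\Sigma)\geq\lambda(\Sigma')$ and hence the desired length comparison.

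The main obstacle is establishing the universal lower bound $1/\lambda(\Sigma')\geq\sigma_0^2$, which calls for an influence-function argument analogous to Lemma \ref{lem: corr formula}. I would write each Stage 1 estimator as $\psi_i = \psi(\hat{P}_{n,i})$ with $\hat{P}_{n,i}$ assigning weights $w_{i,j}\geq 0$ (possibly random, as in the bootstrap and weighted-bootstrap constructions) summing to one over $X_1,\dots,X_n$, so $\sqrt{n}(\psi_i-\psi(P))\approx \sqrt{n}\sum_j w_{i,j}G(X_j)$ with $G$ the influence function, $\mathbb{E}_P G = 0$, $\mathrm{Var}_P G = \sigma_0^2$. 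Any affine combination $\sum_i c_i\psi_i$ with $\sum_i c_i = 1$ then linearizes to $\sqrt{n}\sum_j v_j G(X_j)$ with $v_j = \sum_i c_i w_{i,j}$ satisfying $\sum_j v_j = 1$, whose asymptotic variance is $\sigma_0^2\lim n\,\mathbb{E}[\sum_j v_j^2]\geq\sigma_0^2$ by Cauchy--Schwarz. Handling random weights, propagating the uniform integrability hypothesis to variance convergence, and verifying that the expansion holds under the mild regularity implicit in the CLT defining $\mathcal{H}_{GS}$ are the delicate technical parts of this step.
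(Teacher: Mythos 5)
Your proposal is correct and follows essentially the same route as the paper's proof: the asymptotic expected length of $CI_{GS}^{(\Sigma')}$ is shown to scale as $1/\sqrt{\lambda(\Sigma')}$, where $1/\lambda(\Sigma')=\min_{1_K^\top w=1} w^\top\Sigma' w$ is the smallest asymptotic variance over affine combinations of the Stage 1 estimates, and global optimality is characterized by this minimum attaining the universal lower bound $\sigma_0^2$, which is exactly the stated existence of $w$. The only differences are that you make explicit two steps the paper leaves terse --- substantiating the bound $1/\lambda(\Sigma')\geq\sigma_0^2$ via an influence-function linearization with Cauchy--Schwarz, and exhibiting the cheap bootstrap (for which $\lambda=1/\sigma_0^2$) as the competitor for the ``only if'' direction --- both consistent with, and if anything more detailed than, the paper's argument.
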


Theorem \ref{thm: global} is derived by finding the condition for the asymptotic expected length of $CI_{GS}^{(\Sigma)}$ to achieve its minimum as $\Sigma$ changes. The detailed proof is provided in Appendix \ref{subsec: proof}. Theorem \ref{thm: global} implies that $CI_{GS}^{(\Sigma)}(\mathbf{Y}_n)$ is globally shortest unbiased if $\psi(\hat{P}_n)$ can be asymptotically represented as an affine combination of $\mathbf{Y}_n$ in the sense that $\sqrt{n}(\psi(\hat{P}_n)-w^\top \mathbf{Y}_n)\stackrel{p}{\to} 0$ for some $w$ such that $1^\top w=1$. 
From this observation, we can study the global optimality for the CIs in Section \ref{subsec: optimal combination}, as below. 

\begin{corollary}\label{cor: equivalence}
\textit{(Equivalence of CIs in attaining global optimality)}    
Under the same conditions as Theorem 
\ref{thm: global}, all of $CI_B,CI_B^{(\gamma)},CI_{BJ},CI_{CB}$, and $CI_{OB}$ are asymptotically globally shortest unbiased in $\mathcal{H}_{GS}$. Moreover, all of these CIs are asymptotically equivalent, in the sense that 1) the center of all these CIs differs from $\psi(\hat{P}_n)$ by $o(n^{-1/2})$ and 2) the asymptotic distribution of $\sqrt{n}$ times the length of these CIs are all equal to $\sigma_0\sqrt{\frac{\chi^2_{K-1}}{K-1}}$ where $\sigma_0$ is the asymptotic variance of $\psi(\hat{P}_n)$. 
\end{corollary}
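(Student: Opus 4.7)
The plan is to split the corollary into two tasks: (i) establishing global optimality for each of the five listed CIs, and (ii) verifying the two explicit asymptotic equivalence claims. For (i), each listed CI is an instance of $CI_{GS}^{(\Sigma)}$ for an appropriate $\Sigma$, so by Theorem \ref{thm: global} it is enough to exhibit, for each corresponding $\mathbf{Y}_n$, a weight vector $w\in\mathbb{R}^K$ with $1_K^\top w=1$ and $\sqrt n(w^\top\mathbf{Y}_n-\psi(P))\Rightarrow N(0,\sigma_0^2)$. For $CI_{CB}$ and for the overlapping-batching CI --- whose first coordinate is $\psi(\hat{P}_n)$, and for which the identity $V_{OB}^{-1}1_K=e_1$ (a consequence of the first column of $V_{OB}$ being $1_K$ since batch 1 is the full sample) further makes its GLS-center equal $\psi(\hat{P}_n)$ --- the choice $w=e_1$ works trivially. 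For $CI_B$ and $CI_B^{(\gamma)}$, I take $w=(1/K,\dots,1/K)$ and $w=(\gamma_1,\dots,\gamma_K)$ respectively and apply the influence-function representation from Lemma \ref{lem: corr formula}: each batch estimate expands as $\psi_j^{(\gamma)}=\psi(P)+|B_j|^{-1}\sum_{i\in B_j}G(X_i)+o_p(n^{-1/2})$, and the weighted sum telescopes to $\psi(P)+n^{-1}\sum_i G(X_i)+o_p(n^{-1/2})$, producing the same Gaussian limit as $\psi(\hat{P}_n)-\psi(P)$. For $CI_{BJ}$, the analogous expansion for leave-one-batch-out estimators combined with the double-counting identity $\sum_j\sum_{i\notin B_j}G(X_i)=(K-1)\sum_iG(X_i)$ again makes $w=(1/K,\dots,1/K)$ work.

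For task (ii), claim 1 (centers differ from $\psi(\hat{P}_n)$ by $o_p(n^{-1/2})$) is immediate for $CI_{CB}$ and the overlapping-batching CI, whose centers are literally $\psi(\hat{P}_n)$; for $CI_B$, $CI_B^{(\gamma)}$ and $CI_{BJ}$ it follows from the same influence-function calculation of (i), since the centers $\bar\psi$, $\bar\psi^{(\gamma)}$ and $\bar J=\bar\psi_{SJ}$ coincide with the weighted averages $w^\top\mathbf{Y}_n$ above. For claim 2, observe that the length of $CI_{GS}^{(\Sigma)}$ equals $2t_{K-1,1-\alpha/2}\sqrt{Q/[\lambda(K-1)]}$ where $Q=(\mathbf{Y}_n-\hat\mu_{GLS}1_K)^\top\Sigma^{-1}(\mathbf{Y}_n-\hat\mu_{GLS}1_K)$. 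From the Stage 1 CLT $\sqrt n(\mathbf{Y}_n-\psi(P)1_K)\Rightarrow\mathbf{Z}\sim N(0,\sigma^2\Sigma)$, the continuous mapping theorem and Cochran's theorem give $nQ\Rightarrow(\mathbf{Z}-\hat\mu_{GLS}(\mathbf{Z})1_K)^\top\Sigma^{-1}(\mathbf{Z}-\hat\mu_{GLS}(\mathbf{Z})1_K)\sim\sigma^2\chi^2_{K-1}$, with $\hat\mu_{GLS}(\mathbf{Z})$ and the residual vector independent. Combined with the identity $\sigma^2/\lambda=\sigma_0^2$, forced by the existence in (i) of a $w$ whose affine combination already attains the GLS-minimal variance $\sigma^2/\lambda$, we obtain $\sqrt n\cdot(\text{length})\Rightarrow 2t_{K-1,1-\alpha/2}\sigma_0\sqrt{\chi^2_{K-1}/(K-1)}$, the same weak limit (up to the deterministic factor $2t_{K-1,1-\alpha/2}$) for all five CIs.

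The main obstacle is the linear-algebraic verification that the Stage 1 scaling constant $\sigma^2$ in each joint CLT equals $\lambda\sigma_0^2$: this is what the influence-function collapse in (i) implicitly pins down, and it can be double-checked on each concrete covariance structure (equal and unequal batching, batched jackknife, cheap bootstrap, overlapping batching) either by computing $1_K^\top V^{-1}/\lambda$ and reading off the center's asymptotic variance, or by the direct influence-function argument itself. Once this identification is secured uniformly across the five $(\mathbf{Y}_n,\Sigma)$ pairs, Cochran's decomposition plus the continuous mapping theorem delivers the common asymptotic length distribution, and Theorem \ref{thm: global} delivers the asymptotic global shortness, completing both parts of the corollary.
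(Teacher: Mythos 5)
Your proposal is correct and follows essentially the same route as the paper: exhibit for each of the five Stage 1 schemes an affine weight $w$ with $1_K^\top w=1$ whose combination asymptotically reproduces $\psi(\hat{P}_n)$ (invoking Theorem \ref{thm: global} for global shortness), then identify the centers with these combinations and read the common length limit $\sigma_0\sqrt{\chi^2_{K-1}/(K-1)}$ (up to the deterministic $2t_{K-1,1-\alpha/2}$ factor) off the $\chi^2_{K-1}$ quadratic-form argument already contained in the proofs of Theorems \ref{thm: general batching} and \ref{thm: global}. Your influence-function computations and the observation $V_{OB}^{-1}1_K=e_1$ simply make explicit what the paper dismisses as "by symmetry, it is not hard to check," and your identification $\sigma^2/\lambda=\sigma_0^2$ uses the same minimal-variance inequality from Theorem \ref{thm: global}'s proof that the paper relies on.
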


The proof of Corollary \ref{cor: equivalence} is also provided in Appendix \ref{subsec: proof}. Corollary \ref{cor: equivalence} stipulates that all of batching, batched jackknife, cheap bootstrap, and batching with general or overlapping batches proposed in this paper are globally optimal and moreover, they give asymptotically equivalent $t$-based CIs. 
Finally, it is shown in \cite{glynn1990simulation} that among the class of CIs with the same pivotal statistic, the one that is symmetric around the empirical estimator has the shortest expected length. Combining that result with our findings in Theorem \ref{thm: global}, we can further show the optimality among the class of CIs that is not necessarily unbiased. Detailed discussions are provided in Section \ref{subsec: biased CI}.

\section{Numerical Experiments}\label{sec: numerics}

In this section, we provide numerical experiments to test the empirical performance of the low-computation CIs proposed in this paper. The CIs we consider are \begin{enumerate}
\item Standard batching: $CI_{B}$ given in Section \ref{sec: setup}.
\item Batching with general batch sizes: $CI_{B}^{(\gamma)}$ as introduced
in Example 1. We use $\gamma_{j}=\frac{j}{K(K+1)/2},1\leq j \leq K$.
\item Cheap bootstrap: $CI_{CB}$ given in Section \ref{sec: setup}.
\item Overlapping batching that we proposed: $CI_{OB\text{-new}}$ as introduced in Example 3 where we set $\gamma$ as $0.3$. Note that $\gamma$ measures the batch size (the
size of each batch is $\gamma n$). Our choice of the batch size is similar to the size used in Section 9 of \cite{su2023overlapping} which considered $\gamma=0.1,0.25$ (our choice of $\gamma$ is a bit larger so that there is a sufficient amount of overlapping when the computational budget is low).
\item Overlapping batching proposed by \cite{su2023overlapping}: $CI_{OB}$ as given in Section \ref{sec: setup}, where the batch estimates
are the same as our overlapping batching outlined in item 4 above.
The critical values $c_{\beta,b,\alpha_{1}}$ can be checked
from Figure 3 in \cite{su2023overlapping} and in correspondance with our settings in item
4, we use $\beta=0.3,b=K-1,\alpha_{1}=0.95$.
\end{enumerate}

In the following subsections, we will test these CIs with different output functionals, including quantile, logistic regression, and stochastic simulation.

\subsection{Quantile Estimation}
We set $\psi(P)$ as the 0.7 quantile of $X\sim P$ where $P$ is standard log-normal.  We set the sample size as $n=3000$,
nominal level $\alpha=90\%$, and try each method under
different computational budgets ($K=6,12,17$) measured by the number
of times we can compute $\psi$. 
For each method, we replicate $10^5$ times to find its empirical coverages and half lengths. The results are shown in Table \ref{tab: numerics}. We can check that the empirical coverages are all close to the nominal coverage: the largest gap is observed for $CI_{OB}$ when $K=17$ where there is an overcoverage of about 1.3\%. This suggests that the sample size is large enough for our asymptotic argument to be true. If we compare the half lengths of different methods, we find that 1) the half lengths of $CI_B, CI_{B}^{(\gamma)}$, and $CI_{CB}$ are very close, which agrees with our Theorem 3 and the optimality of cheap bootstrap discussed at the end of Section 6. 2) the half length of $CI_{OB\text{-new}}$ is longer than $CI_B, CI_{B}^{(\gamma)}$, and $CI_{CB}$. For example, when $K=6$, the half length of $CI_{OB\text{-new}}$ is 0.082 where the half lengths of $CI_B, CI_{B}^{(\gamma)}$, and $CI_{CB}$ are all 0.078. This difference can be explained by its higher coverage, which is a result of finite-sample errors as we discussed. 3) The half length of $CI_{OB}$ is even longer than $CI_{OB\text{-new}}$. For example, when $K=12$, the half length of $CI_{OB}$ is longer than $CI_{OB\text{-new}}$ even though its coverage is smaller than $CI_{OB\text{-new}}$. This is expected as the construction in \cite{su2023overlapping} does not have the optimality guarantee in terms of CI accuracy as we do.

In addition, we consider another quantile estimation example where the goal is to estimate the 0.7 quantile of the stationary distribution of the system time for an M/M/1 model with arrival rate 1 and service rate 2. The sample size is still $n=3000$ as in the previous example, but there is a warm-up period with 1000 samples (i.e., we generate the system times for the first 4000 customers, but only use the last 3000 samples for the CI construction). The other settings are the same as the last example, but we will not run cheap bootstrap since we have dependent data. The results are shown in Table \ref{tab: quantile dependent}. Again, we can check that all of the coverages are close to the nominal level, which shows that the methods are also correct in the dependent case. For the comparisions among different batching methods: 1) $CI_{OB}$ also have the largest empirical expected half length for each value of $K$ tried. 2) The half length of $CI_{OB\text{-new}}$ is not uniformly longer than $CI_{B}$ and $CI_{B}^{(\gamma)}$ which was observed in the last experiment. Indeed, the half length of $CI_{OB\text{-new}}$ is larger than $CI_{B}$ and $CI_{B}^{(\gamma)}$ when $K=6$. But when $K$ increases to 17, the half length of $CI_{OB\text{-new}}$ becomes smaller than $CI_{B}$ and $CI_{B}^{(\gamma)}$. This suggests that the comparison among $CI_{B},CI_{B}^{(\gamma)}$ and $CI_{OB\text{-new}}$ can depend on problem parameters.

\begin{table}
\centering
\begin{tabular}{c|c|c|c|c|c|c}
\multirow{2}{*}{method} & \multicolumn{2}{c|}{$K=6$} & \multicolumn{2}{c|}{$K=12$} & \multicolumn{2}{c}{$K=17$}\tabularnewline
\cline{2-7} \cline{3-7} \cline{4-7} \cline{5-7} \cline{6-7} \cline{7-7} 
 & coverage & half length & coverage & half length & coverage & half length\tabularnewline
\hline 
$CI_{B}$ & 90.0\% & 0.078 & 90.0\% & 0.071 & 90.0\% & 0.070\tabularnewline
$CI_{B}^{(\gamma)}$ & 90.0\% & 0.078 & 89.9\% & 0.071 & 90.0\% & 0.070\tabularnewline

$CI_{CB}$ & 89.4\% & 0.078 & 89.6\% & 0.072 & 89.9\% & 0.071\tabularnewline
$CI_{OB\text{-new}}$ & 90.5\% & 0.082 & 91.1\% & 0.076 & 91.3\% & 0.075\tabularnewline
$CI_{OB}$ & 90.7\% & 0.086 & 90.3\% & 0.080 & 90.3\% & 0.079\tabularnewline
\end{tabular}

\caption{Empirical coverages and half lengths for each batching method. The uncertainty of these empirical estimates (measured by the 95\% CI half length) is less than 0.2\% for the empirical coverage and is less than $3\times 10^{-4}$ for the estimation of the half length. }
\label{tab: numerics}
\end{table}

\begin{table}
\centering
\begin{tabular}{c|c|c|c|c|c|c}
\multirow{2}{*}{method} & \multicolumn{2}{c|}{$K=6$} & \multicolumn{2}{c|}{$K=12$} & \multicolumn{2}{c}{$K=17$}\tabularnewline
\cline{2-7} \cline{3-7} \cline{4-7} \cline{5-7} \cline{6-7} \cline{7-7} 
 & coverage & half length & coverage & half length & coverage & half length\tabularnewline
\hline 
$CI_{B}$ & 89.7\% & 0.120 & 89.7\% & 0.114 & 89.3\% & 0.115\tabularnewline
$CI_{B}^{(\gamma)}$ & 89.7\% & 0.116 & 89.7\% & 0.116 & 89.1\% & 0.115\tabularnewline
$CI_{OB\text{-new}}$ & 90.1\% & 0.123 & 90.4\% & 0.113 & 90.3\% & 0.110\tabularnewline
$CI_{OB}$ & 90.5\% & 0.131 & 90.3\% & 0.121 & 90.0\% & 0.120\tabularnewline
\end{tabular}

\caption{Empirical coverages and half lengths for each batching method for the queueing model. The uncertainty of these empirical estimates (measured by the 95\% CI half length) is less than 0.3\% for the empirical coverage and is less than $3\times 10^{-4}$ for the estimation of the half length. }
\label{tab: quantile dependent}
\end{table}

\subsection{Logistic Regression}

Consider the model where $Y\in\{0,1\}$ and $P(Y=1|X)=e^{\beta^\top X}/(1+e^{\beta^\top X})$. $X$ is multivariate normal with mean 0 and $Cov(X_i,X_j)=0.01*0.8^{\left\vert i-j\right\vert},1\leq i,j\leq 10$. Let $\beta = (\beta_1,\beta_2,\dots,\beta_{10})$ where $\beta_1=\beta_2=\beta_3=1,\beta_4=\beta_5=\beta_6=-1$, and $\beta_7=\beta_8=\beta_9=\beta_{10}=0$. Let $P$ denote the joint distribution of $(X,Y)$ and $\psi(P)$ be the value of $\beta$ found through logistic regression. This represents a computationally more expensive model, for which we have a stronger reason to use low-computation methods (we are not able to run models that are too complicated though, as we need to replicate a sufficient number of times to estimate the coverage and length). To reduce estimation error, we construct a 90\% CI for each of $\beta_i,1\leq i\leq 10$ (not a simultaneous CI), and then evaluate the coverage and length by taking average. We set the sample size as $N=5\times 10^5$, and replicate the procedure for 4000 times to get empirical coverage and lengths. The results are given in Table \ref{tab: logistic}. The observations are similar to the previous examples, where the coverages are close to the nominal level, and all of $CI_{B},CI_{B}^{(\gamma)},CI_{OB\text{-new}},$ and $CI_{CB}$ have shorter empirical length than $CI_{OB}$. The difference is that, when $K$ is larger, $CI_{B}$ and $CI_{B}^{(\gamma)}$ have some undercoverage issue (e.g., when $K=17$, their coverages are 85.8\% and 84.5\% respectively, which are below the nominal level by about 5\%). This suggests that the problem is highly non-linear so that the estimators are not accurate when the batch size is not large enough (note that the batch sizes of $CI_{B}$ and $CI_{B}^{(\gamma)}$ are proportional to $K^{-1}$). On the other hand, the other methods ($CI_{OB\text{-new}},CI_{CB},CI_{OB}$) do not suffer from this issue, since their batch sizes do not reduce when $K$ increases.

\begin{table}
\centering
\begin{tabular}{c|c|c|c|c|c|c}
\multirow{2}{*}{method} & \multicolumn{2}{c|}{$K=6$} & \multicolumn{2}{c|}{$K=12$} & \multicolumn{2}{c}{$K=17$}\tabularnewline
\cline{2-7} \cline{3-7} \cline{4-7} \cline{5-7} \cline{6-7} \cline{7-7} 
 & coverage & half length & coverage & half length & coverage & half length\tabularnewline
\hline 
$CI_{B}$ & 89.5\% & 0.108 & 87.9\% & 0.097 & 85.8\% & 0.092\tabularnewline
$CI_{B}^{(\gamma)}$ & 89.3\% & 0.107 & 86.9\% & 0.094 & 84.5\% & 0.089\tabularnewline
$CI_{CB}$ & 90.1\% & 0.111 & 89.9\% & 0.102 & 89.7\% & 0.099\tabularnewline
$CI_{OB\text{-new}}$ & 89.7\% & 0.110 & 89.7\% & 0.101 & 89.6\% & 0.099\tabularnewline
$CI_{OB}$ & 90.5\% & 0.121 & 89.9\% & 0.112 & 89.8\% & 0.111\tabularnewline
\end{tabular}

\caption{Empirical coverages and half lengths for each batching method for the logistic regression example. The uncertainty of these empirical estimates (measured by the 95\% CI half length) is less than 0.4\% for the empirical coverage and is less than $5\times 10^{-4}$ for the estimation of the half length. }
\label{tab: logistic}
\end{table}

\subsection{Stochastic Simulation}

We consider a computer network model similar to \cite{cheng1997sensitivity,lin2015single,lam2022subsampling,ll2023}. There are four nodes and four channels as shown in Figure \ref{fig: network}. There are 12 types of messages corresponding to each distinct (ordered) pair of nodes with independent Possion arrival with rates shown in Table \ref{tab: network}. The length of each message (which is equal to the amount of time to transfer the message in any channel) has exponential distribution with mean 1/100. Each channel can only transfer one message at the same time. We assume that the arrival rates are known but we do not know the true distribution (denoted by $P$ for this example) of the length of message. We construct a CI for $\psi(P)$ defined as the average system time of the first 10 messages using a sample of $N=200$ empirical observations of message length. To reduce Monte-Carlo noise, for each evaluation of $\psi(\cdot)$ we simulate the network for 1000 times and then take the averaged output. We replicate 1000 times to estimate the empirical coverage and length. The results are shown in Table \ref{tab: network emp}. Again, we observe that the empirical coverage probabilities are close to the nominal probability, and that $CI_{OB}$ has the largest empirical length. Among the other four methods, $CI_{OB\text{-new}}$ has slightly larger half length when $K=6,12$ and its coverage is also larger. This could be caused by the finite-sample error or Monte Carlo noise.

\begin{figure}[htbp]
\centering
\includegraphics[width=0.45\textwidth]{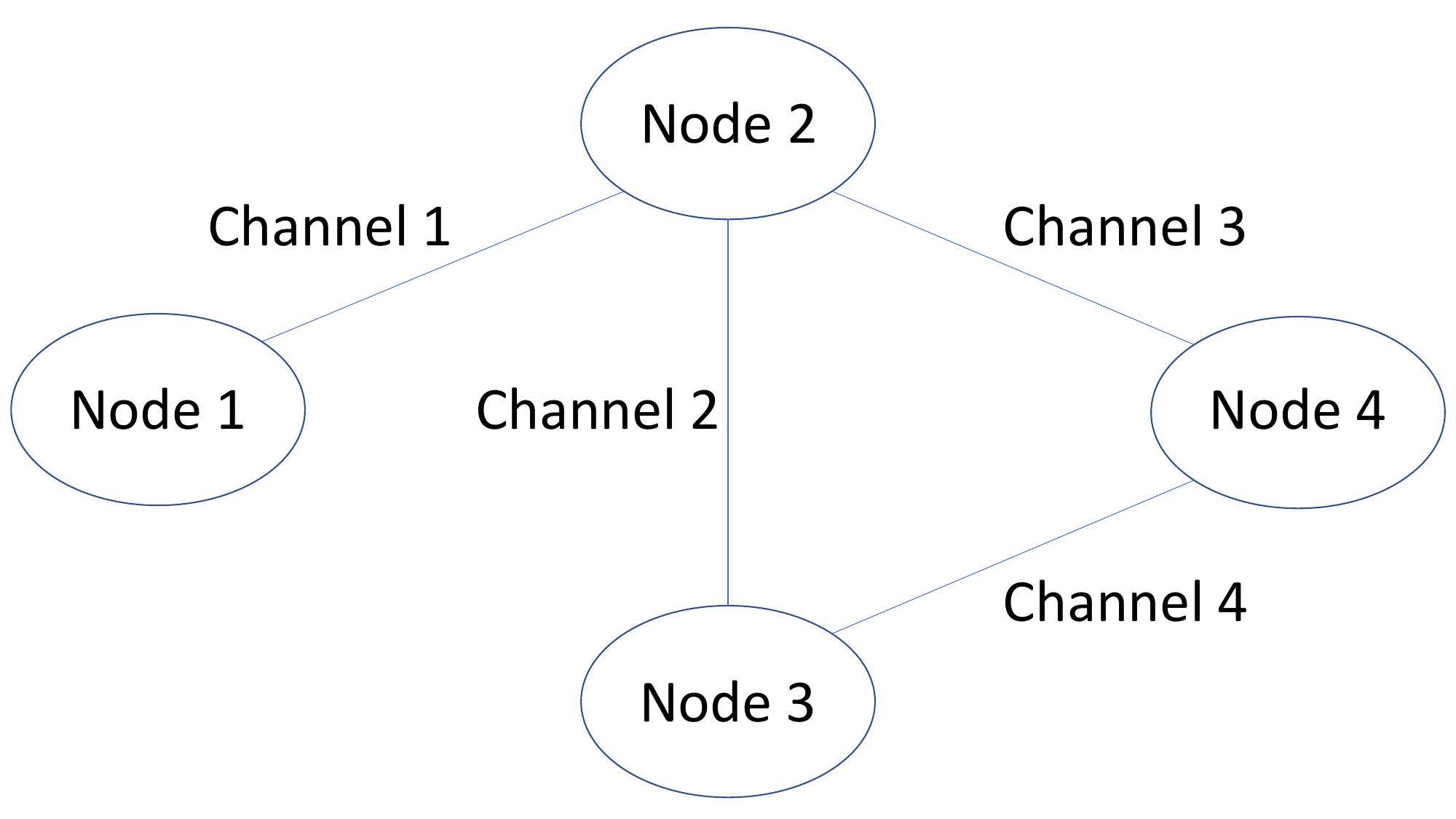}

\caption{A computer network with four nodes and four channels.}
\label{fig: network}
\end{figure}

\begin{table}
\centering
\begin{tabular}{|c|llll|}
\hline
              & \multicolumn{4}{c|}{Ending node}                                                 \\ \hline
Starting node & \multicolumn{1}{l|}{1}  & \multicolumn{1}{l|}{2}  & \multicolumn{1}{l|}{3}  & 4  \\ \hline
1             & \multicolumn{1}{l|}{NA} & \multicolumn{1}{l|}{40} & \multicolumn{1}{l|}{10} & 10 \\ \hline
2             & \multicolumn{1}{l|}{50} & \multicolumn{1}{l|}{NA} & \multicolumn{1}{l|}{45} & 15 \\ \hline
3             & \multicolumn{1}{l|}{10} & \multicolumn{1}{l|}{15} & \multicolumn{1}{l|}{NA} & 20 \\ \hline
4             & \multicolumn{1}{l|}{10} & \multicolumn{1}{l|}{30} & \multicolumn{1}{l|}{40} & NA \\ \hline
\end{tabular}
\caption{Arrival rates in the computer network.}
\label{tab: network}
\end{table}

\begin{table}
\centering
\begin{tabular}{c|c|c|c|c|c|c}
\multirow{2}{*}{method} & \multicolumn{2}{c|}{$K=6$} & \multicolumn{2}{c|}{$K=12$} & \multicolumn{2}{c}{$K=17$}\tabularnewline
\cline{2-7} \cline{3-7} \cline{4-7} \cline{5-7} \cline{6-7} \cline{7-7} 
 & coverage & half length & coverage & half length & coverage & half length\tabularnewline
\hline 
$CI_{B}$ & 90.2\% & 3.44E-3 & 88.9\% & 3.23E-3 & 89.1\% & 3.19E-3\tabularnewline
$CI_{B}^{(\gamma)}$ & 89.5\% & 3.48E-3 & 89.0\% & 3.21E-3 & 91.0\% & 3.27E-3\tabularnewline
$CI_{CB}$ & 89.3\% & 3.48E-3 & 88.9\% & 3.16E-3 & 88.9\% & 3.10E-3\tabularnewline
$CI_{OB\text{-new}}$ & 91.5\% & 3.62E-3 & 91.1\% & 3.33E-3 & 90.7\% & 3.25E-3\tabularnewline
$CI_{OB}$ & 92.3\% & 3.83E-3 & 90.2\% & 3.53E-3 & 90.0\% & 3.48E-3\tabularnewline
\end{tabular}

\caption{Empirical coverages and half lengths for each batching method for the stochastic simulation example. The uncertainty of these empirical estimates (measured by the 95\% CI half length) is less than 1.8\% for the empirical coverage and is less than $7\times 10^{-5}$ for the estimation of the half length. }
\label{tab: network emp}
\end{table}

\section{Conclusion and Future Works}\label{sec: future work}

In this paper, we study the construction of CIs for expensive black-box models that arise in large-scale estimation and simulation tasks. In this setting, we focus on low-computation CIs where we can only evaluate the target objective for a limited number of times. Under this computational budget, we create a framework to analyze the statistical optimality of CIs, in the sense of asymptotic interval shortness among the class of unbiased homogeneous CIs. Methodologically, we provide a systematic reduction of this problem into the attainment of asymptotic UMA intervals for Gaussian models, and and bridge the concept of asymptotic UMA with the minimization of the expected interval width. We show that standard batching, batching with uneven size, batched jackknife, the cheap bootstrap, and the weighted cheap bootstrap are all statistically optimal and asymptotically equivalent. Moreover, for overlapping batching, we propose a new formula that attains statistical optimality and gives asymptotically shorter CIs than methods proposed in the existing literature. 

We view this work as a foundation to initiate the study on the fundamental problem of uncertainty quantification for expensive computation models. There are several important immediate directions. One direction is to extend to the case where the outputs are multidimensional and the goal is to construct a confidence region for the output. If we follow the approach in this paper, the question would be equivalent as constructing a UMP hypothesis test where the null hypothesis involves multiple parameters. 

In addition, the setup in this paper corresponds to the case where the computation cost does not depend on the input data size. For example, the computation cost for evaluating $\psi(\hat{P}_n)$ and $\psi(\hat{P}_n^i)$ (recall that $\hat{P}_n^i,i=1,2,\dots,K$ are the batches in standard batching) are regarded as the same in our analysis. A more general setting is that the computation cost is affected jointly by the data size (in each batch/subsample) and the number of model evaluations. Finding the optimal CI under this  setting is another future direction.

Moreover, in this paper, we have assumed that there is no error in evaluating $\psi(\cdot)$, or that the error is negligible via the use of heavy computation per model run. As a generalization, one can also use less computation per run and allow for computational noise in evaluating $\psi(\cdot)$. The question is then what CI is optimal when we have both data noise and computation noise.

Lastly, we discuss potential alternatives to the restriction on the homogeneous class imposed in this paper. A more standard approach to describe the lack of knowledge on the true variability is to introduce an ambiguity set for all of the possible distributions (with different variability), and define the optimality in a minimax sense.  In the large-sample setting, to take care of the asymptotic scaling, the set of distributions should be replaced by a set of local alternative distribution sequences around the true distribution. For example, in the parametric setting, the local alternative distributions can be those with the true parameter perturbed by a small amount (say of order $n^{-1/2}$ where $n$ is the sample size). For the nonparametric setting considered in this paper, however, it is not clear how to explicitly construct these local alternative distribution sequences, since the form of the ambiguity set could depend on particular problem structures. Because of this, our approach in this paper is to introduce class homogeneity. Nonetheless, if we want to study other CIs that are not necessarily homogeneous, we may need to follow the more standard approach discussed above. In addition to the aforementioned challenge in constructing local alternative distributions, we also need to justify triangular-array CLTs under these distributions, which seems challenging especially when we study overlapping batching or the cheap bootstrap. This direction comprises another important line of future investigation.

\section*{Acknowledgements}
We gratefully acknowledge support from the National Science Foundation under grants CAREER CMMI-1834710 and IIS-1849280. A preliminary conference version of this work has appeared in \cite{he2023optimal}.

\bibliographystyle{informs2014}
\bibliography{references}

\newpage

\setcounter{equation}{0}
\setcounter{subsection}{0}
\setcounter{assumption}{0}
\setcounter{theorem}{1}
\setcounter{proposition}{0}
\renewcommand{\theequation}{A.\arabic{equation}}
\renewcommand{\thelemma}{A.\arabic{lemma}}
\renewcommand{\thesubsection}{\thesection.\arabic{subsection}}
\renewcommand{\theassumption}{A.\arabic{assumption}}
\renewcommand{\thetheorem}{A.\arabic{theorem}}
\renewcommand{\theproposition}{A.\arabic{proposition}}
\renewcommand{\thesection}{A}
\section{Appendix}
This appendix contains the proofs of all statements in the paper (Appendices \ref{subsec: proof}--\ref{subsec: CLT weighted}), some useful existing results (Appendix \ref{sec:useful}) and additional technical lemmas and derivations (Appendices \ref{sec:technical}--\ref{app: dev var}).

\subsection{Proofs of Statements}\label{subsec: proof}
\subsubsection{Proof of Lemma \ref{lem: unbiased example}}
\begin{proof}{}
    For any $\delta$, we have that 
    \[
    P\left(\psi(P)+n^{-1/2}\delta\in\psi_{n}\pm n^{-1/2}qA_n\right)=P\left(\left|\sqrt{n}\left(\psi_{n}-\psi(P)\right)\right|\in\delta\pm qA_n\right)\to P\left(\left|Z\right|\in\delta\pm qA\right).\]
From the independence between $Z$ and $A$, we have that for any $a\in R$, $P\left(\left|Z\right|\in\delta\pm qA|A=a\right) = P\left(\left|Z\right|\in\delta\pm qa\right)\leq P\left(\left|Z\right|\in 0\pm qa\right)$. Here, the last inequality follows from the fact that for the standard normal distribution function $\Phi$ and any $a\geq 0$, $\Phi(x+a)-\Phi(x-a)$ is maximized at $x=0$. Therefore, taking expectation, we get that $P\left(\left|Z\right|\in\delta\pm qA\right)\leq P\left(\left|Z\right|\in 0\pm qA\right).$ Therefore, from the above displayed limit, we get that 
$\lim_{n\to\infty}P(\psi(P)+n^{-1/2}\delta\in C)\leq1-\alpha$, which is the desired result. 
\hfill{$\Box$}\end{proof}

\subsubsection{Proof of Theorem \ref{thm: UMA to length}}

\begin{proof}{} 
    Observe that by Fubini (applicable since the integrand is
nonnegative)
\begin{align}
\sqrt{n}\mathbb{E}\left[U\left(\mathbf{Y}_{n}\right)-L\left(\mathbf{Y}_{n}\right)\right] & =\sqrt{n}\int1\{L\left(\mathbf{Y}_{n}\right)\leq x\leq U\left(\mathbf{Y}_{n}\right)\}dxdP\nonumber\\ 
 & =\sqrt{n}\int1\{L\left(\mathbf{Y}_{n}\right)\leq x\leq U\left(\mathbf{Y}_{n}\right)\}dPdx\nonumber\\ 
 & =\sqrt{n}\int P\left(L\left(\mathbf{Y}_{n}\right)\leq x\leq U\left(\mathbf{Y}_{n}\right)\}\right)dx \nonumber\\ 
(x=\psi(P)+n^{-1/2}\delta) & =\int P\left(L\left(\mathbf{Y}_{n}\right)\leq\psi(P)+n^{-1/2}\delta\leq U\left(\mathbf{Y}_{n}\right)\}\right)d\delta \label{eq: Fubuni}
\end{align}
Similarly, we have that 
\[
\sqrt{n}\mathbb{E}\left[U_1\left(\mathbf{Y}_{n}\right)-L_1\left(\mathbf{Y}_{n}\right)\right] = \int P\left(L_1\left(\mathbf{Y}_{n}\right)\leq\psi(P)+n^{-1/2}\delta\leq U_1\left(\mathbf{Y}_{n}\right)\}\right)d\delta. 
\]
By Fatou's lemma, from the above we get that 
\begin{align}
\liminf_{n\to\infty} \sqrt{n}\mathbb{E}\left[U_1\left(\mathbf{Y}_{n}\right)-L_1\left(\mathbf{Y}_{n}\right)\right] & \geq \int  \lim_{n\to\infty} P\left(L_1\left(\mathbf{Y}_{n}\right)\leq\psi(P)+n^{-1/2}\delta\leq U_1\left(\mathbf{Y}_{n}\right)\}\right)d\delta \\
& \geq \int\lim_{n\to\infty} P\left(L\left(\mathbf{Y}_{n}\right)\leq\psi(P)+n^{-1/2}\delta\leq U\left(\mathbf{Y}_{n}\right)\}\right)d\delta
\end{align}
Here, the last inequality follows from the definition of asymptotic UMA unbiasedness. Also note that we replaced $\liminf_{n\to\infty}$ with $\lim_{n\to\infty}$ on the RHS since the limits are assumed to exist. From \eqref{eq: Fubuni}, it suffices to show that 
\[
\lim_{n\to\infty}\int P\left(L\left(\mathbf{Y}_{n}\right)\leq\psi(P)+n^{-1/2}\delta\leq U\left(\mathbf{Y}_{n}\right)\}\right)d\delta = \int\lim_{n\to\infty} P\left(L\left(\mathbf{Y}_{n}\right)\leq\psi(P)+n^{-1/2}\delta\leq U\left(\mathbf{Y}_{n}\right)\}\right)d\delta
\]
We will show that 
\begin{equation}\label{eq: int convergence}
\lim_{n\to\infty}\int_{0}^{\infty}P\left(L\left(\mathbf{Y}_{n}\right)\leq\psi(P)+n^{-1/2}\delta\leq U\left(\mathbf{Y}_{n}\right)\}\right)d\delta = \int_{0}^{\infty}\lim_{n\to\infty} P\left(L\left(\mathbf{Y}_{n}\right)\leq\psi(P)+n^{-1/2}\delta\leq U\left(\mathbf{Y}_{n}\right)\}\right)d\delta
\end{equation}
and the integration from $-\infty$ to $0$ will follow from a similar argument. From Markov's inequality, since $\{\sqrt{n}(U(\mathbf{Y}_n)-\psi(P))_{+}\}_{n=1,2,\dots}$ has $1+\epsilon$ moment bounded by $M$, we have that $P\left(L\left(\mathbf{Y}_{n}\right)\leq\psi(P)+n^{-1/2}\delta\leq U\left(\mathbf{Y}_{n}\right)\}\right)\leq P\left(\delta\leq \sqrt{n}(U(\mathbf{Y}_n)-\psi(P))_{+}\}\right)\leq M\delta^{-(1+\epsilon)}$. Therefore, $P\left(L\left(\mathbf{Y}_{n}\right)\leq\psi(P)+n^{-1/2}\delta\leq U\left(\mathbf{Y}_{n}\right)\}\right)\leq g(\delta):=\begin{cases}
1 & \delta\leq1\\
M\delta^{-(1+\epsilon)} & \delta>1
\end{cases}$. Since $g(\cdot)$ is integrable, from dominated convergence, we have that \eqref{eq: int convergence} holds. 
\hfill{$\Box$}\end{proof}

\subsubsection{Proof of Theorem \ref{thm: std batching}}

\begin{proof}{}
Consider any asymptotically unbiased level $1-\alpha$ test $C$.
We have that
\begin{align*}
 & P(\psi(P)+n^{-1/2}\delta\in C(\mathbf{Y}_{n,B}))\\
= & P(L(\mathbf{Y}_{n,B})\leq\psi(P)+n^{-1/2}\delta\leq U(\mathbf{Y}_{n}))\\
(\text{denote \ensuremath{\mathbf{Z}_{n}=\sqrt{n}(\mathbf{Y}_{n}-\psi(P)1_{K})}})= & P(L(n^{-1/2}\mathbf{Z}_{n}+\psi(P)1_{K})\leq\psi(P)+n^{-1/2}\delta\leq U(n^{-1/2}\mathbf{Z}_{n}+\psi(P)1_{K}))\\
(\text{by the last condition of }L)= & P(L(n^{-1/2}\mathbf{Z}_{n})\leq n^{-1/2}\delta\leq U(n^{-1/2}\mathbf{Z}_{n}))\\
= & P\left(L(\mathbf{Z}_{n})\leq\delta\leq U(\mathbf{Z}_{n})\right)
\end{align*}
Therefore, by continuity of $L,U$ and the CLT for $\mathbf{Z}_{n}$
(technically, we also need that $P_{\mathbf{Z}\sim N(0,\sigma^{2}I)}(L\left(\mathbf{Z}\right)=\delta)=0$.
This is actually implied by our assumption that $L\left(x+c1_{K}\right)=L(x)+c$.
To see this, suppose on the contrary that there exists $\delta$ such
that $P_{\mathbf{Z}\sim N(0,\sigma^{2}I)}(L\left(\mathbf{Z}\right)=\delta)=c>0$,
then for any $\epsilon$, we have that $P_{\mathbf{Z}\sim N(\epsilon1_{K},\sigma^{2}I)}\left(L\left(\mathbf{Z}\right)=\delta+\epsilon\right)=P_{\mathbf{Z}\sim N(0,\sigma^{2}I)}\left(L\left(\mathbf{Z}\right)=\delta\right)=c$.
Note that for any $M>1$, we can find $\epsilon_{M}>0$ such that
when $\epsilon\in(0,\epsilon_{M})$, the likelihood ratio between
$N(\epsilon1_{K},\sigma^{2}I)$ and $N(0,\sigma^{2}I)$ can be bounded
within $\left(M^{-1},M\right)$. Therefore, $P_{\mathbf{Z}\sim N(0,\sigma^{2}I)}\left(L\left(\mathbf{Z}\right)=\delta+\epsilon\right)\geq P_{\mathbf{Z}\sim N(\epsilon1_{K},\sigma^{2}I)}\left(L\left(\mathbf{Z}\right)=\delta+\epsilon\right)/M=c/M$
for any $\epsilon\in(0,\epsilon_{M})$, which is a contradiction since
$(0,\text{\ensuremath{\epsilon_{M}}})$ contains infinite numbers)
\begin{align*}
\lim_{n\to\infty}P\left(L(\mathbf{Z}_{n})\leq\delta\leq U(\mathbf{Z}_{n})\right) & =P_{\mathbf{Z}\sim N(0,\sigma^{2}I)}(L(\mathbf{Z})\le\delta\leq U(\mathbf{Z}))\\
 & =P_{\mathbf{Z}\sim N(-\delta1_{K},\sigma^{2}I)}(L(\mathbf{Z})\le0\leq U(\mathbf{Z}))\\
 & =P_{\mathbf{Z}\sim N(-t\delta1_{K},t^{2}\sigma^{2}I)}(L(\mathbf{Z})\le0\leq U(\mathbf{Z})),\forall t\in\mathbb{R}
\end{align*}
where we used the homogenity of $L,U$ again in the last two equalities.

Therefore, the unbiasedness requires that for any $\delta\neq0$ and
$t\neq0$,
\[
P_{\mathbf{Z}\sim N(-t\delta1_{K},t^{2}\sigma^{2}I)}(L(\mathbf{Z})\le0\leq U(\mathbf{Z}))\leq1-\alpha
\]
and that when $\delta=0$, 
\[
P_{\mathbf{Z}\sim N(0_{K},t^{2}\sigma^{2}I)}(L(\mathbf{Z})\le0\leq U(\mathbf{Z}))\geq1-\alpha.
\]
Notice that for any $\mu,\sigma^{2}$, we can find $t,\delta$ such
that $(-t\delta1_{K},t^{2}\sigma^{2}I)=(\mu1_{K},\sigma^{2}I)$. Therefore,
from the above, we have that the test with rejection region $\{z:L(z)\leq0\leq U(z)\}^{c}$
is unbiased for testing $H_{0}:\mu=0$ in the family $\{N(\mu1_{K},\sigma^{2}I):\mu\in R,\sigma^{2}>0\}$.
From Section 5.2 of Lehmann, we know that a UMP unibased test is given
by rejection region $\{z:L_{B}(z)\leq0\leq U_{B}(z)\}^{c}$. Therefore,
from the definition of UMP unbiased test, we get that for any $\delta\neq0$,
\[
P_{Z\sim N(-\delta1_{K},\sigma^{2}I)}(L(\mathbf{Z})\le0\leq U(\mathbf{Z}))\geq P_{Z\sim N(-\delta1_{K},\sigma^{2}I)}(L_{B}(\mathbf{Z})\leq0\leq U_{B}(\mathbf{Z}))
\]
which implies
\begin{equation}
\lim_{n\to\infty}P\left(\psi(P)+n^{-1/2}\delta\in C(\mathbf{Y}_{n})\right)\geq\lim_{n\to\infty}P\left(\psi(P)+n^{-1/2}\delta\in C_{B}(\mathbf{Y}_{n})\right).\label{eq: cover ineq}
\end{equation}
This gives the desired result.

As seen in \eqref{eq: Fubuni}, we have that
\begin{align*}
\sqrt{n}\mathbb{E}\left[U\left(\mathbf{Y}_{n}\right)-L\left(\mathbf{Y}_{n}\right)\right] &  =\int P\left(L\left(\mathbf{Y}_{n}\right)\leq\psi(P)+n^{-1/2}\delta\leq U\left(\mathbf{Y}_{n}\right)\}\right)d\delta
\end{align*}
Then, by Fatou's lemma and (\ref{eq: cover ineq}), we have that
\begin{equation}
\liminf_{n\to\infty}\sqrt{n}\mathbb{E}\left[U\left(\mathbf{Y}_{n}\right)-L\left(\mathbf{Y}_{n}\right)\right]\geq\int P\left(L_{B}\left(\mathbf{Z}\right)\leq\psi(P)+n^{-1/2}\delta\leq U_{B}\left(\mathbf{Z}\right)\}\right)d\delta=\sqrt{n}\mathbb{E}_{\mathbf{Z}\sim N(0,\sigma^{2}I)}\left[U_{B}\left(\mathbf{Z}\right)-L_{B}\left(\mathbf{Z}\right)\right]\label{eq: lim inf length}
\end{equation}
On the other hand, for the length of the batching CI, we have that
\begin{align}
\sqrt{n}\mathbb{E}\left[U_{B}\left(\mathbf{Y}_{n}\right)-L_{B}\left(\mathbf{Y}_{n}\right)\right] & =\sqrt{n}t_{K-1;1-\alpha/2}\mathbb{E}S/\sqrt{K}\nonumber \\
 & =\frac{t_{K-1;1-\alpha/2}}{\sqrt{K(K-1)}}\mathbb{E}\sqrt{n\sum_{j=1}^{K}\left(\mathbf{Y}_{n,j}-\bar{\mathbf{Y}}_{n}\right)^{2}}\nonumber \\
 & =\frac{t_{K-1;1-\alpha/2}}{\sqrt{K(K-1)}}\mathbb{E}\sqrt{\sum_{j=1}^{K}\left(\mathbf{Z}_{n,j}-\bar{\mathbf{Z}}_{n}\right)^{2}}\label{eq: length finite}
\end{align}
Here, $\mathbf{Z}_{n,j},j=1,2,\dots,K$ denotes the $K$-th element
of $\mathbf{Z}_{n}$. Note that $\sqrt{\sum_{j=1}^{K}\left(\mathbf{Z}_{n,j}-\bar{\mathbf{Z}}_{n}\right)^{2}}\leq\sqrt{K}\sum_{j=1}^{K}\left|\mathbf{Z}_{n,j}-\bar{\mathbf{Z}}_{n}\right|\leq K\sqrt{K}\sum_{j=1}^{K}\left|\mathbf{Z}_{n,j}\right|$.
From the assumption, we have the uniform integrability of $\mathbf{Z}_{n,j}$
for any $j=1,2,\dots,K$. Therefore, we have the uniform integrability
of $\sqrt{\sum_{j=1}^{K}\left(\mathbf{Z}_{n,j}-\bar{\mathbf{Z}}_{n}\right)^{2}}$.
Therefore,
\[
\mathbb{E}\sqrt{\sum_{j=1}^{K}\left(\mathbf{Z}_{n,j}-\bar{\mathbf{Z}}_{n}\right)^{2}}\to \mathbb{E}_{\mathbf{Z}\sim N(0,\sigma^{2}I)}\sqrt{\sum_{j=1}^{K}\left(\mathbf{Z}_{j}-\bar{\mathbf{Z}}\right)^{2}}.
\]
Then from (\ref{eq: length finite}), we have that
\begin{align*}
 & \sqrt{n}\mathbb{E}\left[U_{B}\left(\mathbf{Y}_{n}\right)-L_{B}\left(\mathbf{Y}_{n}\right)\right]\\
\to & \frac{t_{K-1;1-\alpha/2}}{\sqrt{K(K-1)}}\mathbb{E}_{\mathbf{Z}\sim N(0,\sigma^{2}I)}\sqrt{\sum_{j=1}^{K}\left(\mathbf{Z}_{j}-\bar{\mathbf{Z}}\right)^{2}}\\
= & \sqrt{n}\mathbb{E}_{\mathbf{Z}\sim N(0,\sigma^{2}I)}\left[U_{B}\left(\mathbf{Z}\right)-L_{B}\left(\mathbf{Z}\right)\right].
\end{align*}
 Comparing with (\ref{eq: lim inf length}), we get the claim on the
length. 
\hfill{$\Box$}\end{proof}

\subsubsection{Proof of Lemma \ref{lem: corr formula}}

\begin{proof}{}
    Recall that $\hat{P}_{n}^j$ denotes the empirical distribution of the data in the
$j$-th batch. From the assumption, we have that 
\[
\sqrt{n}\left(\psi\left(\hat{P}_{n}^j\right)-\psi\left(P\right)-\mathbb{E}_{\hat{P}_{n}^j}G(X)\right)\to0.
\]
Therefore, 
\begin{equation}\label{eq: IF approx}
\sqrt{n}\left(\mathbf{Y}_{n,OB}-\psi\left(P\right)1_{K}-\mathbf{G}\right)\to0_{K}
\end{equation}
where $\mathbf{G}=\left(\mathbb{E}_{\hat{P}_{n}^1}G(X),\dots,\mathbb{E}_{\hat{P}_{n}^K}G(X)\right)$.
Since each batch has the form of $\cup_{i=1}^{m}\{X_{a_{i}n+1},X_{a_{i}n+2},\dots,X_{b_{i}n}\}$,
there exists $0\leq a_{1}^{\prime}\leq b_{1}^{\prime}\leq a_{2}^{\prime}\dots\leq a_{M}^{\prime}\leq b_{M}^{\prime}\leq1$
such that batch $j$ is $\cup_{i\in S_{j}}\{X_{a_{i}n+1},X_{a_{i}n+2},\dots,X_{b_{i}n}\}$
where $S_{j}$ is a subset of $\{1,2,\dots,M\}$. Let $\tilde{P}_{i}$
denote the empirical distribution of $\{X_{a_{i}n+1},X_{a_{i}n+2},\dots,X_{b_{i}n}\}$.
Then, from CLT, we have that $\left(\sqrt{n}\mathbb{E}_{\tilde{P}_{i}}G(X)\right)_{1\leq i\leq M}$
converges to a multivariate normal distribution with independent elements.
Noting that $\mathbf{G}$ is a linear transformation of $\left(\mathbb{E}_{\tilde{P}_{i}}G(X)\right)_{1\leq i\leq M}$,
we also have that $\sqrt{n}\mathbf{G}\Rightarrow N(0,\Sigma)$ for some
$\Sigma$. To find $\Sigma$, it suffices to study the covariances
for elements of $\mathbf{G}$. Suppose that $Var_{P}G(X)=\sigma^{2}$.
Then, we have that $Var\left(\sqrt{n}\mathbb{E}_{\hat{P}_{n}^j}G(X)\right)=\frac{n}{n\gamma_{j}^{2}}Var_P\left(G(X_{1})+\dots+G(X_{\gamma_{j}n}\right)=\frac{\sigma^2}{\gamma_{j}}$.
Moreover, $Cov\left(\sqrt{n}\mathbb{E}_{\hat{P}_{n}^j}G(X),\sqrt{n}\mathbb{E}_{\hat{P}_{n}^k}G(X)\right)=\frac{n}{n\gamma_{j}\gamma_{k}}\sum_{i=1}^{\beta_{ij}n}Var_{P}G(X_{i})=\frac{\sigma^{2}\beta_{ij}}{\gamma_{i}\gamma_{j}}$
(note that since $G(X_{i})$'s are independent, only the overlapping
part of the two batches will contribute to the covariance). Therefore,
$\sqrt{n}\mathbf{G}\Rightarrow N(0,\sigma^2 V)$, which implies the derised claim by Slutsky's theorem and \eqref{eq: IF approx}.
\hfill{$\Box$}\end{proof}

\subsubsection{Proof of Theorem \ref{thm: general batching}}

\begin{proof}{}
Replacing
$\sigma^{2}I$ with $\sigma^{2}\Sigma$ in the proof of Theorem \ref{thm: std batching},
we get that for any $\mu\neq0$ and $\sigma^{2}$, 
\[
P_{\mathbf{Z}\sim N(-\mu1_{K},\sigma^{2}\Sigma)}(L(\mathbf{Z})\le0\leq U(\mathbf{Z}))\leq1-\alpha
\]
and that 
\[
P_{\mathbf{Z}\sim N(0_{K},t^{2}\sigma^{2}\Sigma)}(L(\mathbf{Z})\le0\leq U(\mathbf{Z}))\geq1-\alpha.
\]
Therefore, the test with rejection region $\{z:L(z)\leq0\leq U(z)\}^{c}$
is unbiased for testing $H_{0}:\mu=0$ in the family $\{N(\mu1_{K},\sigma^{2}\Sigma):\mu\in R,\sigma^{2}>0\}$.
Then applying Theorem \ref{thm: UMP normal} (see Appendix \ref{subsec: UMP normal}), we get that the test
given by rejection region $\{z:L_{GS}(z)\leq0\leq U_{GS}(z)\}^{c}$
where $\left[L_{GS},U_{GS}\right]=CI_{GS}$ is UMP unbiased. Therefore,
again using the unbiasedness of $[L,U]$ as in Theorem \ref{thm: std batching},
we get that 
\begin{equation}
\lim_{n\to\infty}P\left(\psi(P)+n^{-1/2}\delta\in C(\mathbf{Y}_{n,GS})\right)\geq\lim_{n\to\infty}P\left(\psi(P)+n^{-1/2}\delta\in C_{GS}(\mathbf{Y}_{n,GS})\right).\label{eq: cover ineq general}
\end{equation}
Noting that $\sqrt{\frac{1}{\lambda}\left(\mathbf{Y}_{n,GS}-\frac{1_{K}^{\top}\Sigma^{-1}\mathbf{Y}_{n,GS}}{\lambda}1_{K}\right)^{\top}\Sigma^{-1}\left(\mathbf{Y}_{n,GS}-\frac{1_{K}^{\top}\Sigma^{-1}\mathbf{Y}_{n,GS}}{\lambda}1_{K}\right)}$
can also be bounded by a linear combination of $\left|\mathbf{Y}_{n,GS}\right|$,
we also have its uniform integrability, so the same proof as Theorem
\ref{thm: std batching} gives the claim on the interval length.
\hfill{$\Box$}\end{proof}

\subsubsection{Proof of Theorem \ref{thm: global}}

\begin{proof}{}
    First, we study the expected length of $CI_{GS}^{(\Sigma)}$. As shown in the proof of Theorem \ref{thm: general batching}, when $\sqrt{n}(\mathbf{Y}_n-\psi(P))\Rightarrow N(0,\sigma^2\Sigma)$ and the uniform integrability holds, we have the following asymptotic for the length
\begin{align}
 & \sqrt{n}\sqrt{\frac{1}{\lambda(K-1)}\left(\mathbf{Y}_{n,OB}-\frac{1_{K}^{\top}\Sigma^{-1}\mathbf{Y}_{n,OB}}{\lambda}1_{K}\right)^{\top}\Sigma^{-1}\left(\mathbf{Y}_{n,OB}-\frac{1_{K}^{\top}\Sigma^{-1}\mathbf{Y}_{n,OB}}{\lambda}1_{K}\right)}\notag\\
\Rightarrow & \sqrt{\frac{1}{\lambda(K-1)}\left(Z-\frac{1_{K}^{\top}\Sigma^{-1}Z}{\lambda}1_{K}\right)^{\top}\Sigma^{-1}\left(Z-\frac{1_{K}^{\top}\Sigma^{-1}Z}{\lambda}1_{K}\right)}\label{length comparison convergence}
\end{align}
where $Z\sim N(0,\sigma^2\Sigma)$. The proof of Theorem \ref{thm: UMP normal} implies that the limit \eqref{length comparison convergence} is equal in distribution to $\sigma\sqrt{\frac{\chi^2_{K-1}}{\lambda(K-1)}}$. Therefore, the asymptotic expected length of $CI_{GS}^{(\Sigma)}$ is proportional to $\sigma^2/\lambda$ as stage 1 varies. 

Now it suffices to study the condition for $\sigma^2/\lambda$ to achieve its minimum. Note that $1/\lambda$ is the optimal value of the optimization problem
\[
\min w^\top \Sigma w\ \ s.t.\ \ 1_K^\top w = 1.
\]
In other words, $\sigma^2/\lambda$ is the smallest possible variance among all possible affine combinations of $Z$. On the other hand, by the construction of this problem, we know that for any affine combination of $\mathbf{Y}_{n,OB}$, its asymptotic variance should be larger than or equal to the asymptotic variance of $\psi(\hat{P}_n)$ (which is $\sigma_0^2$). Therefore, $\sigma^2/\lambda\geq \sigma_0^2$, and the equality is achieved when there exists $w\in\mathbb{R}^K$ such that $1^\top w=1$ and $\sqrt{n}(w^\top\mathbf{Y}_n-\psi(P))\Rightarrow N(0,\sigma_0)$. Therefore, we get the dersired claim. 
\hfill{$\Box$}\end{proof}

\subsubsection{Proof of Corollary \ref{cor: equivalence} }

\begin{proof}{}
    To see the first part of the theorem, it suffices to check that all of $\mathbf{Y}_{n,B},\mathbf{Y}_{n,B}^{(\gamma)},\mathbf{Y}_{n,SJ},\mathbf{Y}_{n,CB},\mathbf{Y}_{n,OB}$ can asymptotically represent $\psi(\hat{P}_n)$ as an affine combination. For $\mathbf{Y}_{n,CB}$ and $\mathbf{Y}_{n,OB}$, this is true since they include $\psi(\hat{P}_n)$ as one of its coordinates. For $\mathbf{Y}_{n,B}$ and $\mathbf{Y}_{n,SJ}$, by symmetry, it is not hard to check that their average would asymptotically represent $\psi(\hat{P}_n)$. For $\mathbf{Y}_{n,B}^{(\gamma)}$, we can choose $w=(\gamma_1, \gamma_2, \dots, \gamma_K)$. This implies the first part of the theorem. 

    To see the asymptotic equivalence, noting that the point estimator in Theorem \ref{thm: general batching} is the min-variance estimator (since its variance is $\sigma^2/\lambda$, which equals the smallest possible variance we derived in the proof of Theorem \ref{thm: global}), we have that the point estimator of all these CIs differs from $\psi(\hat{P}_n)$ by $o(n^{-1/2})$. The claim on the distribution of asymptotic length follows directly from the proof of Theorem \ref{thm: global}.\hfill{$\Box$}
\end{proof}

\subsection{Estimating the Mean of Normal Distribution}\label{subsec: UMP normal}
\begin{theorem}
\label{thm: UMP normal}Given $Z=(Z_{1},\dots,Z_{K})^{\top}\sim N(\mu1_{K},\sigma^{2}\Sigma)$
where $\mu,\sigma^{2}$ belongs to a convex uncertainty set which
is not a subset of linear space of dimension $<2$, the UMP unbiased
hypothesis test for $H_{0}:$$\mu=\mu_{0}$ against $H_{1}:\mu\neq\mu_{0}$
is given by rejection region
\[
\left\{ z\in\mathbb{R}^{K}:\left|\frac{1_{K}^{\top}\Sigma^{-1}\left(z-\mu_{0}1_{K}\right)/\lambda}{\sqrt{\frac{1}{\lambda(K-1)}}\sqrt{\left(z-\frac{1_{K}^{\top}\Sigma^{-1}z}{\lambda}1_{K}\right)^{\top}\Sigma^{-1}\left(z-\frac{1_{K}^{\top}\Sigma^{-1}z}{\lambda}1_{K}\right)}}\right|>t_{K-1;1-\alpha/2}\right\} 
\]
where $\lambda=1_{K}^{\top}\Sigma^{-1}1_{K}$.
\end{theorem}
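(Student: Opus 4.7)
The plan is to reduce this two-sided test to the classical one-sample $t$-test via a linear change of variables, and then invoke the UMP unbiased property established in Section 5.2 of \cite{lehmann2005testing}. First I set $v := \Sigma^{-1/2}1_{K}/\sqrt{\lambda}$, which is a unit vector since $\|\Sigma^{-1/2}1_{K}\|^{2}=\lambda$, and complete it to an orthonormal basis through an orthogonal matrix $U=[v,u_{2},\ldots,u_{K}]$. Define $\tilde{Z}:=U^{\top}\Sigma^{-1/2}Z$, so that $\tilde{Z}\sim N(\mu\sqrt{\lambda}\,e_{1},\sigma^{2}I_{K})$, where $e_{1}$ is the first standard basis vector. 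Partition $\tilde{Z}$ into its first component $\tilde{Z}_{1}$ and remaining components $\tilde{Z}_{*}:=(\tilde{Z}_{2},\ldots,\tilde{Z}_{K})^{\top}$ to obtain independent variables $\tilde{Z}_{1}\sim N(\mu\sqrt{\lambda},\sigma^{2})$ and $\tilde{Z}_{*}\sim N(0,\sigma^{2}I_{K-1})$.

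The second step is to verify that the rejection region in the theorem is equivalent to $\{|\tilde{Z}_{1}-\mu_{0}\sqrt{\lambda}|/S>t_{K-1;1-\alpha/2}\}$, where $S^{2}:=\|\tilde{Z}_{*}\|^{2}/(K-1)$. This follows from the identities $1_{K}^{\top}\Sigma^{-1}Z/\lambda=\tilde{Z}_{1}/\sqrt{\lambda}$ and, using $\Sigma^{-1/2}(Z-(1_{K}^{\top}\Sigma^{-1}Z/\lambda)\,1_{K})=(I-vv^{\top})\Sigma^{-1/2}Z$ whose image under $U^{\top}$ is $(0,\tilde{Z}_{*}^{\top})^{\top}$, the identity $(Z-(1_{K}^{\top}\Sigma^{-1}Z/\lambda)\,1_{K})^{\top}\Sigma^{-1}(Z-(1_{K}^{\top}\Sigma^{-1}Z/\lambda)\,1_{K})=\|\tilde{Z}_{*}\|^{2}$. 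Substituting these into the displayed test statistic gives the claimed equivalence.

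Third, I argue UMP unbiasedness in the reduced coordinates. Let $\bar{W}:=\tilde{Z}_{1}/\sqrt{\lambda}\sim N(\mu,\sigma^{2}/\lambda)$ and note $(K-1)S^{2}/\sigma^{2}\sim\chi_{K-1}^{2}$ is independent of $\bar{W}$. The joint density of $(\bar{W},S^{2})$ forms a two-parameter exponential family with natural parameters $(\lambda\mu/\sigma^{2},-1/(2\sigma^{2}))$. The convexity assumption on the uncertainty set, combined with its not lying in a one-dimensional affine subspace, ensures the natural parameter space contains a two-dimensional open set, which is required to apply Theorem 4.4.1 of \cite{lehmann2005testing}. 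After reparametrizing so the null becomes $\theta=0$ for a single natural parameter with the remaining parameter treated as a nuisance, that theorem yields a conditional UMP unbiased test.

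The main obstacle is collapsing this abstract conditional test into the concrete unconditional $t$-test with symmetric threshold $t_{K-1;1-\alpha/2}$. The statistic $T:=\sqrt{\lambda}(\bar{W}-\mu_{0})/S$ is pivotal under $H_{0}$, distributed as $t_{K-1}$; by Basu's theorem it is independent of the complete sufficient statistic for $\sigma^{2}$ under $H_{0}$. Combined with the symmetry of the $t_{K-1}$ distribution about the origin, this independence converts the conditional two-sided rejection region (which a priori depends on two cutoff functions of the nuisance statistic) into the symmetric unconditional region $\{|T|>t_{K-1;1-\alpha/2}\}$. This argument mirrors and generalizes the classical one-sample $t$-test derivation and carries out the core content of the theorem.
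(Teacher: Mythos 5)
Your proof is correct, but it takes a genuinely different route from the paper's. You rotate coordinates via $\tilde{Z}=U^{\top}\Sigma^{-1/2}Z$ with first row proportional to $1_{K}^{\top}\Sigma^{-1/2}$, so the model becomes $\tilde{Z}_{1}\sim N(\sqrt{\lambda}\mu,\sigma^{2})$ independent of $\tilde{Z}_{*}\sim N(0,\sigma^{2}I_{K-1})$ --- exactly the canonical (Helmert-reduced) one-sample normal problem --- and your algebraic identities correctly show the theorem's rejection region is the classical two-sided $t$-test $\{|\tilde{Z}_{1}-\sqrt{\lambda}\mu_{0}|/S>t_{K-1;1-\alpha/2}\}$ in these coordinates; since the data map is a bijection and $\mu\mapsto\sqrt{\lambda/K}\,\mu$ is a bijective reparameterization sending null to null, UMP unbiasedness transfers back. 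The paper instead works intrinsically in the original coordinates: it exhibits the exponential family with $U(Z)=1_{K}^{\top}\Sigma^{-1}Z$, $T(Z)=Z^{\top}\Sigma^{-1}Z$, builds the intermediate statistic $V=U/\sqrt{T}$ (linear in $U$ for fixed $T$), applies Basu's corollary and the conditional-test theorem, then shows $|V^{\prime}|$ (the $t$-form) is monotone in $|V|$ and identifies the $t_{K-1}$ law via a sum-of-squares decomposition, Cochran's theorem, and a second Basu argument. Your reduction is more elementary and makes the $\chi^{2}_{K-1}$ factor and the numerator--denominator independence immediate (no Cochran needed), and it shows the paper's caveat about ``a simple linear mapping $X\to\Sigma^{-1/2}X$'' not working is only about that particular map --- composing with the right rotation does work; the paper's approach, in exchange, delivers the statistic directly in the original coordinates used downstream and avoids tracking the reparameterization. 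One spot where your sketch is thinnest: collapsing the conditional UMPU test into the unconditional symmetric region is not implied by Basu-independence of the $t$-statistic plus symmetry alone --- the standard argument needs a statistic linear in $U$ given the conditioning statistic (here $V=U/\sqrt{T}$, with the $t$-statistic a monotone transform of it) so that the unbiasedness side condition $\mathbb{E}[U\phi\mid T]=\alpha\mathbb{E}[U\mid T]$ carries over; since you explicitly appeal to the classical one-sample derivation, which supplies exactly this step, this is a presentational gap rather than a flaw. (Both you and the paper inherit the same slight looseness that convexity is assumed for $(\mu,\sigma^{2})$ while the exponential-family theorem is phrased for the natural parameters.)
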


\begin{proof}{Proof}
The density can be written as
\begin{align*}
p(z,\mu,\sigma) & =C(\mu,\sigma)\exp\left\{ -\frac{1}{2\sigma^{2}}\left(z-\mu1_{K}\right)^{\top}\Sigma^{-1}\left(z-\mu1_{K}\right)\right\} \\
 & =C^{\prime}(\mu,\sigma)\exp\left\{ -\frac{z^{\top}\Sigma^{-1}z}{2\sigma^{2}}\right\} \exp\left\{ \frac{\mu1_{K}^{\top}\Sigma^{-1}z}{\sigma^{2}}\right\} 
\end{align*}
where $C(\sigma),C^{\prime}(\sigma)$ does not depend on $z$. Without
loss of generality, we study the case where $\mu_{0}=0$ and the general
case can be reduced to this case by defining $Z^{\prime}=Z-\mu_{0}1_{K}$.
Note that the test $\mu=0$ is equivalent to $\theta=0$ where $\theta:=\frac{\mu}{\sigma^{2}}$.
Define the nuisance parameter as $\vartheta:=\frac{1}{2\sigma^{2}}$.
Then, the density can be put as exponential family in the form of
Theorem \ref{thm: UMP theorem} where $U(Z)=1_{K}^{\top}\Sigma^{-1}Z$
and $T(Z)=Z^{T}\Sigma^{-1}Z$.

Let
\[
V(Z)=h(U(Z),T(Z)):=\frac{U(Z)}{\sqrt{T(Z)}}.
\]
Noting that $V(Z)$ does not change when $Z$ is replaced with $Z/\sigma$
whose distribution does not depend on $\vartheta$ (or $\sigma$)
when $\theta=0$, we have that 
\[
V(Z)=\frac{U(Z)}{\sqrt{T(Z)}}=\frac{1_{K}^{\top}\Sigma^{-1}Z/\sigma}{\sqrt{(Z/\sigma)^{T}\Sigma^{-1}(Z/\sigma)}}
\]
where the distribution of the RHS does not depend on $\vartheta$
when $\theta=0$. Therefore, applying Theorem \ref{thm: Basu corrollary}
we get the independence of $V(Z)$ and $T(Z)$. We can also check
that $h(u,t)=\frac{1}{\sqrt{t}}u$ is linear in $u$. Therefore, from
Theorem \ref{thm: UMP theorem}, we have that the UMP unbiased test
can be given as rejection region $\left\{ z:\left|V(z)\right|>v_{\alpha}\right\} $
where $v_{\alpha}$ is the $1-\alpha$ quantile of $V(Z)$ under $\theta=0$
(note that the condition $\mathbb{E}_{\theta_{0}}\left[V(\mathbf{X})\phi(V(\mathbf{X}))\right]=\alpha \mathbb{E}_{\theta_{0}}V(\mathbf{X})$
in Theorem \ref{thm: UMP theorem} can be checked by symmetry of the
distribution of $V(Z)$ about 0 when $\theta=0$. Indeed, both sides
are 0.).

For the next step, we write $\left\{ z:\left|V(z)\right|>v_{\alpha}\right\} $
in the form of a $t$-test. Define
\[
V^{\prime}(Z):=\frac{1_{K}^{\top}\Sigma^{-1}Z/\lambda}{\sqrt{\frac{1}{\lambda(K-1)}}\sqrt{\left(Z-\frac{1_{K}^{\top}\Sigma^{-1}Z}{\lambda}1_{K}\right)^{\top}\Sigma^{-1}\left(Z-\frac{1_{K}^{\top}\Sigma^{-1}Z}{\lambda}1_{K}\right)}}=\frac{U(Z)/\lambda}{\sqrt{\frac{1}{\lambda(K-1)}}\sqrt{T(Z)-U(Z)^{2}/\lambda}}.
\]
Noting that 
\[
V^{\prime}(Z)=\frac{U(Z)/\lambda\sqrt{T(Z)}}{\sqrt{\frac{1}{\lambda(K-1)}}\sqrt{1-U(Z)^{2}/\lambda T(Z)}}=\frac{V(Z)/\lambda}{\sqrt{\frac{1}{\lambda(K-1)}}\sqrt{1-V(Z)^{2}/\lambda}},
\]
we have that $\left|V^{\prime}(Z)\right|$ is an increasing function
of $\left|V(Z)\right|$. Therefore, the region $\left\{ z:\left|V(z)\right|>v_{\alpha}\right\} $
is the same as the region $\left\{ z:\left|V^{\prime}(z)\right|>v_{\alpha}^{\prime}\right\} $
where $v_{\alpha}^{\prime}$ is the $1-\alpha$ quantile of $V^{\prime}(Z)$.
Comparing with the desired claim, it remains to show that $v_{\alpha}^{\prime}=t_{K-1;1-\alpha/2}$,
which holds if we can show that $V^{\prime}(Z)\sim t_{K-1}$. To see
this, we first note that the numerator of $V^{\prime}$ has distribution
$N(0,\sigma^{2}/\lambda)$. Therefore, it suffices to show that $\sqrt{\lambda(K-1)}$
times the denominator of $V^{\prime}$ has distribution $\sqrt{\chi_{K-1}^{2}}$
and is independent of the numerator. Denote $\tilde{Z}=\Sigma^{-1/2}Z/\sigma$
which has standard $K$-dimensional normal distribution. We have the
sum-of-squares decomposition 
\begin{equation}
\tilde{Z}^{\top}\tilde{Z}=\left(\tilde{Z}-\frac{\Sigma^{-1/2}1_{K}1_{K}^{\top}\Sigma^{-1/2}\tilde{Z}}{\lambda}\right)^{\top}\left(\tilde{Z}-\frac{\Sigma^{-1/2}1_{K}1_{K}^{\top}\Sigma^{-1/2}\tilde{Z}}{\lambda}\right)+\left(1_{K}^{\top}\Sigma^{-1/2}\tilde{Z}\right)^{2}/\lambda\label{sum of squares 1}
\end{equation}
 Therefore, by Cochran's theorem and Lemma 1, we have that the first
term on the RHS above has distribution $\chi_{K-1}^{2}$. This is
equivalent to the claim that $\sqrt{\lambda(K-1)}$ times the denominator
of $V^{\prime}$ has distribution $\sqrt{\chi_{K-1}^{2}}$. To see
the independence of the numerator and denominator of $V^{\prime}(Z)$,
we can apply Theorem \ref{thm: Basu corrollary} again with the role
of $\theta$ and $\vartheta$ switched using the fact that the denominator
of $V^{\prime}(Z)$ does not change when $Z$ is replaced with $Z-c1_{K}$
for any $c$. The details are deferred to Lemma \ref{lem: independence}.
\hfill{$\Box$}\end{proof}

\subsection{Optimality of Length Among CIs That Can Be Biased}\label{subsec: biased CI}
Consider the class of CI with form $[\psi(\hat{P}_n)+t_{\beta,K-1}g(\mathbf{Y}_n),\psi(\hat{P}_n)+t_{1-\alpha+\beta,K-1}g(\mathbf{Y}_n)]$ where $g$ is continuous and satisfies $g(c\mathbf{y})=cg(\mathbf{y})$ and $g(\mathbf{y}+c1_{K-1}) = g(\mathbf{y})$ for any constant $c\in\mathbb{R}$. Also, we focus on CIs whose batches has the form of Lemma \ref{lem: corr formula} (note that since $\psi(\hat{P}_n)$ is already used in the point estimator, here we assume that the first coordinate of $\mathbf{Y}_n$ is given by $\psi(\hat{P}_n)$). Denote this class as $\mathcal{H}_0$. Then, the conclusion in Theorem \ref{thm: global} continues to hold when we replace the unbiased class $\mathcal{H}_{GS}$ with $\mathcal{H}_0$ which also contains biased CIs. 

\begin{theorem}\label{thm: optimal biased}
    Under the same assumptions as in Theorem \ref{thm: global}, $CI_{GS}^{(\Sigma)}(\mathbf{Y}_n)$ constructed in Theorem \ref{thm: general batching} is asymptotically globally shortest unbiased in $\mathcal{H}_{0}$ if and only if there exists $w\in\mathbb{R}^K$ such that $1^\top w=1$ and $\sqrt{n}(w^\top\mathbf{Y}_n-\psi(P))\Rightarrow N(0,\sigma_0)$.
\end{theorem}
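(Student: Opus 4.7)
The plan is to reduce the comparison over $\mathcal{H}_0$ (which permits asymmetric, biased CIs) to a comparison over the unbiased class $\mathcal{H}_{GS}$, where Theorem \ref{thm: global} is directly applicable. The key reduction tool is the result of \cite{glynn1990simulation}: among CIs that share the same pivotal statistic, the version symmetric around the empirical estimator has the shortest asymptotic expected length. This lets us replace each $C = [\psi(\hat P_n)+t_{\beta,K-1}g(\mathbf{Y}_n),\psi(\hat P_n)+t_{1-\alpha+\beta,K-1}g(\mathbf{Y}_n)] \in \mathcal{H}_0$ by its symmetric counterpart $C_{\mathrm{sym}} = \psi(\hat P_n)\pm t_{K-1,1-\alpha/2}g(\mathbf{Y}_n)$ without increasing asymptotic length, thereby collapsing the comparison to symmetric CIs.

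For the ``if'' direction, having reduced to $C_{\mathrm{sym}}$, I would verify that $C_{\mathrm{sym}}\in\mathcal{H}_{GS}$: the endpoints are continuous, and the scaling and translation properties of $g$ together with the fact that $\psi(\hat P_n)$ is the first coordinate of $\mathbf{Y}_n$ yield the homogeneity conditions of Definition \ref{def: homogenous}. Asymptotic unbiasedness of $C_{\mathrm{sym}}$ follows from Lemma \ref{lem: unbiased example}: the joint CLT on $\mathbf{Y}_n$ gives a jointly Gaussian limit $(\sqrt{n}(\psi(\hat P_n)-\psi(P)),\sqrt{n}g(\mathbf{Y}_n))\Rightarrow(Z_1,g(Z))$, and the homogeneity/translation invariance of $g$, combined with the ancillarity-type argument used in the proof of Theorem \ref{thm: UMP normal} (via Basu's theorem applied to the Gaussian location-scale structure), delivers the needed independence between $Z_1$ and $g(Z)$. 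Theorem \ref{thm: global} then gives that $CI_{GS}^{(\Sigma)}(\mathbf{Y}_n)$ has asymptotic expected length no larger than $C_{\mathrm{sym}}$, hence no larger than $C$.

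For the ``only if'' direction, I would argue by contrapositive. If no $w\in\mathbb{R}^K$ satisfies $1^\top w=1$ and $\sqrt{n}(w^\top\mathbf{Y}_n-\psi(P))\Rightarrow N(0,\sigma_0^2)$, then by the variance-minimization argument inside the proof of Theorem \ref{thm: global}, the asymptotic expected length of $CI_{GS}^{(\Sigma)}(\mathbf{Y}_n)$ is proportional to $\sigma/\sqrt{\lambda}$ with $\sigma^2/\lambda>\sigma_0^2$ strictly. Now exhibit an element of $\mathcal{H}_0$ that beats this: take the Stage 1 scheme consisting of equal-size non-overlapping batches augmented with $\psi(\hat P_n)$ as the first coordinate, giving the standard batching CI $CI_B=\psi(\hat P_n)\pm t_{K-1,1-\alpha/2}S_B/\sqrt{K}$. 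Since $S_B/\sqrt{K}$ is homogeneous and translation-invariant, $CI_B\in\mathcal{H}_0$, and by Corollary \ref{cor: equivalence} its asymptotic expected length is proportional to $\sigma_0$. Hence $CI_B$ is asymptotically strictly shorter than $CI_{GS}^{(\Sigma)}(\mathbf{Y}_n)$, contradicting optimality.

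The main obstacle will be Step 2, namely rigorously establishing the asymptotic independence required by Lemma \ref{lem: unbiased example} for a general $g$ satisfying only scaling and translation invariance. Unlike the concrete $t$-batching statistics where the numerator and denominator decouple via an orthogonal decomposition of the Gaussian vector, a general homogeneous translation-invariant $g(Z)$ need not be independent of $Z_1$. The fix will likely be either to invoke a Basu-type argument on the extended exponential family $\{N(\mu 1_K,\sigma^2\Sigma)\}$ indexed by $(\mu,\sigma^2)$ — noting that translation invariance makes $g(Z)$ a function of the ancillary $Z-(1_K^\top\Sigma^{-1}Z/\lambda)1_K$ relative to the location parameter — or, alternatively, to bypass Lemma \ref{lem: unbiased example} and argue unbiasedness of $C_{\mathrm{sym}}$ directly by combining asymptotic exactness with the ``peaked-at-the-mean'' property of the Gaussian density translated into a CI-coverage statement as in the proof of Theorem \ref{thm: std batching}.
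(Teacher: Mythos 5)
Your overall strategy is the paper's own: reduce $\mathcal{H}_0$ to its symmetric subclass via the Glynn--Iglehart fact that $t_{\beta,K-1}-t_{1-\alpha+\beta,K-1}$ is minimized at $\beta=\alpha/2$, show this subclass lies inside $\mathcal{H}_{GS}$ (homogeneity is immediate from the assumptions on $g$; asymptotic unbiasedness via Lemma \ref{lem: unbiased example}), and then invoke Theorem \ref{thm: global}. The step you explicitly leave open --- the asymptotic independence of $\sqrt{n}(\psi(\hat P_n)-\psi(P))$ and $\sqrt{n}\,g(\mathbf{Y}_n)$ needed for Lemma \ref{lem: unbiased example} --- is exactly where the paper does its work, and it is closed not by an abstract ancillarity argument but by a direct covariance computation that uses the restriction built into $\mathcal{H}_0$: Stage 1 has the Lemma \ref{lem: corr formula} structure with the full-sample estimator as first coordinate, so by the influence-function representation and conditioning on $\hat P_n$ one gets $\mathrm{Cov}\bigl(\sqrt{n}\,\mathbb{E}_{\hat P_n}G,\ \sqrt{n}(\mathbb{E}_{\hat P_n^i}G-\mathbb{E}_{\hat P_n}G)\bigr)=0$; in the joint Gaussian limit $Z_1$ is then independent of $Z-Z_1 1_K$, hence of $g(Z)=g(Z-Z_1 1_K)$ by translation invariance. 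Your worry that a general translation-invariant homogeneous $g(Z)$ need not be independent of $Z_1$ is legitimate in general, and it is precisely the structural feature $\gamma_1=1$, $\beta_{1i}=\gamma_i$ (so $\mathrm{Cov}(Z_1,Z_i)=\mathrm{Var}(Z_1)$ for every $i$) that removes it; without confining $\mathcal{H}_0$ to such Stage 1 schemes the claim would fail.

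Two further remarks. Your first proposed fix (Basu, as in Lemma \ref{lem: independence}) can be made rigorous, but only after observing that under this covariance structure the first column of $V$ is $1_K$, hence $V^{-1}1_K=e_1$ and $Z_1$ coincides with the complete sufficient combination $1_K^{\top}V^{-1}Z/\lambda$; as written, Basu only gives independence of $g(Z)$ from the GLS combination, not from $Z_1$. Your second proposed fix (bypassing Lemma \ref{lem: unbiased example} by ``peakedness'' alone) does not repair anything: the peakedness argument in that lemma is applied conditionally on the width and therefore still requires the independence, so the issue is not avoided. Finally, your explicit contrapositive for the ``only if'' direction --- exhibiting a batching-type CI centered at $\psi(\hat P_n)$ inside $\mathcal{H}_0$ whose asymptotic half length is proportional to $\sigma_0$, which strictly beats $CI_{GS}^{(\Sigma)}$ when $\sigma^2/\lambda>\sigma_0^2$ --- is correct and usefully fills in a step the paper leaves implicit; just note that Corollary \ref{cor: equivalence} concerns $CI_B$ centered at $\bar\psi$, and recentering at $\psi(\hat P_n)$ changes the center by $o_p(n^{-1/2})$ while leaving the width statistic unchanged, so the asymptotic length claim carries over.
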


\begin{proof}{Proof}
    Note that $t_{\beta,K-1}-t_{1-\alpha+\beta,K-1}$ is minimized when $\beta=\alpha/2$ (this is a corollary of Proposition 4.3 of \cite{glynn1990simulation}, which is also not hard to be verified directly).  Therefore, letting $\mathcal{H}_0^\prime$ be the subset of $\mathcal{H}_0$ where $\beta=\alpha/2$, it sufficies to show that $\mathcal{H}_0^\prime\subset \mathcal{H}_{GS}$. From the homogeneity condition on $g$ as introduced above, we know that all CIs in $\mathcal{H}_0^\prime$ belong to the homogeneous class. To see that CIs in $\mathcal{H}_0^\prime$ are asymptotically unbiased, we can use Lemma \ref{lem: unbiased example}. Indeed, we have that $\sqrt{n}g(\mathbf{Y}_n)=\sqrt{n}\left(g(\mathbf{Y}_n-\psi(\hat{P}_n))\right)=g\left(\sqrt{n}\left(\mathbf{Y}_n-\psi(\hat{P}_n)\right)\right)$ from the assumption on $g$. Moreover, from Lemma \ref{lem: corr formula}, we have that the limiting distribution of $\sqrt{n}\left(\psi(\hat{P}_n)-\psi(P)\right)$ and $\sqrt{n}\left(\mathbf{Y}_n-\psi(\hat{P}_n)\right)$ is jointly normal. Moreover, from the assumption on influence function, the limiting distribution is the same as the limiting distribution of $\sqrt{n}\mathbb{E}_{\hat{P}_n}G(X)$ and $\left(\sqrt{n}\left(\mathbb{E}_{\hat{P}_{n}^i}G(X)-\mathbb{E}_{\hat{P}_n
    }G(X)\right)\right)_{1\leq i\leq K}$ where $\hat{P}_{n}^i$ is the empirical distribution of the data in the $i$-th batch. By conditioning on $\hat{P}_n$ first, it is not hard to see that $Cov\left(\sqrt{n}\mathbb{E}_{\hat{P}_n}G(X),\sqrt{n}\left(\mathbb{E}_{\hat{P}_{n}^i}G(X)-\mathbb{E}_{\hat{P}_n
    }G(X)\right)\right)=0$. Therefore, the limiting joint distribution of $\sqrt{n}\mathbb{E}_{\hat{P}_n}G(X)$ and $\left(\sqrt{n}\left(\mathbb{E}_{\hat{P}_{n}^i}G(X)-\mathbb{E}_{\hat{P}_n
    }G(X)\right)\right)$ are independent. Therefore, $\psi(\hat{P}_n)$ and $\sqrt{n}g(\mathbf{Y}_n)=g\left(\sqrt{n}\left(\mathbf{Y}_n-\psi(\hat{P}_n)\right)\right)$ are also asymptotically independent and we can get the unbiasedness from Lemma \ref{lem: unbiased example}. This concludes the proof. \hfill{$\Box$}
\end{proof}

\subsection{CLT for Weighted Cheap Bootstrap}\label{subsec: CLT weighted}
From \cite{Barbe1995}, we have a CLT for the weighted bootstrap:
\begin{theorem}[Adapted from Theorem 3.1 of \cite{Barbe1995}]\label{thm: CLT weighted CB}
    Suppose that the following conditions holds:
    \begin{enumerate}
        \item $\max \left(W_{i,n}-n^{-1}\right)^2/\sum_{i=1}^n \left(W_{i,n}-n^{-1}\right)^2 = o_p(1)$ as $n\to\infty$
        \item $\mathcal{W}_n$ and $W_{i,n}^{(b)}$'s are independent of $X_1,\dots,X_n$.
        \item $\psi(\cdot)$ is Fréchet differentiable at $P$ in the sense that $\psi(P+t(Q-P)) - \psi(P) = \mathbb{E}_Q IF + R(Q,P)$ where $\left\vert R(Q,P)\right\vert \leq d_{\mathcal{H}}(Q,P)$. Here, $d_{\mathcal{H}}(Q,P) = \sup_{h\in\mathcal{H}}\left\vert\mathbb{E}_{P-Q}h\right\vert$ and $\mathcal{H}$ is a function class that satistifies: \begin{enumerate}
            \item There exists an envelope $H$ such that $\mathbb{E}_P H^2(X)\leq\infty$
            \item For a constant $\gamma>0$ and any $x$, $H(x)\geq \gamma$
            \item There exists non-negative constants $A$ and $D$ depending only on $\mathcal{H}$ such that for any $\epsilon>0$  and distribution $Q$ satisfying 
 $E_Q H(X)< \infty$, \[N_1(\epsilon,Q,\mathcal{H})\leq \max\{A\mathbb{E}_Q (H(X)/\epsilon)^D,1\}. \]
 Here, $N_1(\epsilon,Q,\mathcal{H})$ is the $\epsilon$-covering number of $\mathcal{H}$ using $L_1$ distance under $Q$.
        \end{enumerate}
    \end{enumerate}

Then, \begin{equation}\label{eq: weighted CB}\sqrt{n}\left(\psi(P^{W,1}_n)-\psi(\hat{P}_n)\right)\vert X_1,\dots,X_n \Rightarrow N(0,\sigma_0^2 \sigma_W^2)\end{equation} in probability.
\end{theorem}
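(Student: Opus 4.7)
The plan is to reduce the claim to a conditional CLT for a weighted sum of i.i.d.\ random variables through the Fréchet linearization of $\psi$ at $P$. First, I would invoke the Fréchet differentiability assumption at $P$ to write
\[
\psi(P_n^{W,1}) - \psi(\hat P_n) = \mathbb{E}_{P_n^{W,1}-\hat P_n}\, IF + R(P_n^{W,1},P) - R(\hat P_n,P) = \sum_{i=1}^n\bigl(W_{i,n}-1/n\bigr)IF(X_i) + \Delta_n,
\]
where $\Delta_n$ collects the two remainder terms. Multiplying by $\sqrt{n}$ gives the linear pivot plus $\sqrt{n}\,\Delta_n$, so the whole argument splits into (i) controlling $\sqrt{n}\,\Delta_n$ and (ii) establishing a conditional CLT for the linear part.

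For step (i), the Fréchet assumption bounds each remainder by $d_{\mathcal{H}}(\cdot,P)$. Using the polynomial $L_1$ covering number bound on $\mathcal{H}$ together with the square-integrable envelope $H$, I would apply a uniform law of large numbers (in the spirit of Pollard/Giné--Zinn) to obtain $\sqrt{n}\,d_{\mathcal{H}}(\hat P_n,P) = O_p(1)$, hence $\sqrt{n}\,R(\hat P_n,P) = o_p(1)$ after observing $d_{\mathcal{H}}(\hat P_n,P) = o_p(1)$ makes $R$ of smaller order than $d_{\mathcal{H}}$ itself (since $|R|\le d_{\mathcal{H}}$ and $d_{\mathcal{H}}$ is of order $n^{-1/2}$, we need a refined bootstrap analog). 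For the weighted empirical measure $P_n^{W,1}$, the same entropy condition together with independence of $W_n$ and the data, and condition~1, yields $\sqrt{n}\,d_{\mathcal{H}}(P_n^{W,1},P) = O_p(1)$ by a symmetrization/exchangeable-weights argument, so $\sqrt{n}\,\Delta_n \stackrel{p}{\to} 0$.

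For step (ii), condition on $X_1,\dots,X_n$. By independence of the weights from the data and the $\mathbb{E}_P IF = 0$, $\mathbb{E}_P IF^2 = \sigma_0^2$ standardization, the conditional variance of $\sqrt{n}\sum_i(W_{i,n}-1/n)IF(X_i)$ is
\[
n\sum_{i=1}^n\bigl(W_{i,n}-1/n\bigr)^2 IF(X_i)^2.
\]
Exchangeability of the weights and the law of large numbers for $IF(X_i)^2$ give that this quantity equals $\bigl(n\sum_i(W_{i,n}-1/n)^2\bigr)\cdot\bigl(n^{-1}\sum_i IF(X_i)^2\bigr) + o_p(1)$, and using $\mathrm{Var}(nW_{1,n})\to\sigma_W^2$ together with exchangeability yields $n\sum_i(W_{i,n}-1/n)^2 \stackrel{p}{\to}\sigma_W^2$, while $n^{-1}\sum_i IF(X_i)^2\stackrel{a.s.}{\to}\sigma_0^2$. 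Condition~1 is precisely the Lindeberg-type ratio condition needed to apply a conditional Lindeberg CLT for weighted sums, delivering asymptotic normality with variance $\sigma_W^2\sigma_0^2$ in probability.

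The main obstacle I expect is step (i), namely controlling the bootstrap remainder $\sqrt{n}\,R(P_n^{W,1},P)$ uniformly enough to conclude $o_p(1)$ rather than just $O_p(1)$: the naive bound $|R|\le d_{\mathcal{H}}$ gives only $O_p(1)$ after multiplying by $\sqrt{n}$, so one needs either a sharper differentiability statement (e.g.\ $R(Q,P) = o(d_{\mathcal{H}}(Q,P))$ as $Q\to P$, which is the usual Fréchet sense) plus $d_{\mathcal{H}}(P_n^{W,1},P)\stackrel{p}{\to}0$, or a direct contiguity argument comparing the weighted empirical process to the ordinary empirical process via Barbe--Bertail's Hungarian-style coupling. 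Once the remainder is dispatched, the conditional CLT in step (ii) is routine.
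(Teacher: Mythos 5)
The paper never proves this statement: as the bracketed attribution indicates, it is imported wholesale (``adapted'') from Theorem 3.1 of \cite{Barbe1995}, and Appendix A.4 merely combines that cited result with Proposition 1 of \cite{lam2022cheap} to get the joint CLT. Your sketch is therefore, by necessity, an independent attempt, and its skeleton --- Fréchet linearization of $\psi$ at $P$ plus a conditional CLT for the exchangeably weighted linear term --- is indeed the standard route by which such weighted-bootstrap CLTs are proved. But as a proof it does not close the decisive step, and you say so yourself: under the hypothesis as literally stated, $\left|R(Q,P)\right|\le d_{\mathcal{H}}(Q,P)$, while the entropy and envelope conditions make $d_{\mathcal{H}}(\hat P_n,P)$ and $d_{\mathcal{H}}(P^{W,1}_n,P)$ of exact order $n^{-1/2}$; hence $\sqrt{n}\,\Delta_n=O_p(1)$, not $o_p(1)$, and the linear pivot does not dominate. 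Your proposed fixes --- a genuinely Fréchet remainder $R(Q,P)=o(d_{\mathcal{H}}(Q,P))$ with uniformity along the weighted empirical sequence, or Barbe--Bertail's coupling --- are precisely the content of the result being proved, so step (i) remains a genuine gap rather than a completed argument.

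Step (ii) also needs repair. Conditionally on $X_1,\dots,X_n$ the randomness sits in the weights, which are exchangeable and sum to one, hence dependent; a plain conditional Lindeberg CLT for independent summands does not apply, and the quantity $n\sum_{i=1}^n (W_{i,n}-n^{-1})^2 IF(X_i)^2$ you call the conditional variance is not one (it is itself random in the weights; the true conditional variance involves the weight covariances and agrees with your expression only to leading order). The correct tool is a Hájek-type CLT for exchangeably weighted sums (as in Præstgaard--Wellner or Barbe--Bertail), for which condition 1 of the theorem is exactly the right negligibility hypothesis. Finally, the convergence $n\sum_{i=1}^n (W_{i,n}-n^{-1})^2\stackrel{p}{\to}\sigma_W^2$ does not follow from $\mathrm{Var}(nW_{1,n})\to\sigma_W^2$ and exchangeability alone; in the cited literature it is an additional condition (satisfied by multinomial or Dirichlet weights) that would have to be assumed or verified, so your step (ii) as written also overstates what the stated hypotheses deliver.
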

Therefore, from Proposition 1 of \cite{lam2022cheap},  we have the following joint CLT
\begin{align*}
 & \left(\sqrt{n}(\psi(\hat{P}_n)-\psi(P)),\sqrt{n}(\psi(P_{n}^{W,1})-\psi(\hat{P}_n)),\dots,\sqrt{n}(\psi(P_{n}^{W,K-1})-\psi(\hat{P}_n))\right)^{\top}\\
\Rightarrow & N(0,\sigma^{2}diag(1,\sigma_{W}^{2},\sigma_{W}^{2},\dots,\sigma_{W}^{2})).
\end{align*}

\subsection{Useful Theorems}\label{sec:useful}

Following the notations in statistics, a hypothesis test can be defined
as a critical function $\phi:\mathcal{X}\to[0,1]$ where $\phi(x)$
is the probability of rejection when sample $x$ is observed. The
tests defined in terms of rejection region $\mathcal{R}\subset\mathcal{X}$
in our main body can be considered as a special type of hypothesis
test where $\phi(x)=1\{x\in\mathcal{R}\}$.
\begin{theorem}
\label{thm: UMP theorem}(Theorem 5.1.1. of Lehmann et al. (2005))
Consider testing $H_{0}:\theta=\theta_{0}$ against $H_{1}:\theta\neq\theta_{0}$
using data $\mathbf{X}$. Suppose that the distribution of $\mathbf{X}$
is parameterized by $\theta$ and nuisance parameter $\vartheta$
with density given by the exponential family
\begin{equation}
p(x,\theta,\vartheta)=C(\theta,\vartheta)\exp\{\theta U(x)+\sum_{i=1}^{k}\vartheta_{i}T_{i}(x)\},(\theta,\vartheta)\in\Omega\label{eq: exponential family}
\end{equation}
Assume that the parameter space $\Omega$ is convex and is not contained
in a linear space of dimension $<k+1$. Suppose that there exists
a continuous random variable $V(\mathbf{X})=h(U(\mathbf{X}),T(\mathbf{X}))$
that is independent of $T(\mathbf{X})$ when $\theta=\theta_{0}$.
Then, the following test is UMP unbiased provided that $h(u,t)=a(t)u+b(t)$
with $a(t)>0$:
\[
\phi(V(x))=\begin{cases}
1 & \text{if }V(x)>C_{1}\text{ or }V(x)<C_{2}\\
\gamma_{i} & \text{if }V(x)=C_{i},i=1,2\\
0 & \text{if }C_{1}<V(x)<C_{2}
\end{cases}
\]
where $C$'s and $\gamma$'s are determined by $\mathbb{E}_{\theta_{0}}\phi(V(\mathbf{X}))=\alpha$
and $\mathbb{E}_{\theta_{0}}\left[V(\mathbf{X})\phi(V(\mathbf{X}))\right]=\alpha \mathbb{E}_{\theta_{0}}V(\mathbf{X})$.
\end{theorem}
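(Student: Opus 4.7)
The plan is to apply the standard theory of UMP unbiased tests in multiparameter exponential families, reducing the problem to a family of one-parameter conditional tests given the sufficient statistic $T(\mathbf{X})$ for the nuisance parameter, and then collapsing those conditional tests to a single unconditional test based on $V(\mathbf{X})$ by exploiting the stated independence and linearity assumptions.

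First I would translate unbiasedness into a conditional size condition and a conditional side condition. Any level-$\alpha$ unbiased test $\phi$ must be $\alpha$-similar on the boundary $\{\theta = \theta_0\}$, i.e., $\mathbb{E}_{\theta_0, \vartheta}[\phi(\mathbf{X})] = \alpha$ for every $\vartheta$. The convexity and full-dimensionality hypotheses on $\Omega$ ensure that, restricted to this boundary slice, the family is a full-rank exponential family in $\vartheta$ with $T(\mathbf{X})$ complete and sufficient. Completeness upgrades similarity to $\mathbb{E}_{\theta_0}[\phi(\mathbf{X}) \mid T = t] = \alpha$ for almost every $t$. Differentiating the power function $\beta(\theta) = \mathbb{E}_\theta[\phi(\mathbf{X})]$ at $\theta_0$ (justified by standard exponential-family regularity) and applying completeness once more gives the conditional unbiasedness equation $\mathbb{E}_{\theta_0}[U(\mathbf{X}) \phi(\mathbf{X}) \mid T = t] = \alpha \, \mathbb{E}_{\theta_0}[U(\mathbf{X}) \mid T = t]$ for almost every $t$, encoding that $\beta'(\theta_0) = 0$.

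Second, given $T = t$ the conditional law of $U$ is a one-parameter exponential family in $\theta$. By the Neyman--Pearson argument for two-sided testing with a linear side constraint, the conditionally UMP unbiased test rejects when $U$ lies outside an interval determined by $t$, with randomization at its endpoints, where the thresholds solve the two conditional equations above. Since $h(u,t) = a(t) u + b(t)$ is strictly increasing in $u$ for each fixed $t$, this is equivalent to a test that rejects when $V = h(U, T)$ lies outside a corresponding interval $[D_2(t), D_1(t)]$. I would then invoke the assumption that $V$ and $T$ are independent under $\theta_0$: this makes the conditional law of $V$ given $T = t$ equal to its marginal law for every $t$, and together with the affine form of $h$ it reduces the two conditional equations to the two unconditional equations $\mathbb{E}_{\theta_0}[\phi(V)] = \alpha$ and $\mathbb{E}_{\theta_0}[V \phi(V)] = \alpha \, \mathbb{E}_{\theta_0}[V]$, whose unique randomized solution gives $t$-free constants $C_1, C_2, \gamma_1, \gamma_2$.

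The main obstacle I anticipate is this last collapse from a $t$-dependent conditional test to a $t$-free unconditional one. Independence of $V$ and $T$ under $\theta_0$ alone is not enough; the affine-in-$U$ form of $h$ is essential so that the conditional unbiasedness equation, once rewritten in terms of $V$, no longer references $T$. Verifying carefully that under both assumptions the two conditional equations are simultaneously $t$-free, and that the resulting unconditional pair has a unique randomized solution, is the technical crux of the argument; the remainder is the standard exponential-family conditioning/Neyman--Pearson toolkit.
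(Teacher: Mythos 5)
This statement is not proved in the paper at all: it is imported verbatim (Appendix A.5) as Theorem 5.1.1 of \cite{lehmann2005testing}, and your outline correctly reconstructs that textbook argument—similarity on the boundary plus completeness of $T$ giving Neyman structure, the generalized Neyman--Pearson lemma yielding the conditional two-sided tests with the side condition $\mathbb{E}_{\theta_0}[U\phi\mid T=t]=\alpha\,\mathbb{E}_{\theta_0}[U\mid T=t]$, and the collapse to $t$-free constants via the independence of $V$ and $T$ under $\theta_0$ combined with the affine form $h(u,t)=a(t)u+b(t)$, $a(t)>0$. Your identification of that last collapse as the step where both hypotheses are genuinely needed is exactly right, so the proposal is correct and takes essentially the same route as the cited source.
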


\begin{theorem}
\label{thm: Basu corrollary}(Corollary 5.1.1. of Lehmann et al. (2005))
Let $\mathcal{P}$ be the exponential family obtained from (\ref{eq: exponential family})
by letting $\theta$ have some fixed value. Then a statistic $V$
is independent of $T$ for all $\vartheta$ provided that the distribution
of $V$ does not depend on $\vartheta$.
\end{theorem}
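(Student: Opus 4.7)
The plan is to derive this as a direct application of Basu's theorem: a complete sufficient statistic is independent of every ancillary statistic. The task therefore reduces to verifying (i) completeness of $T$ in the subfamily indexed by $\vartheta$, and (ii) ancillarity of $V$ in that subfamily.

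First, I would fix $\theta$ at a value $\theta_0$ and rewrite the densities in \eqref{eq: exponential family} as
\[
p(x,\theta_0,\vartheta)\ \propto\ \exp\!\left\{\sum_{i=1}^k \vartheta_i T_i(x)\right\},
\]
absorbing the factor $\exp\{\theta_0 U(x)\}$ into a dominating measure and the $\vartheta$-dependent normalization into a constant. This exhibits the subfamily as a $k$-parameter exponential family in $\vartheta$ with canonical sufficient statistic $T=(T_1,\dots,T_k)$, indexed by $\vartheta$ ranging over the slice $\Omega_{\theta_0}:=\{\vartheta:(\theta_0,\vartheta)\in\Omega\}$.

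Next, I would invoke the classical completeness theorem for full-rank exponential families (e.g., Theorem~4.3.1 of Lehmann--Romano): if the natural parameter space contains a non-empty open subset of $\mathbb{R}^k$, then the canonical sufficient statistic is complete. The geometric hypothesis inherited from Theorem~\ref{thm: UMP theorem} --- that $\Omega\subset\mathbb{R}^{k+1}$ is convex and is not contained in any linear subspace of dimension less than $k+1$ --- implies that $\Omega$ has non-empty interior in $\mathbb{R}^{k+1}$; slicing, $\Omega_{\theta_0}$ then has non-empty interior in $\mathbb{R}^k$ for every $\theta_0$ in the projection of this interior onto its first coordinate, giving the full-rank condition needed for completeness of $T$ in the subfamily.

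The ancillarity of $V$ is immediate from the hypothesis that its distribution does not depend on $\vartheta$, and Basu's theorem then yields $V\perp T$ under every $\vartheta$, as claimed. The one step requiring care is the translation between the joint full-rank condition on $\Omega$ and the slicewise full-rank condition on $\Omega_{\theta_0}$ needed for completeness; once this bookkeeping is handled, the remainder is a textbook invocation of Basu's theorem.
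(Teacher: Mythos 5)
The paper does not actually prove this statement: it is imported verbatim as Corollary 5.1.1 of \cite{lehmann2005testing} and used as a black box, so there is no internal proof to compare against. Your argument is correct and is essentially the standard derivation behind that textbook corollary: with $\theta$ frozen at $\theta_0$, the factor $\exp\{\theta_0 U(x)\}$ is absorbed into the dominating measure, leaving a $k$-parameter exponential family in $\vartheta$ for which $T=(T_1,\dots,T_k)$ is sufficient (factorization) and complete once the natural parameter set contains a $k$-dimensional open set; ancillarity of $V$ is the hypothesis, and Basu's theorem gives $V \perp T$ under every $\vartheta$ in the slice. Your handling of the only delicate point is also right: convexity of $\Omega$ together with it not lying in a lower-dimensional (affine) subspace gives $\Omega$ nonempty interior in $\mathbb{R}^{k+1}$, and for any $\theta_0$ lying under that interior the slice $\Omega_{\theta_0}=\{\vartheta:(\theta_0,\vartheta)\in\Omega\}$ contains a $k$-dimensional open ball, which is exactly the full-rank condition needed for completeness; for $\theta_0$ on the boundary of the projection the slice could be degenerate, but in the paper's applications (the Gaussian family with $\mu\in\mathbb{R}$, $\sigma^2>0$, $\theta$ fixed at the tested value as in Theorem~\ref{thm: UMP normal} and Lemma~\ref{lem: independence}) the slice is open in $\mathbb{R}^k$, so the conclusion holds where it is invoked. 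In short: correct, and a useful addition since it makes explicit the completeness-plus-ancillarity mechanism that the paper only cites.
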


\begin{theorem}
(Cochran's theorem)Suppose that $U_{1},\dots,U_{K}$ are i.i.d. standard
normals and denote $U=(U_{1},\dots,U_{K})$. Suppose that we have
the sum-of-squares decomposition $U^{\top}U=\sum_{i=1}^{m}U^{\top}B_{i}U$
where $B_{i}$ is a square matrix with rank $r_{i}$. Then $U^{\top}B_{i}U$
follows a chi-square distribution with $r_{i}$ degrees of freedom.
\end{theorem}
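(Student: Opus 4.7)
The approach is to exploit the orthogonal invariance of the standard multivariate Gaussian and reduce each quadratic form $U^\top B_i U$ to a sum of $r_i$ squared coordinates of a rotated standard normal vector, which is tautologically $\chi^2_{r_i}$. The structural work needed before the rotation argument is to show that the $B_i$'s can be taken symmetric idempotent with mutually orthogonal ranges; once this is established, the conclusion is essentially a change of basis.

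First, I would symmetrize by replacing each $B_i$ with $(B_i+B_i^\top)/2$, which leaves $U^\top B_i U$ unchanged. Next, since $U^\top U=\sum_i U^\top B_i U$ holds as an identity of random variables and both sides are continuous quadratic forms of $U$, by polarization I obtain the matrix identity $\sum_i B_i=I$. Combining $\operatorname{rank}(I)=K\le\sum_i\operatorname{rank}(B_i)=\sum_i r_i$ with the implicit rank hypothesis $\sum_i r_i=K$ (standard in Cochran's theorem and needed to make the statement meaningful), I claim each $B_i$ is an orthogonal projection and $B_iB_j=0$ for $i\ne j$. This is a classical linear algebra fact: if $\sum_i B_i=I$ and the ranks of the symmetric PSD summands add up exactly to the ambient dimension, then $\mathbb{R}^K=\bigoplus_i\operatorname{range}(B_i)$, the sum must be a direct orthogonal sum (else the dimensions would overflow), and each $B_i$ acts as the identity on its own range while killing all the others, which means $B_i^2=B_i$ and $B_iB_j=0$.

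Armed with this structure, I would pick an orthonormal basis of $\mathbb{R}^K$ adapted to the decomposition $\mathbb{R}^K=\bigoplus_i\operatorname{range}(B_i)$, so that each $B_i$ is the coordinate projection onto a block $S_i\subset\{1,\dots,K\}$ with $|S_i|=r_i$ and the $S_i$'s are disjoint. Let $Q$ be the orthogonal matrix whose columns form this basis, and set $\tilde U=Q^\top U$. By rotational invariance of the standard Gaussian, $\tilde U$ is again i.i.d.\ standard normal, and in the rotated coordinates one has $U^\top B_i U=\sum_{j\in S_i}\tilde U_j^2$, which is a sum of $r_i$ independent squared standard normals and hence $\chi^2_{r_i}$. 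As a bonus, the disjointness of the blocks $S_i$ yields the mutual independence of the family $\{U^\top B_i U\}_{i=1}^m$, which is the form of Cochran that is actually invoked in the proof of Theorem \ref{thm: UMP normal}.

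The main obstacle is the linear-algebraic step: converting the two ingredients $\sum_i B_i=I$ and $\sum_i r_i=K$ into the conclusion that the $B_i$'s are orthogonal projections onto a direct-sum decomposition of $\mathbb{R}^K$. Once that is in hand, the rotational-invariance argument is essentially automatic. A clean way to handle it is to note that $\operatorname{range}(B_i)\cap\sum_{j\ne i}\operatorname{range}(B_j)=\{0\}$ by the dimension count, so the sum is direct; then $\sum_i B_i=I$ forces $B_i$ to act as the identity on $\operatorname{range}(B_i)$ and to vanish on each $\operatorname{range}(B_j)$ for $j\ne i$, which is precisely the definition of an orthogonal projection with mutually orthogonal images.
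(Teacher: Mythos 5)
The paper does not actually prove this statement: Cochran's theorem appears in Appendix A.5 (``Useful Theorems'') as a quoted classical result, invoked once in the proof of Theorem \ref{thm: UMP normal} for the decomposition \eqref{sum of squares 1}. So there is no paper proof to compare against; what you have written is the standard textbook argument, and its core is sound. You correctly observed that the hypothesis $\sum_i r_i=K$ is missing from the paper's statement and is genuinely needed (with $K=1$, $B_1=B_2=\tfrac12$ the conclusion fails), and your linear-algebra step is right: from $\sum_i B_i=I$ and the dimension count, $\mathbb{R}^K=\bigoplus_i\operatorname{range}(B_i)$ is a direct sum, each $B_i$ acts as the identity on its own range and annihilates the others, hence is idempotent, and symmetry then makes it an orthogonal projection with mutually orthogonal ranges; the rotation argument and the independence by-product (which is what Theorem \ref{thm: UMP normal} actually uses) follow immediately.

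The one step that does not survive scrutiny as written is the opening symmetrization. Replacing $B_i$ by $(B_i+B_i^\top)/2$ preserves the quadratic form but \emph{not} the rank, and the rank can go either way; e.g.\ $\left(\begin{smallmatrix}1&1\\0&0\end{smallmatrix}\right)$ has rank $1$ while its symmetric part has rank $2$. This is not merely cosmetic: with $B_1=\left(\begin{smallmatrix}1&1\\0&0\end{smallmatrix}\right)$ and $B_2=\left(\begin{smallmatrix}0&0\\-1&1\end{smallmatrix}\right)$ one has $x^\top B_1x+x^\top B_2x=x_1^2+x_2^2$ and $r_1=r_2=1$ with $r_1+r_2=K=2$, yet $U^\top B_1U=U_1^2+U_1U_2$ is not $\chi^2_1$, so the theorem is false for non-symmetric $B_i$ even with your added rank-sum hypothesis. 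The fix is simply to assume the $B_i$ symmetric (as in standard statements of Cochran's theorem), in which case your argument goes through verbatim; this also covers the paper's only application, where $B_1=I-\Sigma^{-1/2}1_K1_K^\top\Sigma^{-1/2}/\lambda$ and $B_2=\Sigma^{-1/2}1_K1_K^\top\Sigma^{-1/2}/\lambda$ are symmetric idempotents with ranks $K-1$ and $1$. A last small nit: you describe the summands as ``symmetric PSD'' before positive semidefiniteness has been established; your argument never uses PSD, so drop the word rather than justify it.
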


\subsection{Technical Lemmas}\label{sec:technical}
\begin{lemma}
\label{lem: rank}Let $V$ and $\lambda$ be defined as in Theorem
\ref{thm: UMP normal}. The rank of matrix $I-\frac{V^{-1/2}1_{K}1_{K}^{\top}V^{-1/2}}{\lambda}$
is $K-1$.
\end{lemma}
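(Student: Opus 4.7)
The plan is to recognize $M := \frac{V^{-1/2}1_{K}1_{K}^{\top}V^{-1/2}}{\lambda}$ as an orthogonal projection of rank one, so that $I - M$ is the complementary projection of rank $K-1$.

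First, I would set $v := V^{-1/2}1_{K}$ and observe that $v^{\top}v = 1_{K}^{\top}V^{-1}1_{K} = \lambda$, so that $M = vv^{\top}/(v^{\top}v)$. Since $V$ is positive definite (it is a covariance matrix, assumed invertible in the setting of Theorem~\ref{thm: UMP normal}), $V^{-1/2}$ is also positive definite, hence $v \neq 0$ and $\lambda > 0$, so the expression is well-defined.

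Next, I would verify directly that $M$ is symmetric and idempotent: symmetry is immediate, and $M^{2} = \frac{v(v^{\top}v)v^{\top}}{\lambda^{2}} = \frac{\lambda vv^{\top}}{\lambda^{2}} = M$. Hence $M$ is the orthogonal projection onto $\mathrm{span}(v)$, which has dimension $1$, so $\mathrm{rank}(M) = 1$. Consequently $I - M$ is the orthogonal projection onto the orthogonal complement of $\mathrm{span}(v)$, which has dimension $K - 1$, giving $\mathrm{rank}(I - M) = K - 1$.

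There is no real obstacle here; the only thing to be slightly careful about is ensuring $v \neq 0$ and $\lambda \neq 0$, which follow from positive definiteness of $V$. Alternatively, one can conclude via eigenvalues: $M$ has eigenvalue $1$ with multiplicity one (eigenvector $v$) and $0$ with multiplicity $K-1$, so $I - M$ has eigenvalue $0$ with multiplicity one and $1$ with multiplicity $K-1$, yielding rank $K-1$.
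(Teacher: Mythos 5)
Your proof is correct, and it takes a genuinely different (and arguably cleaner) route than the paper. The paper argues by sandwiching the rank: it uses the subadditivity property $\mathrm{rank}(A+B)\leq \mathrm{rank}(A)+\mathrm{rank}(B)$ to get $\mathrm{rank}(I-M)\geq K-1$ for $M=\frac{V^{-1/2}1_K1_K^\top V^{-1/2}}{\lambda}$, and then exhibits a nonzero vector in the kernel of $I-M$ to force $\mathrm{rank}(I-M)\leq K-1$. You instead observe that $M=vv^\top/(v^\top v)$ with $v=V^{-1/2}1_K$ and $v^\top v=\lambda>0$, hence $M$ is a symmetric idempotent, so $I-M$ is the orthogonal projection onto $\mathrm{span}(v)^{\perp}$ and has rank $K-1$ by counting eigenvalues. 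Your route buys a bit more: the idempotence/projection structure you establish is exactly what makes the subsequent Cochran's-theorem step in Theorem \ref{thm: UMP normal} transparent, and your identification of the null space as $\mathrm{span}(V^{-1/2}1_K)$ is the precise statement — the paper's write-up asserts $\bigl(I-M\bigr)1_K=0$, which holds only with $1_K$ replaced by $V^{-1/2}1_K$ (and its displayed inequality ``$\geq \mathrm{rank}(I)\leq K-1$'' is evidently a typo for an upper bound), so your version also quietly repairs those slips. The paper's approach, for its part, needs nothing beyond rank subadditivity and one null vector, so it is marginally more economical in prerequisites.
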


\begin{proof}{Proof}
From the property that $rank(A)+rank(B)\geq rank(A+B)$, we have that
$rank\left(I-\frac{V^{-1/2}1_{K}1_{K}^{\top}V^{-1/2}}{\lambda}\right)\geq rank(I)-rank(\frac{V^{-1/2}1_{K}1_{K}^{\top}V^{-1/2}}{\lambda})=K-1$.

On the other hand, since $\left(I-\frac{V^{-1/2}1_{K}1_{K}^{\top}V^{-1/2}}{\lambda}\right)1_{K}=0$,
we have that $rank\left(I-\frac{V^{-1/2}1_{K}1_{K}^{\top}V^{-1/2}}{\lambda}\right)\geq rank(I)\leq K-1$.
Combining this with the lower bound, we get the desired result.
\hfill{$\Box$}\end{proof}
\begin{lemma}
\label{lem: independence}Let $Z\sim N(\mu1_{K},\sigma^{2}\Sigma)$.
We have that $1_{K}^{\top}\Sigma^{-1}Z$ and $f(Z)$ are independent
for any function $f$ such that $f(z)=f(z-c1_{K}),\forall c\in\mathbb{R}$.
\end{lemma}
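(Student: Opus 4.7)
The plan is to decompose $Z$ into two jointly Gaussian pieces, one that carries the quantity $1_K^\top \Sigma^{-1} Z$ and one that is invariant to shifts along $1_K$, and to exploit the fact that for Gaussians uncorrelatedness implies independence. Write $\lambda = 1_K^\top \Sigma^{-1} 1_K$, set $U := 1_K^\top \Sigma^{-1} Z$ (a scalar), and define $Y := Z - \frac{1_K}{\lambda} U$. Then $(U, Y)$ is jointly Gaussian since it is a linear function of $Z$.

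The first step is to verify that $U$ and $Y$ are uncorrelated. One computes $\mathrm{Var}(U) = 1_K^\top \Sigma^{-1} (\sigma^2 \Sigma) \Sigma^{-1} 1_K = \sigma^2 \lambda$ and $\mathrm{Cov}(Z, U) = \sigma^2 \Sigma \Sigma^{-1} 1_K = \sigma^2 1_K$, so
\[
\mathrm{Cov}(Y, U) \;=\; \mathrm{Cov}(Z, U) - \frac{1_K}{\lambda}\mathrm{Var}(U) \;=\; \sigma^2 1_K - \frac{1_K}{\lambda}\cdot \sigma^2\lambda \;=\; 0.
\]
Joint Gaussianity then upgrades this to (full) independence of $Y$ and $U$.

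The second step is to use the translation invariance of $f$ pathwise: by assumption $f(z) = f(z - c 1_K)$ for every $c \in \mathbb{R}$, so in particular, evaluating on a realization of $Z$ with the choice $c = U/\lambda$, we get
\[
f(Z) \;=\; f\!\left(Z - \tfrac{U}{\lambda} 1_K\right) \;=\; f(Y).
\]
Hence $f(Z)$ is a measurable function of $Y$ alone. Since $Y \perp U$, it follows that $f(Z) \perp U = 1_K^\top \Sigma^{-1} Z$, which is the desired conclusion. There is no real obstacle here; the only mild subtlety is the pathwise substitution of a random $c$ into the invariance identity, which is justified because the identity holds for every constant $c$ and hence for the random $c = U(\omega)/\lambda$ evaluated at each $\omega$.
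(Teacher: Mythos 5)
Your proof is correct, but it takes a genuinely different route from the paper's. The paper proves this lemma by casting the density of $Z$ (with $\sigma^{2}$ fixed) as a one-parameter exponential family in $\vartheta=\mu/\sigma^{2}$ with complete sufficient statistic $T(Z)=1_{K}^{\top}\Sigma^{-1}Z$, observing that the translation invariance $f(Z)=f(Z-\mu 1_{K})\stackrel{d}{=}f(Z')$ with $Z'\sim N(0_K,\sigma^{2}\Sigma)$ makes $f(Z)$ ancillary for $\mu$, and then invoking Basu's theorem (the paper's Theorem \ref{thm: Basu corrollary}). You instead argue directly: the decomposition $Z = \frac{1_K}{\lambda}U + Y$ with $U=1_{K}^{\top}\Sigma^{-1}Z$, $Y=Z-\frac{1_K}{\lambda}U$ is jointly Gaussian, the covariance computation $\mathrm{Cov}(Y,U)=\sigma^{2}1_K-\frac{1_K}{\lambda}\sigma^{2}\lambda=0$ gives independence of $Y$ and $U$, and the pathwise substitution $c=U(\omega)/\lambda$ in the invariance identity (legitimate, since the identity holds for every constant $c$ and every argument $z$) shows $f(Z)=f(Y)$ is a function of $Y$ alone. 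Your route is more elementary and self-contained --- it is essentially the classical generalized-least-squares argument that the GLS mean estimator is independent of the residuals --- and it avoids completeness and Basu entirely; the paper's route is less computational and keeps the proof uniform with the neighboring argument in Theorem \ref{thm: UMP normal}, where Basu's corollary is also the tool of choice and where a direct decomposition would be less convenient. The only cosmetic point in your write-up is that $f$ must of course be assumed measurable for the conclusion to be meaningful, which the lemma leaves implicit.
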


\begin{proof}{Proof}
The density of $Z$ can be write as
\[
p(z,\mu,\sigma)=C^{\prime}(\sigma)\exp\left\{ -\frac{z^{\top}\Sigma^{-1}z}{2\sigma^{2}}\right\} \exp\left\{ \frac{\mu1_{K}^{\top}\Sigma^{-1}z}{\sigma^{2}}\right\} 
\]
where $C^{\prime}(\sigma)$ does not depend on $z$. Let $\theta=-\frac{1}{2\sigma^{2}}$
and $\vartheta=\frac{\mu}{\sigma^{2}}$, then we can write the above
density in the form of \ref{eq: exponential family} with $U(Z)=Z^{\top}\Sigma^{-1}Z$
and $T(Z)=1_{K}^{\top}\Sigma^{-1}Z$. For any fixed $\theta$ (and
thus $\sigma^{2}$), the distribution of $f(Z)$ does not depend on
$\vartheta$ (or $\mu$) because $f(Z)=f(Z-\mu1_{K})\stackrel{d}{=}f(Z^{\prime})$
where $Z^{\prime}\sim N(0_{K},\sigma^{2}\Sigma)$. Therefore, applying
Theorem \ref{thm: Basu corrollary}, we get the independence.
\hfill{$\Box$}\end{proof}
\subsection{Derivation of \eqref{eq: var sj}}\label{app: dev var}
\begin{align*}
 & \frac{1}{\lambda}\left(\mathbf{Y}_{n,OB}-\frac{1_{K}^{\top}V^{-1}\mathbf{Y}_{n,OB}}{\lambda}1_{K}\right)^{\top}V^{-1}\left(\mathbf{Y}_{n,OB}-\frac{1_{K}^{\top}V^{-1}\mathbf{Y}_{n,OB}}{\lambda}1_{K}\right)\\
= & \frac{(K-1)^2}{K}\left(\mathbf{Y}_{n,OB}-\frac{1_{K}1_{K}^{\top}\mathbf{Y}_{n,OB}}{K}\right)^{\top}\left(I-\frac{K-2}{K-1}1_{K}1_{K}^{\top}\right)\left(\mathbf{Y}_{n,OB}-\frac{1_{K}1_{K}^{\top}\mathbf{Y}_{n,OB}}{K}\right)\\
= & \frac{(K-1)^2}{K}\mathbf{Y}_{n,OB}^{\top}\left(I-\frac{1}{K}1_{K}1_{K}^{\top}\right)\left(I-\frac{K-2}{K-1}1_{K}1_{K}^{\top}\right)\left(I-\frac{1}{K}1_{K}1_{K}^{\top}\right)\mathbf{Y}_{n,OB}\\
= & \frac{(K-1)^2}{K}\left(\left(I-\frac{1}{K}1_{K}1_{K}^{\top}\right)\mathbf{Y}_{n,OB}\right)^{\top}\left(I-\frac{1}{K}1_{K}1_{K}^{\top}\right)\mathbf{Y}_{n,OB}\\
= & \frac{1}{K}\left(\left(1_{K}1_{K}^{T}-(K-1)I-\frac{1}{K}1_{K}1_{K}^{\top}\right)\mathbf{Y}_{n,OB}\right)^{\top}\left(1_{K}1_{K}^{T}-(K-1)I-\frac{1}{K}1_{K}1_{K}^{\top}\right)\mathbf{Y}_{n,OB}
\end{align*}
Here, in the second equality, we used the fact that $\left(I-\frac{1}{K}1_{K}1_{K}^{\top}\right)1_K=0$.

\end{document}